\documentclass[conference]{IEEEtran}
\IEEEoverridecommandlockouts

\usepackage{cite}
\usepackage{amsmath,amssymb,amsfonts}
\usepackage{algorithmic}
\usepackage{graphicx}
\usepackage{latexsym}
\usepackage{xcolor}
\usepackage{comment}
\usepackage{cite}
\usepackage{url}
\usepackage[T3,T1]{fontenc}
\DeclareSymbolFont{tipa}{T3}{cmr}{m}{n}
\DeclareMathAccent{\invbreve}{\mathalpha}{tipa}{16}
\usepackage{newtxtext}
\usepackage[varg]{newtxmath}
\usepackage{moreverb}
\usepackage{upgreek}
\usepackage{mathrsfs}
\usepackage{mathtools}
\usepackage{bm}
\def\BibTeX{{\rm B\kern-.05em{\sc i\kern-.025em b}\kern-.08em
    T\kern-.1667em\lower.7ex\hbox{E}\kern-.125emX}}

\newtheorem{theorem}{Theorem}
\newtheorem{lemma}{Lemma}
\newtheorem{proposition}{Proposition}
\newtheorem{property}{Property}
\newtheorem{remark}{Remark}
\newtheorem{definition}{Definition}
\newtheorem{corollary}{Corollary}

\allowdisplaybreaks

\newcommand{\phin}{\phi^{(n)}}
\newcommand{\phinX}{\phi^{(n)}(\bm{X})}
\newcommand{\phinx}{\phi^{(n)}(\bm{x})}

\newcommand{\psin}{\psi^{(n)}}

\newcommand{\Phin}{\Phi^{(n)}}
\newcommand{\PhiKn}{\Phi_{\bm{K}}^{(n)}}
\newcommand{\PhiKnX}{\Phi_{\bm{K}}^{(n)}(\bm{X})}
\newcommand{\PhiXnK}{\Phi_{\bm{X}}^{(n)}(\bm{K})}
\newcommand{\Phiknx}{\Phi_{\bm{k}}^{(n)}(\bm{x})}

\newcommand{\Psin}{\Psi^{(n)}}
\newcommand{\PsinK}{\Psi_{\bm{K}}^{(n)}}

\newcommand{\Dn}{\mathcal{D}^{(n)}}

\newcommand{\MI}{\Delta_{\mathrm{MI}}^{(n)}}

\newcommand{\hugel}{{\arraycolsep 0mm
                    \left\{\ba{l}{\,}\\{\,}\ea\right.\!\!}}

\newcommand{\huger}{{\arraycolsep 0mm
                    \left.\ba{l}{\,}\\{\,}\ea\!\!\right\}}}

\newcommand{\hugebl}{{\arraycolsep 0mm
                    \left[\ba{l}{\,}\\{\,}\ea\right.\!\!}}

\newcommand{\hugebr}{{\arraycolsep 0mm
                    \left.\ba{l}{\,}\\{\,}\ea\!\!\right]}}

\newcommand{\defeq}{:=}

\newcommand{\beq}{\begin{equation}}
\newcommand{\eeq}{\end{equation}}
\newcommand{\beqa}{\begin{eqnarray}}
\newcommand{\eeqa}{\end{eqnarray}}
\newcommand{\beqno}{\begin{eqnarray*}}
\newcommand{\eeqno}{\end{eqnarray*}}
\newcommand{\ba}{\begin{array}}
\newcommand{\ea}{\end{array}}

\newcommand{\MEq}[1]{\stackrel{
{\rm (#1)}}{=}}

\newcommand{\MLeq}[1]{\stackrel{
{\rm (#1)}}{\leq}}

\newcommand{\MGeq}[1]{\stackrel{
{\rm (#1)}}{\geq}}

\newcommand{\MG}[1]{\stackrel{
{\rm (#1)}}{>}}

\newcommand{\MRar}[1]{\stackrel{
{\rm (#1)}}{\Rightarrow}}

\newcommand{\MIn}[1]{\stackrel{
{\rm (#1)}}{\in}}

\begin{document}

\title{

\newcommand{\ISITATitle}{ 
Source Encryption under Mutual Information Security 
Criterion: 
Universal Coding, Strong Converse Theorem
}
\newcommand{\ArXivTitle}{
}{
A Framework of Secure Source Coding 
using Mutual Information Security Criterion:  
Universal Coding, Strong Converse Theorem
}
\newcommand{\IEICETitle}{
A Framework of Source Encryption using Mutual Information
Security Criterion and the Strong Converse Theorem
}

}
\author{
	\IEEEauthorblockN{Yasutada Oohama and Bagus Santoso}
	\IEEEauthorblockA{University of Electro-Communications, Tokyo, Japan\\ 
	Email: \url{{oohama,santoso.bagus}@uec.ac.jp}}
%
}

\maketitle

\begin{abstract}
In this paper, we propose a framework of source encryption, 
where cryptographic processing 
is applied to a prescribed 
fixed length source code. 
The proposed source encryption framework is 
based on the secure communication framework 
of the Shannon cipher system. 
In the proposed framework, we use 
the mutual information as a measure 
of information leakage to an adversary.
For the proposed framework, we explicitly 
establish 
the necessary and 
sufficient 
condition for reliable and secure communication 
under the condition that error probability and
information leakage, respectively, 
are upper 
bounded by prescribed constants 
$\varepsilon\in (0,1)$ and $\delta \in (0,\infty)$.  
We also show that the obtained necessary 
and sufficient condition does not depend 
on the constants 
$\varepsilon\in (0,1)$ and $\delta\in (0,\infty)$, 
demonstrating that we have the strong converse 
theorem 
for the proposed framework of source encryption.
We further prove the existence
of encryption/decryption schemes, which are universal in the
sense that they work effectively for any distributions of the plain
text and those of the key used for the encryption.
\newcommand{\OmitZs}
{
In this paper we consider the variable-length 
lossless source coding for discrete 
memoryless sources. 
We proposes a new encryption framework 
for securely transmitting codewords 
over a noiseless channel. 
The proposed source encryption framework is 
based on the secure communication framework 
of the Shannon cipher system. 
In the proposed framework, we use the mutual information 
as a measure of information leakage to an adversary. 
We further prove the existence
of encryption/decryption schemes, which are universal in the
sense that they work effectively for any distributions of the plain
text and those of the key used for the encryption. 

}

\end{abstract}

\newcommand{\AsbstISITATwoSix}{
we focus on our attention to
strengthening the direct coding theorem. We prove the existence
of encryption/decryption schemes, which are universal in the
sense that they work effectively for any distributions of the plain
text, any noisy channels through which the adversary observe the
corrupted version of the key, and any measurement device used
for collecting the physical information. Those schemes have a
good performance such that if we compress the ciphertext with
rate within the reliable and secure rate region, then: (1) anyone
with secret key will be able to decrypt and decode the ciphertext
correctly, but (2) any adversary who obtains the ciphertext and
also the side physical information will not be able to obtain
any information about the hidden source as long as the leaked
physical information is encoded with a rate within the rate
constraint.
}

\begin{IEEEkeywords}
 Source  encryption, common key cryptosystem, 
 conditions for secure communication, Fixed-length 
 source coding 
\end{IEEEkeywords}

\section{Introduction}

Source coding is widely used as a technique 
of encoding information in order to enhance   
the efficiency of information transmission under 
high reliability on the transmission. 
In addition to such efficiency in 
information transmission, security in communication, namely, 
preventing information leakage over public 
channels, is also an important problem.
Shannon~\cite{Shannon} formulated cryptosystems 
within an information-theoretic framework and 
clarified the concept and fundamental limits 
of perfect secrecy for shared-key encryption 
schemes. Since then, a variety of studies extending
Shannon's arguments have been conducted. 
For example, Yamamoto~\cite{Yamamoto1} focused 
on correlated information sources and formulated 
the key rate required for encryption from 
the viewpoint of the common information of the sources, and 
further extended this framework to rate-distortion 
theory~\cite{Yamamoto2}. Furthermore, 
Hayashi and Yamamoto~\cite{Yamamoto3} introduced new security 
criteria based on the eavesdropper's number 
of guesses and probability of correct decoding.

In this paper we consider the fixed-length 
source coding for discrete memoryless sources. 
We proposes a new encryption framework for securely 
transmitting codewords 
over a noiseless channel. 
The proposed source encryption framework is based 
on the secure communication framework 
of the Shannon cipher system. 

As previous works, Oohama and Santoso 
\cite{DBLP:conf/isit/OohamaS22},
\cite{DBLP:conf/isit/OohamaS24}
investigated common key cryptsystem involving  
side channel attacks.
%
Oohama and Santoso~
\cite{DBLP:conf/itw/OohamaS21},
\cite{DBLP:conf/isita/OohamaS22}, and 
\cite{Oohama2025} 
investigated the distributed source encryption 
and obtained the necessary and sufficient 
conditions for achieving secure communication. 
Our study is motivated by those previous works. 

We propose a framework of ``source encryption,'' 
in which encryption is applied to a given 
source encoder and decoder. For the source 
encoder, we consider the case where 
fixed-length source sequences are mapped 
to fixed-length binary sequences 
(fixed-length source coding).
In the proposed framework, we use 
the mutual information as a measure of 
information leakage to an adversary.  
The maximum mutual information used in the 
previous works 
\cite{DBLP:conf/isit/OohamaS22},
\cite{DBLP:conf/isita/OohamaS22}
has the advantage that it provides a security 
measure determined solely by the 
encryption system, independent of the 
source characteristics.
On the other hand, when one aims to evaluate 
the security of information transmission 
or a specific source, the maximum mutual 
information cannot be used as a security 
evaluation criterion.

For the proposed framework we explicitly 
establish the necessary and sufficient 
condition for reliable and secure communication 
under the condition that error probability and
information leakage, respectively, are upper 
bounded by prescribed constants 
$\varepsilon\in (0,1)$ and $\delta \in (0,\infty)$.  
We also show that the obtained necessary 
and sufficient condition does not depend 
on the constants 
$\varepsilon\in (0,1)$ and $\delta\in (0,\infty)$, 
demonstrating that we have the strong theorem 
for the proposed framework of source encryption.
Information spectrum 
method \cite{Han98InfSpec} and 
a variant of Birkhoff-von Neumann 
theorem  play an important role 
in deriving those results.
We further prove the existence
of encryption/decryption schemes, which are universal in the
sense that they work effectively for any distributions 
of the plain text and those of the key used for 
the encryption.

\section{Source Encryption Framework }
\subsection{Preliminaries}
In this subsection, we show the basic notations 
and related consensus used in this paper.

\noindent \underline{\textit{Source 
of Information and Key:}}\/ 
We first define the source. 
Let $X$ be a random variable from a finite 
set $\mathcal{X}$. Let $\{X_t\}_{t=1}^\infty$ be a 
stationary discrete memoryless 
source (DMS) such that for each $t=1,2,\ldots,$
$X_t$ takes values in the finite set $\mathcal{X}$ 
and has the same distribution as that of $X$ 
denoted by $p_X=\{p_X(x)\}_{x \in \mathcal{X}}$.
The stationary DMS 
$\{X_t\}_{t=1}^\infty$ 
is specified with $p_{X}$.
We next define the key used in a secret-key 
cryptosystem. Let $K$ be a random variable 
from the same finite set $\mathcal{X}$, having the distribution $p_{K}=\{p_K(k)\}_{k \in \mathcal{X}}$.
Let $\{K_t\}_{t=1}^\infty$ 
be a stationary DMS specified with $p_K$.

\noindent 
\underline{\textit{Random Variables and Sequences:}} \/ We 
write the sequence of random variables of length 
$n$ from the information source as 
follows $\bm{X}  \coloneqq X_1 X_2 \ldots X_n$. 
Similarly, the strings of length $n$ in $\mathcal{X}^n$ 
are written as 
$\bm{x} \coloneqq x_1\cdots x_n\in\mathcal{X}^n$. 
For ${\bm{x}} \in \mathcal{X}^n$, $p_{\bm{X}}(\bm{x})$ 
stands for the probability of the occurrence of 
$\bm{x}$. When the information source 
is memoryless and specified by $p_X$, 
we have $p_{\bm{X}}(\bm{x})= \prod_{t=1}^n 
p_X(x_{t})$. In this case, we write $p_{\bm{X}}(\bm{x})$
as $p_X^n(\bm{x})$. Similar notations are used for 
other random variables and sequences.

\noindent \underline{\textit{Other Notation:}}~Without 
loss of generality, we assume that $\mathcal{X}$ 
is a finite field.
$\oplus$ denotes addition over the field, and $\ominus$ 
denotes subtraction over the field.
As an example, for any $a,b$ in the same finite field,
we have $a \ominus b = a \oplus (-b)$.
Moreover, in this paper, all logarithms are 
taken to base 2, and $\mathbb{N}$ denotes the 
set of natural numbers.

\subsection{Basic System Description}
In this subsection, we explain the basic system setting 
and basic adversarial model we consider in this paper.
The information source and the key are generated 
independently by different parties,
$\mathcal{S}_{\mathsf{gen}}$ 
and $\mathcal{K}_{\mathsf{gen}}$, respectively.

\noindent \underline{\it Source Coding without 
Encryption}: \/ The random source sequence $\bm{X}$ 
of length $n$ generated from $\mathcal{S}_{\mathsf{gen}}$ 
is sent to node $\mathsf{E}$.
We show the coding system 
in Fig.~\ref{fig:FFnoEnc}.

\begin{figure}[t]
  \centering
  \includegraphics[width=60mm]{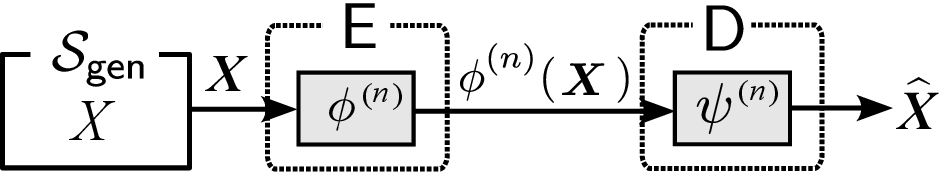}
  \caption{System model of source coding without encryption}
  \label{fig:FFnoEnc}
\end{figure}
The coding system is specified as follows.
\begin{itemize}
  \item[1)] {\it Encoding Process:}  
  At node $\mathsf{E}$, the encoder function $\phin : 
  \mathcal{X}^n \to \mathcal{M}^{(n)}$ observes 
  $\bm{X}$ to generate $\phinX$.
  Here, $\mathcal{M}^{(n)}=\{1,2,\cdots,|\mathcal{M}^{(n)}|\}$.

  \item[2)] {\it Transmission:}  
  Next, the encoded source $\phinX$ is sent to node $\mathsf{D}$ through a noiseless channel.

  \item[3)] {\it Decoding Process:}  
  At node $\mathsf{D}$, the decoder function $\psin : \mathcal{M}^{(n)} \to \mathcal{X}^n$ observes $\phinX$ to output $\widehat{\bm{X}}$, where
   $ \widehat{\bm{X}} \coloneqq \psin \circ \phinX.$
\end{itemize}

For the above $(\phin,\psin)$, define the set $\Dn$ of 
correct decoding by
\begin{align*}
 \Dn \coloneqq \{\bm{x} 
 \in \mathcal{X}^n:\psin \circ \phinx = \bm{x}\}.
\end{align*}

\noindent \underline{\it Source Encryption:}~The source 
sequence $\bm{X}$ generated 
from $\mathcal{S}_{\mathsf{gen}}$ 
and the key $\bm{K}$ generated 
from $\mathcal{K}_{\mathsf{gen}}$ 
are sent to node $\mathsf{L}$.
The details of the coding system 
are shown in Fig.~\ref{fig:FFwithEnc}.
\begin{figure}[t]
  \centering
  \includegraphics[width=60mm]{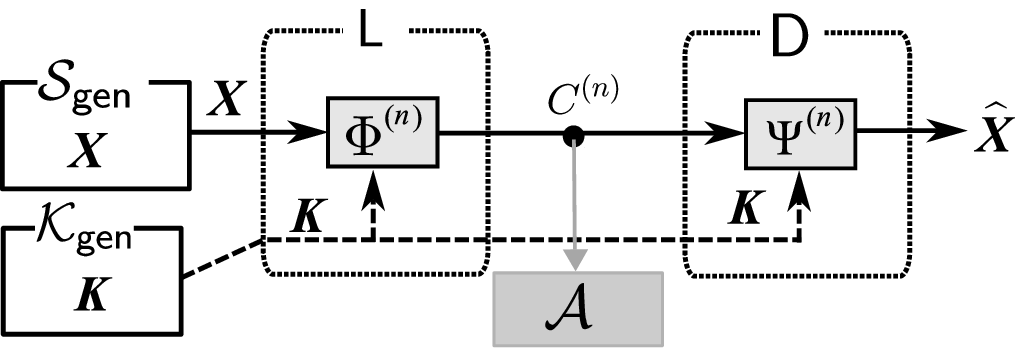}
  \caption{System model of source  encryption}
  \label{fig:FFwithEnc}
\end{figure}
The encrypted coding system is specified as follows.
\begin{itemize}
  \item[1)] {\it Source Processing:}  
  At node $\mathsf{L}$, $\bm{X}$ is encrypted with the key $\bm{K}$ using the encryption function
  $\Phin : \mathcal{X}^n \times \mathcal{X}^n \to \mathcal{C}^{(n)}$.
  The ciphertext of $\bm{X}$ is given by $C^{(n)} = \Phin(\bm{K}, \bm{X})$.
  On the encryption function $\Phin$, we use the following notation:
   $
    \Phin(\bm{K}, \bm{X})
    = \PhiKnX
    = \PhiXnK.$

  \item[2)]{\it Transmission:} 
  The ciphertext $C^{(n)}$ is sent to node $\mathsf{D}$ 
  through the public communication channel.
  Meanwhile, the key $\bm{K}$ is sent to $\mathsf{D}$ 
  through the private communication channel.

  \item[3)] {\it Sink Node Processing:}  
  At node $\mathsf{D}$, the ciphertext is decrypted using the key $\bm{K}$ through the corresponding decryption procedure
  $\Psin : \mathcal{X}^n \times \mathcal{C}^{(n)} \to \mathcal{X}^n$.
  Here we set $\widehat{\bm{X}} \coloneqq \Psin(\bm{K}, C^{(n)})$.
  On the decryption function $\Psin$, we use the following notation:
   $ \Psin(\bm{K}, C^{(n)})
    = \PsinK(C^{(n)})
    = \Psi_{C^{(n)}}^{(n)}(\bm{K}) $
\end{itemize}

Fix any $\bm{K} = \bm{k} \in \mathcal{X}^n$.
For this $\bm{K}$ and for $(\Phin, \Psin)$, we define the set $\mathcal{D}_{\bm{k}}^{(n)}$ of correct decoding by
\begin{align*}
  \mathcal{D}_{\bm{k}}^{(n)}
  \coloneqq
  \{\bm{x} \in \mathcal{X}^n : \Psin_{\bm{k}} \circ \Phiknx = \bm{x}\}.
\end{align*}

We require that the cryptosystem $(\Phin, \Psin)$ 
must satisfy the following condition.

\textbf{Condition}:~For each distributed source encryption 
system $(\Phin,\Psin)$,  there exists a source coding system 
$(\phin, \psin)$ such that for any $\bm{x} \in \mathcal{X}^{n}$ 
and for any $\bm{k} \in \mathcal{X}^{n}$, 
\begin{align*}
  \Psi_{\bm{k}}^{(n)} \circ \Phiknx = \psin \circ \phinx. 
\end{align*}
The above condition implies that 
$\Dn=\mathcal{D}_{\bm{k}}^{(n)}, 
\forall \bm{k} \in \mathcal{X}^n$.
We have the following properties on $\Dn$.

\begin{property} \label{per:Dn}
  For any $\bm{x}, \bm{x}^{\prime} \in \Dn$ with $\bm{x} \neq \bm{x}^{\prime}$, we have
  $\Phiknx \neq \Phin_{\bm{k}}(\bm{x}^{\prime})$.
\end{property}

Proof of Property~\ref{per:Dn} is given 
in Appendix \ref{prf:Dn}. 
From Property~\ref{per:Dn}, we have 
the following lemma.
  \begin{lemma} \label{lem:BirkoffvN}
  For any $c \in \mathcal{C}^{(n)}$, we have 
  \begin{align*}
    \sum_{\bm{x} \in \Dn} p_{C^{(n)}|\bm{X}}(c|\bm{x}) \leq 1.
  \end{align*}
\end{lemma}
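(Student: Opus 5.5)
Since $\bm{K}$ is independent of $\bm{X}$ and $C^{(n)}=\PhiKnX$, we can write, for every $c\in\mathcal{C}^{(n)}$ and $\bm{x}\in\mathcal{X}^n$,
\begin{align*}
p_{C^{(n)}|\bm{X}}(c|\bm{x})
=\sum_{\bm{k}\in\mathcal{X}^n} p_{\bm{K}}(\bm{k})\,
\mathbf{1}\bigl\{\Phiknx=c\bigr\}.
\end{align*}
The first step of the proof is to derive this expression from the definition of the ciphertext and the independence of $\mathcal{S}_{\mathsf{gen}}$ and $\mathcal{K}_{\mathsf{gen}}$. Summing over $\bm{x}\in\Dn$ and exchanging the two finite sums then gives
\begin{align*}
\sum_{\bm{x}\in\Dn} p_{C^{(n)}|\bm{X}}(c|\bm{x})
=\sum_{\bm{k}\in\mathcal{X}^n} p_{\bm{K}}(\bm{k})\,
\bigl|\{\bm{x}\in\Dn:\Phiknx=c\}\bigr|.
\end{align*}

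The key step is to show that $\bigl|\{\bm{x}\in\Dn:\Phiknx=c\}\bigr|\le 1$ for every fixed $\bm{k}$. This is exactly what Property~\ref{per:Dn} gives. For fixed $\bm{k}$, the map $\bm{x}\mapsto\Phiknx$ is injective on $\Dn$, so at most one $\bm{x}\in\Dn$ is mapped to the given $c$. Since $\Dn=\mathcal{D}_{\bm{k}}^{(n)}$ for all $\bm{k}$ by the Condition, the same set $\Dn$ serves for every key, and we may use it inside the sum over $\bm{k}$ without change.

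Combining these, the right-hand side is at most $\sum_{\bm{k}}p_{\bm{K}}(\bm{k})=1$, which proves the lemma. I do not expect any real obstacle here. The only point requiring care is the first display: it needs the independence of $\bm{K}$ and $\bm{X}$ and the fact that $\Phin$ is deterministic. Once those are in place, the result is a Birkhoff--von Neumann-type column-sum bound: the matrix $p_{C^{(n)}|\bm{X}}(\cdot|\cdot)$ restricted to rows in $\Dn$ is a convex combination, weighted by $p_{\bm{K}}$, of partial permutation (injection) matrices. Each such matrix has column sums at most one, so the convex combination does too.
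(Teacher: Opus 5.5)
Your proof is correct and is essentially the paper's argument. The paper writes $p_{C^{(n)}|\bm{X}}(c|\bm{x})=\Pr\{\bm{K}\in\mathcal{A}_{\bm{x}}(c)\}$ with $\mathcal{A}_{\bm{x}}(c)=\{\bm{k}:\Phi^{(n)}_{\bm{k}}(\bm{x})=c\}$, uses Property~\ref{per:Dn} to show these sets are pairwise disjoint over $\bm{x}\in\Dn$, and bounds the sum by the probability of their disjoint union; your per-key injectivity count after exchanging the sums is that same disjointness written with indicators.
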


Proof of this lemma is given 
in Appendix \ref{prf:Birkholf}. 
Lemma \ref{lem:BirkoffvN} 
can be regarded as an extension 
of the Birkhoff-von Neumann theorem~\cite{Iwamoto}.
Lemma~\ref{lem:BirkoffvN} plays an important 
role in proving the converse theorem.

\section{Problem Set Up and Main Results}
\subsection{Problem Set Up}
In this subsection, we introduce reliability and security 
criteria. 

\noindent \underline{\it Reliability Criterion:} \/
The decoding process is successful if $\widehat{\bm{X}}=\bm{X}$.
Hence the decoding error probability is given by
\begin{align*}
 &\Pr[\Psin(\bm{K}, \Phin(\bm{K}, \bm{X})) \neq \bm{X} ] 
  = \Pr[\PsinK \circ \PhiKnX \neq \bm{X} ]                   \\
  &=\Pr[\psin \circ \phinX \neq \bm{X}] =\Pr[\bm{X} \notin \Dn]
\end{align*}
Since the above quantity depends only on $(\phin, \psin)$, 
we write the error probability $p_{\mathrm{e}}$ of decoding as
\begin{align*}
  p_{\mathrm{e}}= p_{\mathrm{e}} (\phin,\psin |p_X^n)
  \coloneqq \Pr[\bm{X} \notin \Dn]
\end{align*}

\noindent \underline{\it Security Criterion:}\/
In communication, the  adversary $\mathcal{A}$ 
attempts to estimate the source sequence 
$\bm{X} \in \mathcal{X}^n$ from the ciphertext $C^{(n)}$.
Therefore, we define the mutual information(MI) 
between $\bm{X}$ and $C^{(n)}$ as
$\MI
= \MI(\Phin\mid p_X^n,p_K^n$ $)
\coloneqq I(C^{(n)}; \bm{X}),$
and adopt
it as the security criterion.

\begin{definition} \label{def:admissible}
  We fix some positive constant $\delta_0$. 
  For a fixed pair $(\varepsilon, \delta)\in (0,1) \times (0,\delta_0]$,
  a quantity $R$ is $(\varepsilon,\delta)$-admissible 
  if $\exists \{(\Phin,\Psin)\}^{\infty}_{n=1}$ such that
  $\forall \gamma >0$,$\exists n_0=n_0(\gamma) \in \mathbb{N}$, $\forall n\geq n_0$, 
  \begin{align*}
     & \frac{1}{n} \log | \mathcal{C}^{(n)}| \leq R+ \gamma,                             \\
     & p_{\mathrm{e}}(\phin,\psin|p_X^n) \leq \varepsilon,
     ~\Delta^{(n)}_{\mathrm{MI}}(\Phin|p_X^n,p_K^n) \leq \delta.
  \end{align*}
\end{definition}

\begin{definition}
  Let $R_{\inf}(\varepsilon,\delta |p_X, p_K)$ denote 
  the infimum of all $(\varepsilon,\delta)$-admissible rates $R$.
  Furthermore, we define
  \begin{align*}
    R_{\inf}(p_X, p_K) \coloneqq \sup_{\substack{(\varepsilon, \delta) 
    \in  (0,1) \\\times (0,\delta_0] }} 
    R_{\inf}(\varepsilon,\delta |p_X, p_K).
  \end{align*}
\end{definition}

\subsection{Main Results}
We first state several definitions 
to describe the results.
Let $\mathcal{P}(\mathcal{X})$ denote the set of all probability distributions on $\mathcal{X}$.
For $R \geq 0$ and 
$(p_X,p_K)\in \mathcal{P}^2(\mathcal{X})$, 
we define the following functions:
\begin{align*}
& E(R|p_X) \coloneqq \min_{\substack{ P
  \in \mathcal{P}(\mathcal{X}): \\ R \leq H(P)}} 
  D(P||p_X),
\\ 
&F(R|p_K)
 \coloneqq \min_{
 P \in \mathcal{P}(\mathcal{X})}
\left\{
[H(P)-R]^{+}+D(P||p_K)
\right\}.
\end{align*}
Here $[a]^+=\max\{0,a\}$.
For the functions $E(R|p_X)$ and $F(R|p_K)$, 
we have the following property.
\begin{property}\label{per:F(R|pK)}
The two functions $E_{\empty}(R|p_X)$ and $F(R|p_K)$ take positive values
if and only if $H(X) \! <\! R\! <\! H(K)$.
\end{property}

We set 
\begin{align}
  \gamma_n &\coloneqq 
  \frac{1}{n}\{|\mathcal{X}|\log(n+1)+\log|{\cal X}|+1\},
  R_n  \coloneq R+\gamma_n.
 \label{eqn:ChoicRate}  
\end{align}
Note that $\gamma_n$ vanishes as $n \to \infty$.
We have the following theorem.
\begin{theorem}[Direct Coding Theorem]\label{thm:Directproof}
$\forall R>0$, 
$\exists \{(\Phin,$ $\Psin)\}_{n=1}^{\infty}$ 
such that $\forall n\in \mathbb{N}$ 
and  $\forall (p_X,p_K) 
\in {\cal P}^2({\cal X})$,  
\begin{align}
\frac{1}{n}\log |{\cal C}^{(n)}|&\in
\left[
 R_n-\frac{1}{n} 
 \log |\mathcal{X}|, R_n \right],
\label{eqn:DirectThRate}\\
     p_{\mathrm{e}}(\phin,\psin|p_X^n) 
     &\leq (n+1)^{|\mathcal{X}|}2^{-nE(R|p_X)}, 
\label{eqn:ErrUbd} \\
     \MI(\Phin|p_X^n,p_K^n) 
      &\leq (2R_n+1)|\mathcal{X}|
       (n+1)^{4|\mathcal{X}|}
\notag\\      
      &\quad \times 2^{-nF(R|p_K)}.
\label{eqn:SecUbd}
  \end{align}
\end{theorem}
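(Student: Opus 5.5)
The plan is to use a universal type-based construction: a fixed-length source code that depends only on types, followed by a one-time-pad encryption in which the key is first hashed down to $nR_n$ bits. For the source code $(\phin,\psin)$, I take $\Dn$ to be the union of all type classes $T_P$ with $H(P)\le R$. Since there are at most $(n+1)^{|\mathcal{X}|}$ types and $|T_P|\le 2^{nH(P)}$, we get $|\Dn|\le (n+1)^{|\mathcal{X}|}2^{nR}\le 2^{nR_n-\log|\mathcal{X}|-1}$ by the choice of $\gamma_n$. We then index $\Dn$ injectively into $\mathcal{C}^{(n)}=\{0,1\}^{\ell_n}$ with $\ell_n=\lfloor nR_n\rfloor$; the slack $\frac1n\log|\mathcal{X}|$ in the stated rate interval absorbs the integer rounding and any padding of the index space. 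Sequences outside $\Dn$ are mapped to a fixed index.

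For the error bound (\ref{eqn:ErrUbd}), note that $\Pr[\bm{X}\notin\Dn]=\sum_{P:H(P)>R}p_X^n(T_P)\le (n+1)^{|\mathcal{X}|}2^{-n\min_{H(P)\ge R}D(P\|p_X)}$. This follows directly from $p_X^n(T_P)\le 2^{-nD(P\|p_X)}$ and the definition of $E(R|p_X)$.

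For the encryption I set $\Phin(\bm{k},\bm{x})=\phinx\oplus g(\bm{k})$, with $g:\mathcal{X}^n\to\{0,1\}^{\ell_n}$ a key hash, and $\Psin(\bm{k},c)=\psin(c\oplus g(\bm{k}))$. This satisfies the Condition exactly. The function $g$ must be chosen universally, independent of $(p_X,p_K)$. I will choose $g$ so that on every key type class $T_Q$ it is as close to uniform as possible: when $|T_Q|\ge 2^{\ell_n}$, each preimage count lies in $\lfloor |T_Q|/2^{\ell_n}\rfloor$ or that value plus one; when $|T_Q|<2^{\ell_n}$, $g$ is injective on $T_Q$. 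A deterministic assignment achieving this exists, since it is just a balanced partition of each type class, so no random-coding averaging is needed. This is a key advantage over random binning, because the bound must hold simultaneously for every $(p_X,p_K)$.

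For the security bound (\ref{eqn:SecUbd}), since $\bm{X}$ and $\bm{K}$ are independent, $I(C^{(n)};\bm{X})\le \ell_n-H(g(\bm{K}))\le D(p_{g(\bm K)}\|U_{\ell_n})$, and by the one-time-pad structure this equals the leakage uniformly in $p_X$. Decomposing by the key type $Q$ and using convexity, the divergence from uniform is controlled by $\sum_Q p_K^n(T_Q)\,D(p_{g(\bm K)|T_Q}\|U)$. Conditioned on $T_Q$, $\bm{K}$ is uniform on $T_Q$. If $H(Q)\ge R_n$ roughly, the balanced hash gives near-uniform output, with divergence of order $2^{\ell_n}/|T_Q|\lesssim (n+1)^{|\mathcal{X}|}2^{-n[H(Q)-R]}$. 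If $H(Q)<R$, I simply bound the divergence by $\ell_n\le nR_n$. Combining the two regimes with $p_K^n(T_Q)\le 2^{-nD(Q\|p_K)}$ gives, in both cases, a term bounded by a polynomial in $n$ times $2^{-n([H(Q)-R]^+ +D(Q\|p_K))}$, up to the factor $(2R_n+1)$ coming from the linear $\ell_n$ factor and $\log$-ratio estimates. Summing over at most $(n+1)^{|\mathcal{X}|}$ types yields $2^{-nF(R|p_K)}$ with polynomial prefactors.

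The main obstacle is this last step. The per-type near-uniformity argument only bounds the divergence of $g(\bm K)$ given $T_Q$, not the unconditional divergence directly. Moreover, in the regime $H(Q)<R$ there is no exponential decay from the hash, so the decay must come entirely from $D(Q\|p_K)$, which is exactly $F$ when $[H(Q)-R]^+=0$. To get the mixing step right, I will first try the bound $D(\sum_Q w_Q P_Q\|U)\le \sum_Q w_Q D(P_Q\|U)$ together with $D(P\|U)\le \log(1+\chi^2)$ for the balanced case. I expect the bookkeeping needed to reach the exact prefactor $(2R_n+1)|\mathcal{X}|(n+1)^{4|\mathcal{X}|}$ to be the tedious part.
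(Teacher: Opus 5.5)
Your proposal is correct, and its security part takes a genuinely different route from the paper. The following pieces match the paper:
\begin{itemize}
\item the type-based source code;
\item the rate bookkeeping (binary strings of length $\lfloor nR_n\rfloor$ suffice because $(n+1)^{|\mathcal{X}|}2^{nR}=2^{nR_n-\log|\mathcal{X}|-1}$ and $\log|\mathcal{X}|\ge 1$);
\item the error bound;
\item the reduction $I(C^{(n)};\bm{X})\le \ell_n-H(g(\bm{K}))=D(p_{g(\bm{K})}\|U)$, which is an upper bound uniform in $p_X$ rather than an equality;
\item the convexity split over key types (the paper's two lemmas on $m\log|\mathcal{X}|-H(\widetilde{K}^m)$ and on concavity of entropy).
\end{itemize}

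The divergence here is in how the hash is obtained. The paper draws a random affine map $\bm{k}A\oplus b^m$ and uses pairwise independence to get $\mathbf{E}[D(\Omega_{P;\varphi^{(n)}}\|p_{U^m})]\le\Theta_n(P)$. It then needs one map that is good for all types simultaneously, which it extracts by a weighted Markov argument at the cost of an extra factor $|\mathcal{P}_n(\mathcal{X})|$. You instead use a deterministic hash balanced on each type class, exploiting only that $\bm{K}$ is uniform given its type.

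This is constructive and actually sharper. For $N=|T_Q|\ge L=2^{\ell_n}$, write $N=aL+r$ with $0\le r<L$. Then $\chi^2=r(L-r)/N^2\le L^2/(4N^2)$, so the per-type divergence decays like $(L/N)^2$ rather than $L/N$, and the $|\mathcal{P}_n(\mathcal{X})|$ loss disappears.

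The one case you should write out is the regime you cover with ``roughly'': $N<L$ while $H(Q)\ge R$. There the trivial bound $D\le\ell_n\le nR_n$ suffices. Since $N<L$ together with $|T_Q|\ge (n+1)^{-(|\mathcal{X}|-1)}2^{nH(Q)}$ forces $2^{-n[H(Q)-R_n]^+}\ge (n+1)^{-(|\mathcal{X}|-1)}$, you get in every case
\[
D(P_Q\|U)\le nR_n(n+1)^{|\mathcal{X}|-1}2^{-n[H(Q)-R_n]^+}.
\]
Combine this with $[H(Q)-R_n]^+\ge[H(Q)-R]^+-\gamma_n$ and $2^{n\gamma_n}=2|\mathcal{X}|(n+1)^{|\mathcal{X}|}$. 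The total prefactor is then at most $2R_n|\mathcal{X}|(n+1)^{3|\mathcal{X}|}$, inside the stated $(2R_n+1)|\mathcal{X}|(n+1)^{4|\mathcal{X}|}$. So the ``main obstacle'' you flag dissolves: the convexity step you propose is exactly what is needed.

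What the paper's route buys is an algebraic hash that is easy to implement. Its pairwise-independence bound also holds for a key uniform on any set $\mathcal{T}$, so it carries over to the authors' side-channel settings. There the key's conditional law given the adversary's observation is no longer uniform on type classes, and a per-type balanced partition would not suffice.
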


Proof of Theorem~\ref{thm:Directproof} 
is given in the Section \ref{prf:thm_Direct}. 
From Theorem~\ref{thm:Directproof} 
and Property \ref{per:F(R|pK)}, we have 
the following corollary: 
\begin{corollary}\label{cor:directThA}
$\forall R>0$, $\exists \{(\Phin, \Psin)$ $\}_{n=1}^{\infty}$ 
such that $\forall (p_X,$ $p_K)\in {\cal P}^2({\cal X})$ 
with $H(X)<R<H(K)$, 
\begin{align}
\left.
\begin{array}{l}
{\displaystyle \lim_{n\to\infty}} 
(1/n)\log |{\cal C}^{(n)}|=R, 
\vspace{1mm}\\
{\displaystyle \liminf_{n\to\infty}}(-1/n)
\log p_{\mathrm{e}}(\phin,\psin|p_X^n)
\\
\geq E(R|p_X)>0,
\vspace{1mm}\\
{\displaystyle \liminf_{n\to\infty}}(-1/n)
\log \MI(\Phin|p_X^n,p_K^n)
\\ 
\geq F(R|p_K)>0.
\end{array}
\right\}
\label{eqn:ThOneBound}
\end{align}
\end{corollary}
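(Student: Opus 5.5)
The corollary follows from Theorem~\ref{thm:Directproof} combined with Property~\ref{per:F(R|pK)}. Fix $R>0$ and take the sequence $\{(\Phin,\Psin)\}_{n=1}^{\infty}$ provided by Theorem~\ref{thm:Directproof}. This sequence is chosen once, independently of $(p_X,p_K)$. Then fix any $(p_X,p_K)\in\mathcal{P}^2(\mathcal{X})$ with $H(X)<R<H(K)$. The three claims in (\ref{eqn:ThOneBound}) are obtained by taking $(1/n)\log$ of the three bounds (\ref{eqn:DirectThRate})--(\ref{eqn:SecUbd}) and letting $n\to\infty$.

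\emph{Rate.} By (\ref{eqn:DirectThRate}), $(1/n)\log|\mathcal{C}^{(n)}|$ lies between $R+\gamma_n-(1/n)\log|\mathcal{X}|$ and $R+\gamma_n$. Since $\gamma_n\to 0$ by (\ref{eqn:ChoicRate}) and $(1/n)\log|\mathcal{X}|\to 0$, the limit equals $R$.

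\emph{Error exponent.} If $p_{\mathrm{e}}=0$, the $\liminf$ is $+\infty$ and there is nothing to prove. Otherwise (\ref{eqn:ErrUbd}) gives
\[
-\tfrac1n\log p_{\mathrm{e}}(\phin,\psin|p_X^n)\ \ge\ E(R|p_X)-\tfrac{|\mathcal{X}|}{n}\log(n+1),
\]
and the last term vanishes. Positivity $E(R|p_X)>0$ holds by Property~\ref{per:F(R|pK)}, since $H(X)<R<H(K)$.

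\emph{Leakage exponent.} Again assume $\MI>0$. Then (\ref{eqn:SecUbd}) yields
\[
-\tfrac1n\log\MI\ \ge\ F(R|p_K)-\tfrac1n\log\bigl[(2R_n+1)|\mathcal{X}|\bigr]-\tfrac{4|\mathcal{X}|}{n}\log(n+1).
\]
Since $R_n\to R$, the prefactor $(2R_n+1)|\mathcal{X}|$ stays bounded, and both correction terms tend to $0$. Positivity $F(R|p_K)>0$ again follows from Property~\ref{per:F(R|pK)}.

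There is no real obstacle here. The only points needing care are that the code sequence does not depend on $(p_X,p_K)$, which is already built into the quantifier order of Theorem~\ref{thm:Directproof}, and the trivial cases where the error probability or the leakage is exactly zero.
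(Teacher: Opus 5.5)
Your proposal is correct and follows the paper's own route. The paper derives this corollary directly from the universal bounds (\ref{eqn:DirectThRate})--(\ref{eqn:SecUbd}) of Theorem~\ref{thm:Directproof}, taking normalized logarithms and letting $n\to\infty$, with positivity of $E(R|p_X)$ and $F(R|p_K)$ supplied by Property~\ref{per:F(R|pK)}; your explicit handling of the polynomial prefactors and of the degenerate cases $p_{\mathrm{e}}=0$ or $\Delta_{\mathrm{MI}}^{(n)}=0$ only fills in details the paper leaves implicit.
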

By Corollary \ref{cor:directThA}, 
under $H(X)<R<H(K)$,  
we have the followings: 
\begin{itemize}
\item[1.] On the reliability, 
$p_{\rm e}(\phi^{(n)},\psi^{(n)}|p_{X}^n)$ 
vanishes exponentially as $n\to\infty$, and its 
exponent is lower bounded by 
$E(R|p_{X})$.  
\item[2.] On the security, 
$\Delta_{\rm MI}^{(n)}
(\Phi^{(n)}$$|p_{X}^n,p_{K}^n)$ vanishes exponentially 
as $n\to \infty$, and its exponent 
is lower bounded by $F(R|p_{K})$.
\item[3.] The code that attains the pair 
$(E($$R|p_{X})$, $F(R|p_{X}))$ 
of exponent functions is the universal 
code that depends only on $R$ not 
on the value of the pair of the 
distributions  
$(p_{X},p_K)\in {\cal P}^2({\cal X})$.
\end{itemize}


Here, we define the following quantity.
\begin{align*}\label{def:Rast2}
 R^{\ast}(p_X,p_K)=
   \begin{cases}
    H(X)      &\mbox{ if } H(X) < H(K), \\
    +\infty &\mbox{ otherwise}.
  \end{cases}
  \stepcounter{equation}\tag{\theequation}
\end{align*}
From the definition of $R^{\ast}(p_X, p_K)$, 
Corollary \ref{cor:directThA}, 
and Property~\ref{per:F(R|pK)}, we have 
the following corollary: 
\begin{corollary}\label{cor:directTh}
For each $(\varepsilon,\delta)
 \in (0,1)\times (0,\delta_0]$, we have
\begin{align*}
R_{\inf}(\varepsilon,\delta |p_X,p_K)
\leq R_{\inf}(p_X,p_K)\leq R^{\ast}(p_X, p_K).
\end{align*}
\end{corollary}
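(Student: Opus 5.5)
The first inequality is immediate from the definitions. For each $(\varepsilon,\delta)\in(0,1)\times(0,\delta_0]$, the quantity $R_{\inf}(\varepsilon,\delta|p_X,p_K)$ is one of the terms over which the supremum defining $R_{\inf}(p_X,p_K)$ is taken. So it is bounded by that supremum, and nothing further needs checking.

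For the second inequality we split according to the definition of $R^{\ast}(p_X,p_K)$.

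\emph{Case $H(X)\geq H(K)$.} Here $R^{\ast}(p_X,p_K)=+\infty$, so the bound holds trivially.

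\emph{Case $H(X)<H(K)$.} Here $R^{\ast}(p_X,p_K)=H(X)$. It is enough to show that every $R$ with $H(X)<R<H(K)$ is $(\varepsilon,\delta)$-admissible for every pair $(\varepsilon,\delta)\in(0,1)\times(0,\delta_0]$. Granting this, $R_{\inf}(\varepsilon,\delta|p_X,p_K)\leq R$ for every such pair. Taking the supremum over $(\varepsilon,\delta)$ gives $R_{\inf}(p_X,p_K)\leq R$. Letting $R\downarrow H(X)$, which is allowed because the interval $(H(X),H(K))$ is nonempty, yields $R_{\inf}(p_X,p_K)\leq H(X)$.

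To show admissibility, fix such an $R$ and take the sequence $\{(\Phin,\Psin)\}_{n=1}^{\infty}$ supplied by Corollary~\ref{cor:directThA}, whose construction depends only on $R$. We check the three requirements of Definition~\ref{def:admissible} in turn.
\begin{itemize}
\item[1)] \emph{Rate.} By (\ref{eqn:DirectThRate}), $\frac{1}{n}\log|\mathcal{C}^{(n)}|\leq R_n=R+\gamma_n$. Since $\gamma_n\to 0$, for any $\gamma>0$ the bound $\frac1n\log|\mathcal{C}^{(n)}|\leq R+\gamma$ holds for all large $n$.
\item[2)] \emph{Reliability.} Property~\ref{per:F(R|pK)} gives $E(R|p_X)>0$. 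Then (\ref{eqn:ErrUbd}) bounds $p_{\mathrm e}$ by $(n+1)^{|\mathcal X|}2^{-nE(R|p_X)}$, which tends to $0$. Hence $p_{\mathrm e}\leq\varepsilon$ for all large $n$.
\item[3)] \emph{Security.} Property~\ref{per:F(R|pK)} gives $F(R|p_K)>0$. Then (\ref{eqn:SecUbd}) bounds $\MI$ by a polynomial in $n$ times $2^{-nF(R|p_K)}$, which tends to $0$. Hence $\MI\leq\delta$ for all large $n$.
\end{itemize}

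One formal point needs care. Definition~\ref{def:admissible} requires an $n_0$ depending on $\gamma$ such that all three bounds hold for $n\geq n_0$. We therefore take $n_0$ to be the maximum of the thresholds from the three items above. The reliability and security thresholds do not depend on $\gamma$, so this maximum is well defined and depends only on $\gamma$ (for fixed $\varepsilon,\delta,p_X,p_K$). The argument involves no real obstacle; this bookkeeping and the limit $R\downarrow H(X)$ are the only steps requiring attention.
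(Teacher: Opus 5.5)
Your proof is correct and follows the paper's route. Both apply Corollary~\ref{cor:directThA} (Theorem~\ref{thm:Directproof} together with Property~\ref{per:F(R|pK)}) to show that every $R\in(H(X),H(K))$ is $(\varepsilon,\delta)$-admissible for all $(\varepsilon,\delta)$, and then let $R$ approach $H(X)$. The paper takes that limit inside the admissibility claim, asserting without real justification that $H(X)$ itself is admissible; you take it on the bound $R_{\inf}(p_X,p_K)\le R$, which is cleaner and needs no closure argument, and you also spell out the trivial first inequality and the case $H(X)\ge H(K)$.
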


\textit{Proof:}
Choose $R$ 
such that $H(X) < R < H(K)$. 
Since $H(X)<H(K)$, 
this choice of $R>0$ is possible.   
Then, for any 
$\tau \in \bigl(0,\min\{{R-[H(X)], 
                   H(K)-R}\bigr]\}$, 
we have the following inequality:
\begin{align}\label{AA}
H(X) +\tau\leq R\leq H(K)-\tau.
\end{align}
Then it follows from 
Corollary \ref{cor:directThA}
that $\exists 
\{(\Phin,\Psin)\}_{n =1}^{\infty}$ 
such that we have 
the bound (\ref{eqn:ThOneBound}) 
in this corollary, implying 
that for any 
$(\varepsilon,\delta)$ 
$\in (0,1)\times (0,\delta_0]$, 
every $R$ satisfying \eqref{AA} 
is $(\varepsilon,\delta)$-admissible.
Since $\tau > 0$ in \eqref{AA} 
can be chosen arbitrarily small, it follows 
that under $H(X)<H(K)$, every $R$ satisfying 
$H(X)\leq R\leq H(K)$
is $(\varepsilon,\delta)$-admissible.
Combining this with the definition of 
$R^{\ast}(p_X,p_K)$ given 
by \eqref{def:Rast2}, we have that
for any $(\varepsilon,\delta)$ 
$\in (0,1) \times (0,\delta_0]$, 
$R^{\ast}(p_X,p_K)$ 
is $(\varepsilon,\delta)$-admissible.
Since $(\varepsilon,\delta)$ can arbitrary 
be close to $(0,0)$, we conclude 
that $R^{\ast}(p_X,p_K)$ $\geq R_{\rm inf}(p_X,p_K)$.   \hfill \IEEEQED

We next describe a result on the 
converse coding theorem. 
To this end we set    
\begin{align*}\label{def:Rast3}
   R^{\star}(p_X, p_K) 
   =
  \begin{cases}
    H(X) &\mbox{ if } H(X) \leq H(K), \\
    +\infty        &\mbox{ otherwise}.
  \end{cases}
  \stepcounter{equation}\tag{\theequation}
\end{align*}

We have the following result on a lower bound of
$R_{\inf}(\varepsilon,\delta\mid p_X, p_K)$ 
for $(\varepsilon,\delta)
\in (0,1)\times(0,\delta_0]$:
\begin{theorem}[Converse Coding Theorem]
\label{th:StConvTh}
For each $(\varepsilon, \delta)\in (0,1) \times (0,\delta_0]$,we have
\begin{align*}
 R^{\star}(p_X, p_K) 
 \leq R_{\inf}(\varepsilon,\delta \mid p_X, p_K) 
 \leq R_{\inf}(p_X, p_K).
  \end{align*}
\end{theorem}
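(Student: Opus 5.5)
The second inequality is immediate from the definition of $R_{\inf}(p_X,p_K)$ as a supremum over $(\varepsilon,\delta)$. For the first, fix $(\varepsilon,\delta)$ and an $(\varepsilon,\delta)$-admissible $R$. Two facts must be shown: (a) $R\geq H(X)$, and (b) if $H(X)>H(K)$ then no finite $R$ is admissible. Together these give $R_{\inf}(\varepsilon,\delta|p_X,p_K)\geq R^{\star}(p_X,p_K)$.

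For (a) I would use the information spectrum method, since the error bound is only a constant $\varepsilon$ and not vanishing. Because every $\bm{x}\in\Dn$ is recovered from $\phinx$, we have $|\Dn|\leq|\mathcal{M}^{(n)}|$. By Property \ref{per:Dn}, for fixed $\bm{k}$ the map $\bm{x}\mapsto\Phiknx$ is injective on $\Dn$, so $|\Dn|\leq|\mathcal{C}^{(n)}|\leq 2^{n(R+\gamma)}$. The standard one-shot bound then gives
\begin{align*}
1-\varepsilon\leq \Pr[\bm{X}\in\Dn]\leq \Pr\bigl[-\tfrac1n\log p_X^n(\bm{X})\leq R+2\gamma\bigr]+2^{-n\gamma}.
\end{align*}
If $R<H(X)$, choosing $\gamma$ small makes the first term vanish by the weak law of large numbers. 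This contradicts $1-\varepsilon>0$, so $R\geq H(X)$.

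For (b) I would use the security constraint together with Lemma \ref{lem:BirkoffvN}. Assume $H(X)>H(K)$. Since $I(C^{(n)};\bm{X})\leq\delta$, the entropy satisfies $H(\bm{X}|C^{(n)})\geq nH(X)-\delta$. On the other hand, given $C^{(n)}=c$ and $\bm{K}=\bm{k}$, the decoder recovers $\bm{X}$ whenever $\bm{X}\in\Dn$. Hence $\bm{X}$ is essentially a function of $(C^{(n)},\bm{K})$ on $\Dn$.

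To make this quantitative with constant $\varepsilon$, I would again use a spectrum-type argument. Let $\mathcal{T}$ be the set of $\bm{x}\in\Dn$ with $p_X^n(\bm{x})\leq 2^{-n(H(X)-\tau)}$; this set has probability at least $1-\varepsilon-o(1)$. Lemma \ref{lem:BirkoffvN} gives $\sum_{\bm{x}\in\Dn}p_{C|\bm{X}}(c|\bm{x})\leq 1$, so
\begin{align*}
\Pr[\bm{X}\in\mathcal{T},C^{(n)}=c]\leq 2^{-n(H(X)-\tau)}.
\end{align*}
Conditioning on typical $\bm{K}$, with probability at least $2^{-n(H(K)+\tau)}$ each, also gives $\Pr[C^{(n)}=c]\geq 2^{-n(H(K)+\tau)}p_{C|\bm{X}}(c|\bm{x})$ for the relevant pairs. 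Combining these should yield that the information density $\log\frac{p_{C|\bm{X}}}{p_C}$ is at least about $n(H(X)-H(K)-2\tau)$ on an event of probability bounded away from zero, roughly $1-\varepsilon$.

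The main obstacle is the last step: converting this information-density bound into a lower bound on $I(C^{(n)};\bm{X})$ that grows linearly in $n$. Mutual information is an average, and the density may be negative elsewhere. I would handle this with the standard inequality $\mathbb{E}[\imath]\geq a\Pr[\imath\geq a]-\log e\cdot(\text{const})$, where the negative part of the information density has bounded expectation. This gives $I(C^{(n)};\bm{X})\geq (1-\varepsilon')n(H(X)-H(K)-2\tau)-O(1)\to\infty$, contradicting $I(C^{(n)};\bm{X})\leq\delta$. Hence no finite $R$ is admissible when $H(X)>H(K)$, which completes the proof.
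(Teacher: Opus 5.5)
Your reduction to (a) and (b), the second inequality, and part (a) are all fine. Part (a) is in fact more complete than the paper's one-line appeal to the source-coding strong converse: admissibility constrains $|\mathcal{C}^{(n)}|$ rather than $|\mathcal{M}^{(n)}|$, and your use of Property~\ref{per:Dn} to get $|\mathcal{D}^{(n)}|\le|\mathcal{C}^{(n)}|$ supplies exactly the link the paper leaves implicit. The final conversion in (b) is also correct, since the negative part of the information density has expectation at most $(\log e)/e$. The gap is the step you pass over with ``combining these should yield'', namely the density lower bound itself.

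The information density is $\log\bigl[p_{C^{(n)}|\bm X}(c|\bm x)/p_{C^{(n)}}(c)\bigr]$, so you need an upper bound on the \emph{unconditional} $p_{C^{(n)}}(c)$. Lemma~\ref{lem:BirkoffvN} only gives $\Pr[\bm X\in\mathcal{T},C^{(n)}=c]\le 2^{-n(H(X)-\tau)}$. The remaining mass $\Pr[\bm X\notin\mathcal{T},C^{(n)}=c]$ is uncontrolled: the ciphertexts of $(\mathcal{D}^{(n)})^c$, a set of probability up to $\varepsilon$, may all coincide with one ciphertext that some $\bm x\in\mathcal{T}$ also produces. So no pointwise bound on $p_{C^{(n)}}(c)$ holds. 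The inequality you actually write, a lower bound on $\Pr[C^{(n)}=c]$, points the wrong way. What the typical key gives you is the numerator bound $p_{C^{(n)}|\bm X}(\Phi^{(n)}_{\bm k}(\bm x)|\bm x)\ge p_K^n(\bm k)\ge 2^{-n(H(K)+\tau)}$.

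A restriction step is therefore missing. One fix is a Markov argument. The event $\{\bm X\in\mathcal{T},\ \Pr[\bm X\in\mathcal{T}\mid C^{(n)}]<\eta\}$ has probability at most $\eta$, and off it $p_{C^{(n)}}(C^{(n)})\le\eta^{-1}2^{-n(H(X)-\tau)}$. Taking $\eta=(1-\varepsilon)/2$ and using $\bm K\perp\bm X$, this yields the density bound $n(H(X)-H(K)-2\tau)-\log(1/\eta)$ on an event of probability about $(1-\varepsilon)/2$, after which your last step goes through.

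The paper instead conditions everything on $\bm X\in\widetilde{\mathcal{B}}^{(n)}_\gamma$, which is your $\mathcal{T}$:
\begin{itemize}
\item[1)] Lemma~\ref{lem:BirkoffvN} gives $q_{\widetilde{C}^{(n)}}(c)\le Q^{-1}2^{-n[H(X)-\gamma]}$ for \emph{every} $c$, hence $H(\widetilde{C}^{(n)})\ge n[H(X)-\gamma]+\log Q$.
\item[2)] Data processing gives $H(C^{(n)}|\bm X,\bm X\in\widetilde{\mathcal{B}}^{(n)}_\gamma)\le H(\bm K)=nH(K)$.
\item[3)] The chain rule with the indicator of $\widetilde{\mathcal{B}}^{(n)}_\gamma$ gives $\delta\ge I(\bm X;C^{(n)})\ge Q\,I(\bm X;C^{(n)}|\bm X\in\widetilde{\mathcal{B}}^{(n)}_\gamma)$.
\end{itemize}
That entropic route needs neither key typicality nor any control of negative information density, so the step you flagged as the main obstacle never arises.
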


Proof of Theorem~\ref{th:StConvTh} is given in Section~\ref{prf:thm_Conv}.
Combining Corollary \ref{cor:directTh} and Theorem \ref{th:StConvTh}, we have the following result.
\begin{theorem}\label{th:CodingTh_FF}
Consider the case of $H(X)<H(K)$. In this case 
we have  
$
R^{\star}(p_X, p_K)=R^{\ast}(p_X, p_K)=H(X).
$
Hence for each $(\varepsilon, \delta) \in (0,1) \times (0,\delta_0]$, we have
\begin{align*}
R_{\inf}(p_X, p_K)
= R_{\inf}(\varepsilon,\delta \mid p_X, p_K)
=H(X).
\end{align*}
\end{theorem}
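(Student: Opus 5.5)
The statement follows by combining Corollary~\ref{cor:directTh} with Theorem~\ref{th:StConvTh}, so the plan is mostly bookkeeping.

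First, under the hypothesis $H(X)<H(K)$ I will read off the two thresholds. Definition \eqref{def:Rast2} gives $R^{\ast}(p_X,p_K)=H(X)$, since its first case applies. The inequality $H(X)<H(K)$ also implies $H(X)\leq H(K)$, so \eqref{def:Rast3} gives $R^{\star}(p_X,p_K)=H(X)$. This establishes the first claimed equality $R^{\star}=R^{\ast}=H(X)$.

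Next, I fix an arbitrary $(\varepsilon,\delta)\in(0,1)\times(0,\delta_0]$ and chain the two known bounds. Theorem~\ref{th:StConvTh} gives
\[
H(X)=R^{\star}(p_X,p_K)\leq R_{\inf}(\varepsilon,\delta\mid p_X,p_K)\leq R_{\inf}(p_X,p_K).
\]
Corollary~\ref{cor:directTh} gives
\[
R_{\inf}(\varepsilon,\delta\mid p_X,p_K)\leq R_{\inf}(p_X,p_K)\leq R^{\ast}(p_X,p_K)=H(X).
\]
Putting these together yields
\[
H(X)\leq R_{\inf}(\varepsilon,\delta\mid p_X,p_K)\leq R_{\inf}(p_X,p_K)\leq H(X).
\]
Every inequality in this chain must therefore be an equality, which gives
\[
R_{\inf}(p_X,p_K)=R_{\inf}(\varepsilon,\delta\mid p_X,p_K)=H(X).
\]
Since $(\varepsilon,\delta)$ was arbitrary, this holds for every pair in $(0,1)\times(0,\delta_0]$.

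There is no real obstacle here, because all the substantive work sits in the proofs of Corollary~\ref{cor:directTh} (via the universal code of Theorem~\ref{thm:Directproof}) and of Theorem~\ref{th:StConvTh} (via Lemma~\ref{lem:BirkoffvN} and the information-spectrum argument). The one point to check is that the case conditions line up. The strict inequality $H(X)<H(K)$ is what the direct part needs to select the finite branch of $R^{\ast}$. The converse part's branch condition $H(X)\leq H(K)$ is weaker, so it is automatically satisfied.
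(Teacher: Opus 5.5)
Your proposal is correct and follows the paper's own argument: the paper obtains this theorem by combining Corollary~\ref{cor:directTh} with Theorem~\ref{th:StConvTh}. Under $H(X)<H(K)$ both $R^{\ast}(p_X,p_K)$ and $R^{\star}(p_X,p_K)$ equal $H(X)$, so the sandwich $H(X)\leq R_{\inf}(\varepsilon,\delta\mid p_X,p_K)\leq R_{\inf}(p_X,p_K)\leq H(X)$ forces equality throughout, and your check that the strict inequality selects the finite branch of $R^{\ast}$ (while the converse's condition $H(X)\leq H(K)$ then holds automatically) is the only point needing care.
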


Theorem \ref{th:CodingTh_FF} implies that we have the strong converse property for
$R_{\inf}(\varepsilon,\delta\mid p_X,p_K),$
$(\varepsilon, \delta)\in (0,1)\times (0,\delta_0]$.

\section{Proofs of Theorems \ref{thm:Directproof} and \ref{th:StConvTh}}
In this section, we prove Theorems \ref{thm:Directproof} 
and \ref{th:StConvTh}. 
Theorem \ref{thm:Directproof} corresponds 
to the direct coding theorem.
Theorem \ref{th:StConvTh} corresponds
to the converse coding theorem. 

\subsection{Proof of Theorem 
\ref{thm:Directproof}}\label{prf:thm_Direct}
In this subsection, we prove Theorem \ref{thm:Directproof}. 
We first introduce the definition of 
types used in the proof.

\begin{definition} 
Let $N(x|\bm{x})$ denote the number of occurrences of $x \in \mathcal{X}$ appearing in sequence $\bm{x} = x_1x_2\ldots x_n$.
The probability distribution 
$P_{\bm{x}}$ on ${\cal X}$ defined by
\begin{align*}
  P_{\bm{x}}\coloneqq \left\{P_{\bm{x}}(x)\right\}_{x \in \mathcal{X}} = \left\{\frac{1}{n}N(x|\bm{x})\right\}_{x \in \mathcal{X}}
\end{align*}
is called the type of $\bm{x}$ on ${\cal X}$.
The set that consists of all the
types on
${\cal X}$ is denoted by 
$\mathcal{P}_n({\cal X})$.
For each $P \in \mathcal{P}_n({\cal X})$, we define
$
  T^n(P) \coloneqq \{\bm{x} \in \mathcal{X}^n|P_{\bm{x}}=P\}
$  
as the set of all $\bm{x}$ having type $P$.
\end{definition}


\noindent
\underline{\it Choices of Some Parameters:}\/ For 
a given rate $R>0$, choose $m\in \mathbb{N}$ 
such that
\begin{align}
  m = 
  \left\lfloor\frac{nR_n}{\log |\mathcal{X}|}\right\rfloor, \: R_n=R+\gamma_n.
\label{eqn:ChoosEm}
\end{align}
The choice (\ref{eqn:ChoosEm}) of $m$ implies 
the following:
\begin{align}
&  R_n-\frac{1}{n} 
 \log |\mathcal{X}|
  \leq 
  \frac{m}{n} \log |\mathcal{X}|
   \leq R_n.
\label{eqn:Rconstraint}   
\end{align}

\noindent 
\underline{\textit{Construction of 
$(\phin,\psin)$:}} \/ We first present 
the construction of the code to be used.
We define the union of sets of sequences having types $P$ whose entropy values are smaller than $R$ as follows.
\begin{align}\label{def:CRn}
  \mathcal{C}^{n}(R) 
  \coloneqq \bigcup_{\substack{P \in \mathcal{P}_n({\cal X}): \\ R > H(P)}} T^n(P).
\end{align}
For $\mathcal{C}^{n}(R)$ 
defined above, we have 
the following lemma: 
\begin{lemma} \label{lem:upbCR}
$\forall \gamma > 0$ and $\forall n 
\in \mathbb{N}$, we have 
\begin{align}
&|\mathcal{C}^{n}(R)| 
\leq (n+1)^{|\mathcal{X}|}2^{nR}
\label{eqn:calCupb}\\
& 
\leq \frac{1}{2}|\mathcal{X}|^m
\leq|\mathcal{X}^m|-1.
\label{eqn:calCupc}
\end{align}
\end{lemma}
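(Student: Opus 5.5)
The plan is to establish the three inequalities in turn. The first, \eqref{eqn:calCupb}, is a standard type-counting argument. The second, in \eqref{eqn:calCupc}, follows from the choice \eqref{eqn:ChoosEm} of $m$ together with the slack $\gamma_n$ in \eqref{eqn:ChoicRate}. The last is trivial.

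\textbf{Step 1: the type-counting bound \eqref{eqn:calCupb}.} I will use two standard facts about types:
\[
|\mathcal{P}_n(\mathcal{X})|\leq (n+1)^{|\mathcal{X}|}, \qquad |T^n(P)|\leq 2^{nH(P)} \text{ for every } P\in\mathcal{P}_n(\mathcal{X}).
\]
The second holds because $P^n(\bm{x})=2^{-nH(P)}$ for every $\bm{x}\in T^n(P)$, and these probabilities sum to at most one. By definition \eqref{def:CRn}, $\mathcal{C}^n(R)$ is the union of the $T^n(P)$ with $H(P)<R$. A union bound over these types then gives
\[
|\mathcal{C}^n(R)|\leq \sum_{P:\,H(P)<R}2^{nH(P)}\leq (n+1)^{|\mathcal{X}|}2^{nR}.
\]

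\textbf{Step 2: comparison with $|\mathcal{X}|^m$.} From the left inequality of \eqref{eqn:Rconstraint},
\[
m\log|\mathcal{X}|\geq nR_n-\log|\mathcal{X}| = nR+|\mathcal{X}|\log(n+1)+1,
\]
where the equality expands $nR_n=nR+n\gamma_n$ using \eqref{eqn:ChoicRate}. Hence
\[
|\mathcal{X}|^m=2^{m\log|\mathcal{X}|}\geq 2\cdot(n+1)^{|\mathcal{X}|}2^{nR},
\]
which is exactly $(n+1)^{|\mathcal{X}|}2^{nR}\leq \tfrac12|\mathcal{X}|^m$. The terms $\log|\mathcal{X}|$ and $+1$ inside $\gamma_n$ were put there precisely to absorb, respectively, the floor loss in the definition of $m$ and the factor $\tfrac12$.

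\textbf{Step 3: the final inequality.} Finally, $\tfrac12|\mathcal{X}|^m\leq|\mathcal{X}|^m-1$ is equivalent to $|\mathcal{X}|^m\geq 2$. This holds whenever $m\geq1$ and $|\mathcal{X}|\geq2$, and $m\ge 1$ follows because $nR_n\geq n\gamma_n\geq \log|\mathcal{X}|+1$.

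The free parameter $\gamma$ in the statement plays no role. The only care needed is the bookkeeping in Step 2, and there is no real obstacle.
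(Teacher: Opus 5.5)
Your proof is correct and follows the paper's argument essentially step for step. Both use type counting via $|T^n(P)|\le 2^{nH(P)}$ and $|\mathcal{P}_n(\mathcal{X})|\le (n+1)^{|\mathcal{X}|}$, and then combine $2^{nR_n}\le |\mathcal{X}|^{m+1}$ from \eqref{eqn:Rconstraint} with $2^{n\gamma_n}=2(n+1)^{|\mathcal{X}|}|\mathcal{X}|$; your explicit check that $|\mathcal{X}|^m\ge 2$ fills in the final inequality, which the paper leaves implicit.
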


Proof of Lemma \ref{lem:upbCR} is given in Section~\ref{prf:pos_Kobayashi}.
We construct $(\phin,\psin)$ based on 
this lemma. Let $x_0^m$ be an arbitrary 
element of $\mathcal{X}^m$. 
Due to Lemma \ref{lem:upbCR}, 
there exists a one-to-one mapping 
$\widetilde{\phi}^{(n)}:
\mathcal{C}_{\empty}^{n}(R) 
\to \mathcal{X}^m-\{x_0^m\}$.
Using this mapping, we define the mapping 
$\phin:\mathcal{X}^n\to \mathcal{X}^m$ 
as follows.
\begin{align}\label{def:phi_direct}
  \phinx
  \coloneqq\left\{\begin{array}{cl}
    \widetilde{\phi}^{(n)}(\bm{x}) ,
     &\mbox{ if }\bm{x} \in \mathcal{C}_{\empty}^{n}(R),
    \vspace{0.2cm} \\
    x_0^m, &\mbox{ if }\bm{x}\notin \mathcal{C}_{\empty}^{n}(R).
  \end{array}
  \right.
\end{align}
Moreover, $\psin$ is defined as a mapping such that 
based on the one-to-one mapping
$\widetilde{\phi}^{(n)}:\mathcal{C}_{\empty}^{n}(R) 
\to \mathcal{X}^m$,
each element of $\widetilde{\phi}^{(n)}(\mathcal{C}_{\empty}^{n}(R))$ 
is associated with the corresponding element 
of $\mathcal{C}_{\empty}^{n}(R)$.
For sequences belonging to 
$\mathcal{X}^n-\mathcal{C}_{\empty}^{n}(R)$,
the decoder maps them to $x_0^m$.
The encoding process is shown 
in Fig. \ref{fig:CRn_shazou}.
\begin{figure}[t]
  \centering
  \includegraphics[width=75mm]{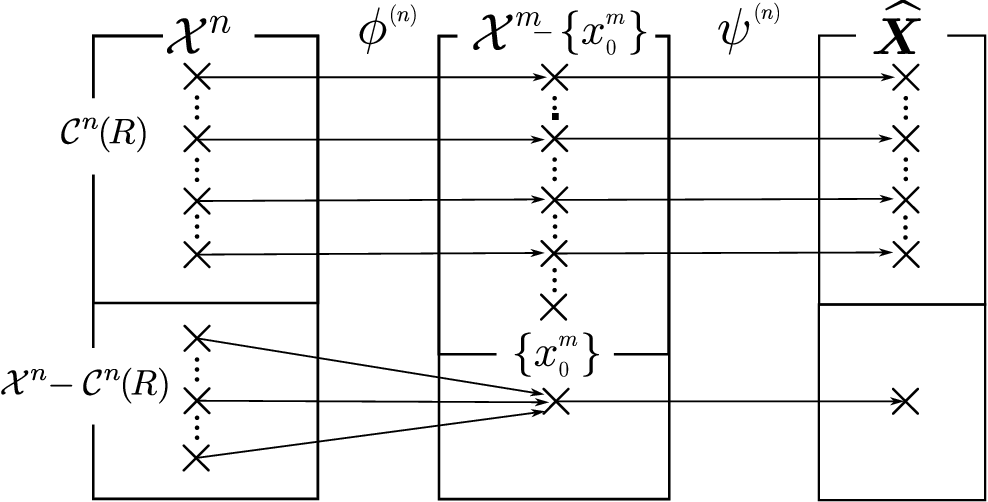}
  \caption{Encoding process 
  based on $\mathcal{C}^{n}(R)$}
  \label{fig:CRn_shazou}
\vspace*{-2mm}
\end{figure}
\begin{figure}[t]
  \centering
  \includegraphics[width=80mm]{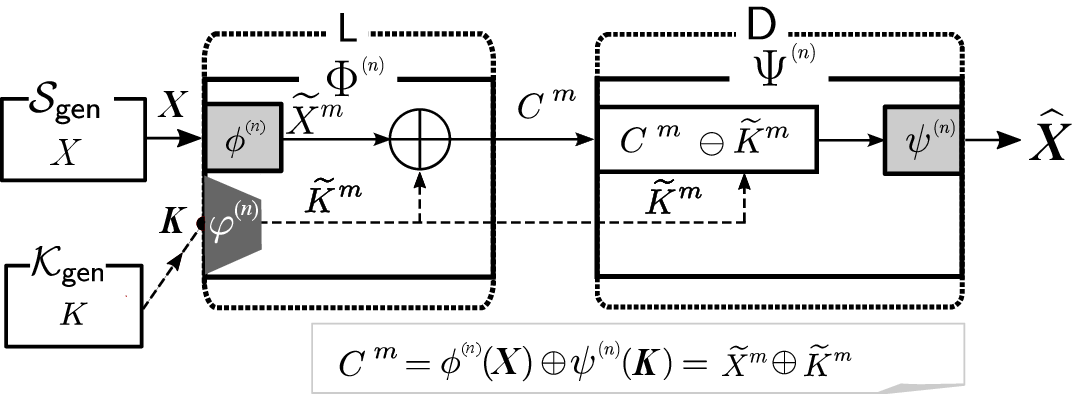}
  \vspace*{5mm}
  \caption{Constructions of $(\Phin,\Psin)$}
  \label{fig:UniversalCoding}
\end{figure}

\noindent \underline{\textit{Affine Encoder:}} \/ 
 Let $A$ be a matrix with $n$
rows and $m$ columns. Entries of $A$ take values 
in $\mathcal{X}$ . Fix $b^m \in \mathcal{X}^m$.
We  define an affine encoder 
$\varphi^{(n)}:\mathcal{X}^n\to\mathcal{X}^m$ 
determined by the matrix $A$ 
and the vector $b^m$ as follows.
\begin{align}
  \varphi^{(n)}(\bm{k}) \coloneqq \bm{k} A\oplus b^{m}.
\end{align}

\noindent \underline{\textit{Encryption Scheme:}}\/ 
Using the code $(\phin,\psin)$ and the affine 
encoder $\varphi^{(n)}$, we define the encoding
scheme as shown in Fig.~\ref{fig:UniversalCoding}.

The constructions of $\Phin$ and $\Psin$ are as follows.
\begin{itemize}
  \item[1)] 
  Define $\Phin:\mathcal{X}^{n} \times \mathcal{X}^{n} 
 \to \mathcal{X}^m$ by
 \begin{align*}
 \Phin(\bm{k},\bm{x})=\varphi^{(n)}(\bm{k})\oplus \phinx 
 \quad \mathrm{for}~\bm{k},\bm{x} \in \mathcal{X}^n.
 \end{align*}
Let $C^m = \Phin(\bm{X},\bm{K})$, 
    $\widetilde{X}^m=\phinX$, 
and $\widetilde{K}^m=\varphi^{(n)}(\bm{K})$.
Then we have 
$C^m=\widetilde{X}^m \oplus \widetilde{K}^m$. 
The ciphertext $C^m$ is sent to the public 
communication channel.
 \item[2)] 
 $\Psin$ receives the ciphertext $C^m$ $=\widetilde{X}^m \oplus \widetilde{K}^m$ and the key $\bm{K}$, respectively, through public and private channels. 
Using $\varphi^{(n)}$, $\Psin$ first encodes 
$\bm{K}$ into $\widetilde{K}^m = \varphi^{(n)}(\bm{K})$.
$\Psin$ next subtracts $\widetilde{K}^m$ from $C^m$ 
to obtain $\widetilde{X}^m = \phinX$.
Finally, $\Psin$ outputs $\widehat{\bm{X}}$ by applying the decoder $\psin$ to $\widetilde{X}^m$.
\end{itemize}

\noindent \underline{\textit{Reliability Analysis:}} \/ 
We have the following proposition related to the error probability of the encryption scheme.
\begin{proposition} \label{pos:Kobayashi}
  For given $ R > 0$,
  $\exists \{(\phin,\psin)$ $\}_{n=1}^{\infty}$ with the choice of $m$ satisfying (\ref{eqn:ChoosEm})
such that $\forall n \in \mathbb{N}$ and 
$\forall p_X\in {\cal P}({\cal X})$, 
 \begin{align*}
  &p_{\mathrm{e}}(\phin,\psin|p_X^n) \leq 
    \Pr\{\mathcal{X}^n -\mathcal{C}^{n} 
    (R)\}\notag\\
    &\leq (n+1)^{|\mathcal{X}|}2^{-nE(R|p_X)}.
  \end{align*}
\end{proposition}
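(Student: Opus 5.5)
The plan is to use the code $(\phin,\psin)$ built above from $\mathcal{C}^{n}(R)$ via the one-to-one map $\widetilde{\phi}^{(n)}$. This code depends only on $R$ and $n$, so whatever bound we obtain holds uniformly in $p_X$. The proof then has two steps: a set inclusion for the error event, and a method-of-types count.

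For the first inequality, I would show $\mathcal{C}^{n}(R)\subseteq \Dn$. For $\bm{x}\in\mathcal{C}^{n}(R)$ we have $\phinx=\widetilde{\phi}^{(n)}(\bm{x})$, and $\psin$ inverts $\widetilde{\phi}^{(n)}$ on its image, so $\psin\circ\phinx=\bm{x}$. Here it matters that $x_0^m$ is excluded from the image of $\widetilde{\phi}^{(n)}$, which Lemma~\ref{lem:upbCR} guarantees. Hence $\{\bm{X}\notin\Dn\}\subseteq\{\bm{X}\in\mathcal{X}^n-\mathcal{C}^{n}(R)\}$, and this gives $p_{\mathrm{e}}(\phin,\psin|p_X^n)\le \Pr\{\mathcal{X}^n-\mathcal{C}^{n}(R)\}$.

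For the second inequality, I would write the complement as the union of type classes $T^n(P)$ with $P\in\mathcal{P}_n(\mathcal{X})$ and $H(P)\ge R$. I would then use the standard facts
\[
p_X^n(T^n(P))\le 2^{-nD(P||p_X)}, \qquad |\mathcal{P}_n(\mathcal{X})|\le (n+1)^{|\mathcal{X}|}.
\]
Each type in the union satisfies $R\le H(P)$, so $D(P||p_X)\ge E(R|p_X)$ by the definition of $E(R|p_X)$ as a minimum over all $P\in\mathcal{P}(\mathcal{X})$ with $R\le H(P)$; this set contains every type in the union. Summing over at most $(n+1)^{|\mathcal{X}|}$ types then yields $(n+1)^{|\mathcal{X}|}2^{-nE(R|p_X)}$.

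None of these steps is a real obstacle. The only point needing care is the boundary case $H(P)=R$. The definition of $\mathcal{C}^{n}(R)$ uses the strict inequality $R>H(P)$, so the complement uses $H(P)\ge R$, which matches the non-strict constraint $R\le H(P)$ in $E(R|p_X)$. I would also note that the bound $p_X^n(T^n(P))\le 2^{-nD(P||p_X)}$ holds without extra polynomial factors. It follows from $p_X^n(\bm{x})=2^{-n[H(P)+D(P||p_X)]}$ for $\bm{x}\in T^n(P)$ together with $|T^n(P)|\le 2^{nH(P)}$.
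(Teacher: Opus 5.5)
Your proposal is correct and follows the paper's proof step for step. The error event is contained in $\mathcal{X}^n-\mathcal{C}^{n}(R)$, this set splits into type classes with $H(P)\ge R$, each class is bounded by $2^{-nD(P||p_X)}$ (Lemma~\ref{lem:type} part c)), minimizing over $\mathcal{P}_n(\mathcal{X})\subseteq\mathcal{P}(\mathcal{X})$ gives $E(R|p_X)$, and $|\mathcal{P}_n(\mathcal{X})|\le(n+1)^{|\mathcal{X}|}$ supplies the prefactor. One small point: excluding $x_0^m$ from the image of $\widetilde{\phi}^{(n)}$ is not actually needed for $\mathcal{C}^{n}(R)\subseteq\Dn$, because injectivity of $\widetilde{\phi}^{(n)}$ alone suffices and sequences outside $\mathcal{C}^{n}(R)$ are already counted as errors.
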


Proof of Proposition~\ref{pos:Kobayashi} is given in Section~\ref{prf:pos_Kobayashi}.

\noindent \underline{\textit{Security Analysis:}} \/ 
With respect to the mutual information of 
the encryption scheme, we have the following 
proposition: 
\begin{proposition}\label{pos:upb_Max-MI}
 $\exists\{\varphi^{(n)}\}_{n=1}^{\infty}$ 
with the choice of $m$ satisfying (\ref{eqn:ChoosEm})
such that $\forall n \in \mathbb{N}$ and 
$\forall p_K\in {\cal P}({\cal X})$, 
\begin{align*}
&   \Delta^{(n)}_{\mathrm{MI}}(\Phin|p_X^n,p_K^n) 
\leq 
m\log |\mathcal{X}|-H(\widetilde{K}^m)
\\
&\leq \left(R_n+\frac{1}{2}\right)(n+1)^{3|{\cal X}|}
     2^{-n[F(R|p_K)-\gamma_n]}
\\
&=(2R_n+1)|{\cal X}|(n+1)^{4|{\cal X}|}
     2^{-nF(R|p_K)}.    
  \end{align*}
\end{proposition}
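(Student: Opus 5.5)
We prove the first inequality with a one-time-pad argument. Since $\widetilde{X}^m=\phinX$ is a function of $\bm{X}$, and $\bm{K}$ is independent of $\bm{X}$, we have $I(C^m;\bm{X})\le H(C^m)-H(C^m|\bm{X})$. First, $H(C^m)\le m\log|\mathcal{X}|$. Second, conditioned on $\bm{X}=\bm{x}$, the map $\widetilde{K}^m\mapsto \widetilde{K}^m\oplus\phinx$ is a bijection, so $H(C^m|\bm{X})=H(\widetilde{K}^m|\bm{X})=H(\widetilde{K}^m)$. This gives $\MI\le m\log|\mathcal{X}|-H(\widetilde{K}^m)$ for every affine encoder and every $p_X$. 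The remaining task is therefore to exhibit one affine map $\varphi^{(n)}(\bm{k})=\bm{k}A\oplus b^m$, depending only on $(n,R)$, whose output is nearly uniform for \emph{every} $p_K$.

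For the second inequality, use a random-coding argument over $(A,b^m)$ drawn uniformly. This is a pairwise-independent (universal$_2$) hash family: for $\bm{k}\neq\bm{k}'$,
\[
\Pr[\varphi(\bm{k})=\varphi(\bm{k}')]=|\mathcal{X}|^{-m},
\]
and each $\varphi(\bm{k})$ is uniform. Bound the entropy deficit through divergence:
\[
m\log|\mathcal{X}|-H(\widetilde{K}^m)=D(p_{\widetilde{K}^m}\|U_m)\le \log\bigl(|\mathcal{X}|^m\textstyle\sum_c p_{\widetilde{K}^m}(c)^2\bigr).
\]
Do not use collision probability globally, since $H_2$ of $p_K^n$ is too small. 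Instead split by types, as in Oohama--Santoso's universal privacy amplification.

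Decompose $p_K^n=\sum_{P}p_K^n(T^n(P))\,U_{T^n(P)}$. Use the bound $D(\sum_P \lambda_P Q_P\|U)\le \sum_P\lambda_P \min\{m\log|\mathcal{X}|,\ \log(1+|\mathcal{X}|^m/|T^n(P)|)\}$. The intended justification is the expected-collision bound applied to each type uniform distribution, together with the fact that a type class has $|T^n(P)|\ge (n+1)^{-|\mathcal{X}|}2^{nH(P)}$, so that $\log(1+x)\le x\log e$. This yields an expected bound of order
\[
\sum_P p_K^n(T^n(P))\min\{m\log|\mathcal{X}|,\ 2^{m\log|\mathcal{X}|-nH(P)}(n+1)^{|\mathcal{X}|}\}.
\]
Apply $p_K^n(T^n(P))\le 2^{-nD(P\|p_K)}$, $m\log|\mathcal{X}|\le nR_n$, and $\min\{a,b\}\le \max\{a,1\}\cdot\min\{1,b\}$. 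This gives at most $(nR_n+1)(n+1)^{2|\mathcal{X}|}2^{-n\min_P\{[H(P)-R_n]^++D(P\|p_K)\}}$, summed over at most $(n+1)^{|\mathcal{X}|}$ types.

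Since $[H(P)-R_n]^+\ge [H(P)-R]^+-\gamma_n$, this bound is at most $(R_n+\tfrac12)(n+1)^{3|\mathcal{X}|}2^{-n[F(R|p_K)-\gamma_n]}$ after absorbing $n\le (n+1)$. The final equality follows by substituting $2^{n\gamma_n}=(n+1)^{|\mathcal{X}|}\cdot 2|\mathcal{X}|$ from \eqref{eqn:ChoicRate}.

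\textbf{Main obstacle: universality.} The averaged bound holds for each fixed $p_K$, but we need a single $(A,b^m)$ that works for all $p_K$ simultaneously. The plan is to note that the bound depends on $p_K$ only through the coefficients $p_K^n(T^n(P))$, and that each per-type term $\mathbb{E}\,D(\varphi(U_{T^n(P)})\|U_m)$ is a quantity independent of $p_K$. Apply Markov's inequality to the sum over types of each per-type term normalized by its expectation, which has total expectation at most $|\mathcal{P}_n(\mathcal{X})|$. This gives a single $(A,b^m)$ for which every per-type term is within a factor $(n+1)^{|\mathcal{X}|}$ of its mean. The extra polynomial factor is exactly what the gap between $(n+1)^{3|\mathcal{X}|}$ and the naive count accounts for.

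A secondary point is justifying the mixture convexity step. Here $D(\cdot\|U)$ is convex, and the divergence of the pushed-forward mixture is at most the weighted sum of the per-type divergences.
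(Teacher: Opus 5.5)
Your proposal is essentially the paper's proof. The steps correspond as follows:
\begin{itemize}
\item the one-time-pad reduction to $m\log|\mathcal X|-H(\widetilde K^m)=D(p_{\widetilde K^m}\|p_{U^m})$ is Lemma~\ref{lem:UnivCMaxMI};
\item the type decomposition with convexity is Lemma~\ref{lem:UnivCMaxMIb};
\item the per-type expected bound $\Theta_n(P)$ is Lemma~\ref{lem:upbED};
\item the averaging argument on $\sum_P D(\Omega_{P;\varphi^{(n)}}\|p_{U^m})/\Theta_n(P)$, which yields one $p_K$-independent $\varphi^{(n)}$, is Corollary~\ref{cor:MI_upb};
\item the final combination uses $p_K^n(T^n(P))\le 2^{-nD(P\|p_K)}$ and $[H(P)-R_n]^+\ge[H(P)-R]^+-\gamma_n$.
\end{itemize}
The only real variation is how you prove Lemma~\ref{lem:upbED}. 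You use $D\le\log(|\mathcal X|^m\sum_c\Omega_{P;\varphi^{(n)}}(c)^2)$ and then Jensen over the hash. The paper instead applies Jensen to each term conditioned on $\varphi(\bm k)=l$. Both give exactly $\log[1+(|\mathcal X|^m-1)|T^n(P)|^{-1}]=\Theta_n(P)$, and your route needs only the collision (universal$_2$) property of the affine family.

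There is one bookkeeping slip in the constants. You use exponent $|\mathcal X|$ in the type-class bound, and a factor $(n+1)^{|\mathcal X|}$ for both the averaging step and the sum over types. With those counts you arrive at $(nR_n+1)(n+1)^{3|\mathcal X|}$, times $\log e$. Since $nR_n+1>R_n+\frac12$, absorbing $n$ costs one more factor $n+1$ than the claimed $(n+1)^{3|\mathcal X|}$ allows.

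To land exactly on the stated constant, use the exponent $|\mathcal X|-1$ from Lemma~\ref{lem:type} in the type-class bound. Combining $\Theta_n(P)\le nR_n$ with $\Theta_n(P)\le \log e\,(n+1)^{|\mathcal X|-1}2^{-n[H(P)-R_n]}$ gives
$\Theta_n(P)\le\max\{nR_n,\log e\,(n+1)^{|\mathcal X|-1}\}\,2^{-n[H(P)-R_n]^+}\le(R_n+\frac12)(n+1)^{|\mathcal X|}2^{-n[H(P)-R_n]^+}$.
This uses two facts:
\begin{itemize}
\item $nR_n\ge n\gamma_n\ge 1$, so $\log e\le nR_n+1$;
\item $nR_n+1\le(n+1)(R_n+\frac12)$.
\end{itemize}
The two remaining factors $|\mathcal P_n(\mathcal X)|\le(n+1)^{|\mathcal X|}$, one from averaging and one from summation, then give exactly $(n+1)^{3|\mathcal X|}$.

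Your $\log e$ is the correct factor for base-2 logarithms. The paper's step $\log(1+a)\le a$ in the proof of Lemma~\ref{lem:Theta_upb} drops it, which is harmless because of this same slack.
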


Proof of Proposition~\ref{pos:upb_Max-MI} is given in Section~\ref{prf:pos_upb_Max-MI}.

\textit{Proof of Theorem~\ref{thm:Directproof}:}
For each $n \in \mathbb{N}$, we take $(\Phin,\Psin)$, 
the construction of which we have described in the explanation on the encryption scheme. 
For this $(\Phin,\Psin)$, we have 
$$
\frac{1}{n}\log |{\cal C}^{(n)}|=
\frac{1}{n}\log |{\cal X}^m|\MIn{a} 
\left[R_n- \frac{1}{n}\log |{\cal X}|, R_n\right].
$$
Step (a) follows from (\ref{eqn:Rconstraint}). 
Hence we have (\ref{eqn:DirectThRate}) 
in Theorem \ref{thm:Directproof}.
By Propositions~\ref{pos:Kobayashi} 
and~\ref{pos:upb_Max-MI}, respectively, we have 
(\ref{eqn:ErrUbd}) and (\ref{eqn:SecUbd})
in Theorem~\ref{thm:Directproof}.
\hfill \IEEEQED

\subsection{Proof of Theorem~\ref{th:StConvTh}}\label{prf:thm_Conv}
In this subsection, we prove Theorem~\ref{th:StConvTh}.
To prove this theorem it suffices to show that under the assumption that $R$ is $(\varepsilon,\delta)$-admissible we have $R \geq R^{\star}(p_X,p_K)$.
From (\ref{def:Rast2}), $R \geq R^{\star}(p_X,p_K)$ 
is equivalent to the following:
\begin{align}
  &R \geq H(X),\:
  H(K) \geq H(X).\label{RKgeqX}
\end{align}
\newcommand{\commentoutE}{
In the following we prove Theorem~\ref{th:StConvTh}.

\noindent \underline{Proof of Theorem\ref{th:StConvTh} 
Using Change of Measure}\par
  We present a proposition necessary for proof 
  of Theorem \ref{th:StConvTh}. 
  \begin{proposition}\label{pos:HX_HK_com}
 Let $\displaystyle p_{\min} \coloneqq \min_{x \in \mathcal{X}} p_X(x)$. 
 If $R$ is $(\varepsilon,\delta)$-admissible, 
 we have the following:
  \begin{align*}
     H(K) \geq &H(X) - 3\biggl[\frac{1}{2}\left(\log \frac{1}{p_{\min}}\right)^2\log|\mathcal{X}|\biggr]^{\frac{1}{3}}\\
     &\times \biggl[\frac{1}{n}
      \log \frac{1}{1 - \varepsilon}\biggr]^{\frac{1}{3}}
      -\frac{\varepsilon}{n(1-\delta)}.
      \end{align*}
    \end{proposition}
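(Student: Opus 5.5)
The plan is to combine the injectivity structure from Property~\ref{per:Dn} and Lemma~\ref{lem:BirkoffvN} with a change of measure. The change of measure replaces $p_X^n$ by its restriction to the correct-decoding set $\Dn$. This is what makes the bound insensitive to how large $\varepsilon$ is, which is why the $\varepsilon$-dependence enters only through $\frac1n\log\frac{1}{1-\varepsilon}$.

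\textbf{Step 1 (change of measure).} Define
\[
q_{\bm{X}}(\bm{x})\coloneqq \frac{p_X^n(\bm{x})\,\mathbf{1}\{\bm{x}\in\Dn\}}{p_X^n(\Dn)} .
\]
Since $p_X^n(\Dn)\ge 1-\varepsilon$, we get $D(q_{\bm{X}}\|p_X^n)=\log\frac{1}{p_X^n(\Dn)}\le \log\frac{1}{1-\varepsilon}$.

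We write $H(q_{\bm{X}})=\mathrm{E}_q[-\log p_X^n(\bm{X})]-D(q_{\bm{X}}\|p_X^n)$. The aim is to show
\[
\frac1n \mathrm{E}_q[-\log p_X^n(\bm{X})]\ \ge\ H(X)-\eta_n ,
\qquad
\eta_n=3\Bigl[\tfrac12\bigl(\log\tfrac1{p_{\min}}\bigr)^2\log|\mathcal{X}|\Bigr]^{1/3}\Bigl[\tfrac1n\log\tfrac1{1-\varepsilon}\Bigr]^{1/3}.
\]

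\textbf{Step 2 (the information-spectrum estimate).} This is the main obstacle, and here is what I would try first. Consider the normalized log-likelihood $\frac1n\log\frac{1}{p_X^n(\bm{X})}$, which lies in $[0,\log\frac1{p_{\min}}]$. Fix a threshold $a>0$ and split the $q$-expectation on the event $\{\frac1n\log\frac1{p_X^n(\bm{X})}\ge H(X)-a\}$ and its complement.

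The complement has $p_X^n$-probability at most $2^{-n\,c\,a^2/(\log(1/p_{\min}))^2}$ by Hoeffding's inequality, using the bounded range. I then transfer this to $q$. The data-processing inequality for divergence, applied to the binary indicator, bounds the $q$-probability of the complement by roughly
\[
\frac{\log\frac1{1-\varepsilon}+1}{n\,c\,a^2/(\log\frac1{p_{\min}})^2}.
\]
Alternatively one can use the direct bound $q(\cdot)\le p(\cdot)/(1-\varepsilon)$.

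The loss from the complement is this probability times $\log\frac1{p_{\min}}$ (or $\log|\mathcal{X}|$). Adding the loss $a$ from the threshold and optimizing $a$ gives a cube-root dependence; the factor $3$ in $\eta_n$ comes from balancing $a$ against $\mathrm{const}/a^2$. The $\log|\mathcal{X}|$ factor is the bound $H(q_{\bm X})\le n\log|\mathcal{X}|$ used on the bad event. I expect to need some care with constants, but the $1/3$ exponent should come out this way.

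\textbf{Step 3 (key entropy bound).} Let $C^{(n)}$ be generated from $\bm{X}\sim q_{\bm X}$ and the independent key $\bm K$. By Property~\ref{per:Dn}, for each fixed $\bm k$ the map $\bm x\mapsto\Phi^{(n)}_{\bm k}(\bm x)$ is injective on $\Dn$. Hence, under $q$, $\bm X=\Psi^{(n)}_{\bm K}(C^{(n)})$ is a function of $(\bm K,C^{(n)})$. This gives
\[
H_q(\bm X\mid C^{(n)})\le H(\bm K\mid C^{(n)})\le nH(K).
\]
Lemma~\ref{lem:BirkoffvN} guarantees that the conditional laws restricted to $\Dn$ remain sub-stochastic, so this bookkeeping is valid.

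\textbf{Step 4 (relating leakage under $q$ to leakage under $p$).} I would decompose $I(C^{(n)};\bm X)$ under $p$ by conditioning on the indicator $\mathbf{1}\{\bm X\in\Dn\}$. This gives
\[
I_q(C^{(n)};\bm X)\le \frac{I(C^{(n)};\bm X)+h(\cdot)}{p_X^n(\Dn)} ,
\]
and the right-hand side is at most $\delta$ plus a correction. Normalizing by $n$, this correction is the term $\varepsilon/(n(1-\delta))$ in the statement.

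\textbf{Step 5 (conclusion).} Combine the pieces:
\[
nH(K)\ \ge\ H_q(\bm X\mid C^{(n)})\ =\ H(q_{\bm X})-I_q(C^{(n)};\bm X)\ \ge\ n\bigl(H(X)-\eta_n\bigr)-\log\tfrac1{1-\varepsilon}-I_q(C^{(n)};\bm X).
\]
Dividing by $n$ gives the stated inequality; the terms of order $\frac1n\log\frac1{1-\varepsilon}$ are absorbed into the cube-root term. Letting $n\to\infty$ then yields $H(K)\ge H(X)$, which is the second part of \eqref{RKgeqX}.
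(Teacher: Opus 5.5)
Your skeleton matches the paper's proof. You restrict $p_X^n$ to $\Dn$, use Property~\ref{per:Dn} to get $H(\bm X\mid \bm K,C^{(n)},\bm X\in\Dn)=0$ and hence a conditional entropy of at most $nH(K)$, and pass from conditional to unconditional leakage through the indicator of $\Dn$. However, Step 2 and the final Step 4--5 bookkeeping do not deliver the stated inequality.

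\textbf{Step 2 cannot yield $\eta_n$.} For fixed $n$, $\eta_n\to 0$ as $\varepsilon\to 0$, but a threshold-plus-tail argument leaves a residual that does not vanish with $\varepsilon$.
With the data-processing transfer you get $q(\text{bad})\le(\log\frac{1}{1-\varepsilon}+1)/(nca^2/(\log\frac{1}{p_{\min}})^2)$. The $+1$ (binary entropy) makes the optimized deficit of order $n^{-1/3}$ even as $\varepsilon\to 0$.
With $q\le p_X^n/(1-\varepsilon)$ the deficit is $\min_a\{a+\frac{\log|\mathcal X|}{1-\varepsilon}2^{-nca^2/(\log(1/p_{\min}))^2}\}$, which is again bounded away from $0$ at fixed $n$.
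This is not a matter of constants: tail bounds discard the exact identity $\mathrm{E}_{p}[-\log p_X^n(\bm X)]=nH(X)$.
Moreover, in Step 5 you subtract a further $\frac1n\log\frac1{1-\varepsilon}$ for $D(q_{\bm X}\|p_X^n)$ and say it is absorbed into $\eta_n$. But $\eta_n$ was already used up in Step 2, so there is no slack.

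The paper instead invokes the Tyagi--Watanabe change-of-measure lemma, which lower-bounds $\frac1nH(\widetilde{\bm X})$ itself. It starts from $H(q_{\bm X})=\sum_t[H(q_t)+D(q_t\|p_X)]-D(q_{\bm X}\|p_X^n)$ and uses $\sum_tD(q_t\|p_X)\le D(q_{\bm X}\|p_X^n)\le\log\frac1{1-\varepsilon}$. This gives, for every $\alpha>1$, $\frac1nH(\widetilde{\bm X})\ge\min_{\tilde p}[H(\tilde p)+\alpha D(\tilde p\|p_X)]-\frac{\alpha}{n}\log\frac1{1-\varepsilon}$. A second lemma then optimizes $\alpha$ to produce the cube-root term. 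The divergence is handled inside this bound, so nothing needs absorbing, and the deficit vanishes with $\varepsilon$.

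\textbf{The last term is asserted, not derived.} Because the indicator is a function of $\bm X$, you have $I(\bm X;C^{(n)})=I(\bm X,\mathbf{1}\{\bm X\in\Dn\};C^{(n)})\ge I(\bm X;C^{(n)}\mid\mathbf{1}\{\bm X\in\Dn\})\ge p_X^n(\Dn)\,I_q(\bm X;C^{(n)})$. Hence $I_q\le\delta/(1-\varepsilon)$, with no $h(\cdot)$ term. In Step 5 all of $I_q/n$ is subtracted, which gives $-\frac{\delta}{n(1-\varepsilon)}$; this is exactly what the paper's own proof obtains.
The $\frac{\varepsilon}{n(1-\delta)}$ in the displayed statement appears to be a transposition of $\varepsilon$ and $\delta$ (note that $\delta\in(0,\delta_0]$ may exceed $1$). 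Nothing in your Step 4 produces it.
As written, your argument gives $H(K)\ge H(X)-O\bigl(((1+\log\frac{1}{1-\varepsilon})/n)^{1/3}\bigr)-\frac1n\log\frac1{1-\varepsilon}-\frac{\delta}{n(1-\varepsilon)}$. That suffices for $H(K)\ge H(X)$ as $n\to\infty$, but it is not the proposition.
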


  We can prove Proposition \ref{pos:HX_HK_com} by using an argument of change of measure developed by Tyagi and Watanabe \cite{Change_of_measure}.
  Proof of Proposition~\ref{pos:HX_HK_com} is given in Section~\ref{prf:HX_HK_com}.
}
We define the following:
\begin{align*}
&\widetilde{\mathcal{A}}^{(n)}_{\gamma} 
 \coloneqq \left\{\bm{x}:\frac{1}{n} \log \frac{1}{p_X^n(\bm{x})} 
 \geq H(X) - \gamma \right\},
 \\
&
\widetilde{\mathcal{B}}^{(n)}_{\gamma} 
 \coloneqq \widetilde{\mathcal{A}}^{(n)}_{\gamma} \cap \Dn,\\
  &\nu_n(\gamma) \coloneqq 
 p_X^n\left((\widetilde{\mathcal{A}}^{(n)}_{\gamma})^c\right), \quad 
  \widetilde{\nu}_n(\gamma,\varepsilon)
  \coloneqq \nu_n(\gamma) + \varepsilon, \\
  &\zeta_n(\gamma, \varepsilon, \delta) \coloneqq \frac{1}{n} 
     \left[ \frac{\varepsilon}{1-\widetilde{\nu}_n(\gamma, \varepsilon)} 
    + \log\frac{1}{1-\widetilde{\nu}_n(\gamma,\varepsilon)}\right]. 
\end{align*}
By the large deviation principle, for fixed $\gamma > 0$, 
$\nu_n(\gamma)$ goes to zero exponentially as $n \to \infty$.
Moreover, we have the following upper bound 
for $\widetilde{\mathcal{B}}^{(n)}_{\gamma}$:
\begin{align}\label{upb:Pr_B^c}
  p_X^n\left((\widetilde{\mathcal{B}}^{(n)}_{\gamma})^c\right) 
  &\leq   
  p_X^n\left((\widetilde{\mathcal{A}}^{(n)}_{\gamma})^c\right) 
  + p_X^n\left((\Dn)^c\right)\notag\\
  &\MLeq{a} \widetilde{\nu}_n(\gamma, \varepsilon).
\end{align}
Step (a) follows from $p_X^n\left((\Dn)^c\right) 
= p_{\mathrm{e}}(\phin,\psin|p_X^n)\leq\varepsilon$.
Then, we have the following proposition.
    \begin{proposition}\label{pos:Oohama25_1}
      Fix any $(\varepsilon, \delta) \in 
      (0,1) \times (0, \delta_0]$.
      Then, for all $\gamma >0$, there 
      exists $n_0 = n_0(\gamma)$ 
      such that for all $n \geq n_0$, we have the following:
      \begin{align*}
        H(K) \geq H(X) + \gamma + \zeta_n(\gamma, \delta, \varepsilon).
      \end{align*}
    \end{proposition}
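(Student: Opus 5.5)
The plan is to obtain the bound from a chain of entropy inequalities for the ciphertext $C^{(n)}=\Phin(\bm{K},\bm{X})$. The chain links $nH(K)$ to $H(C^{(n)})$, and then links $H(C^{(n)})$ to $n(H(X)-\gamma)$ through the set $\widetilde{\mathcal{B}}^{(n)}_{\gamma}$ and Lemma~\ref{lem:BirkoffvN}. The correction terms are then collected into $\zeta_n(\gamma,\varepsilon,\delta)$ and the sign convention of the statement.

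\emph{Step 1 (key side).} Since $\bm{K}$ is independent of $\bm{X}$ and $C^{(n)}$ is a function of $(\bm{K},\bm{X})$, we have
\begin{align*}
nH(K)=H(\bm{K})\geq H(C^{(n)}\mid \bm{X}) = H(C^{(n)})-I(C^{(n)};\bm{X}) \geq H(C^{(n)})-\delta .
\end{align*}
The last step uses $\MI\leq\delta$ from admissibility.

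\emph{Step 2 (ciphertext side).} Define the sub-probability measure $q(c):=\sum_{\bm{x}\in\widetilde{\mathcal{B}}^{(n)}_{\gamma}}p_X^n(\bm{x})\,p_{C^{(n)}|\bm{X}}(c|\bm{x})\leq p_{C^{(n)}}(c)$. Every $\bm{x}\in\widetilde{\mathcal{A}}^{(n)}_{\gamma}$ satisfies $p_X^n(\bm{x})\leq 2^{-n(H(X)-\gamma)}$. Since $\widetilde{\mathcal{B}}^{(n)}_{\gamma}\subseteq\Dn$, Lemma~\ref{lem:BirkoffvN} then gives $q(c)\leq 2^{-n(H(X)-\gamma)}$ for every $c$. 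By \eqref{upb:Pr_B^c}, the total mass satisfies $q(\mathcal{C}^{(n)})\geq 1-\widetilde{\nu}_n(\gamma,\varepsilon)$. Normalizing $q$ to a probability distribution $\bar q$ and using concavity of entropy under the mixture $p_{C^{(n)}}=q+(p_{C^{(n)}}-q)$ yields
\begin{align*}
H(C^{(n)})\geq q(\mathcal{C}^{(n)})\,H(\bar q)\geq (1-\widetilde{\nu}_n)\Bigl[n(H(X)-\gamma)-\log\frac{1}{1-\widetilde{\nu}_n}\Bigr].
\end{align*}
Alternatively, one can work directly with $-\sum_c p_{C^{(n)}}(c)\log p_{C^{(n)}}(c)$ restricted to the support of $q$; I would try whichever gives the cleaner constants.

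\emph{Step 3 (assembly).} Combining Steps 1 and 2 and dividing by $n(1-\widetilde{\nu}_n)$ gives a relation between $H(K)/(1-\widetilde{\nu}_n)$ and $H(X)$ with additive terms $\gamma$ and $\frac{1}{n}\bigl[\frac{\varepsilon}{1-\widetilde{\nu}_n}+\log\frac{1}{1-\widetilde{\nu}_n}\bigr]$. Here the $\delta$ contribution is absorbed in the first term via the admissibility constants ($\delta\leq\delta_0$), which is exactly where the $\varepsilon/(1-\widetilde{\nu}_n)$ piece of $\zeta_n$ arises. The next move is to use that $\nu_n(\gamma)\to0$ exponentially for fixed $\gamma$, choosing $n_0(\gamma)$ so that the multiplicative factor $1-\widetilde{\nu}_n$ can be traded for an additive term of the form already contained in $\zeta_n$. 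This puts the bound in the form displayed in Proposition~\ref{pos:Oohama25_1}, with $\gamma$ and $\zeta_n(\gamma,\delta,\varepsilon)$ appearing as stated.

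The main obstacle is Step 3. Converting the multiplicative factor $(1-\widetilde{\nu}_n)$ and the leakage $\delta$ into exactly the additive quantities $\gamma$ and $\zeta_n(\gamma,\varepsilon,\delta)$, with the orientation of signs as written, requires careful bookkeeping. In particular, Step 2 naturally produces $H(X)-\gamma$ with penalty terms, so matching the stated arrangement of $+\gamma+\zeta_n$ is where I expect to spend the effort. I would first attempt it by bounding $\widetilde{\nu}_n\,nH(X)$ using the exponential decay of $\nu_n(\gamma)$ for $n\geq n_0(\gamma)$, and then regrouping terms.
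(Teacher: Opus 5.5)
There is a genuine gap: your Steps 1 and 2 are individually correct, but chaining them cannot give the required bound, and no bookkeeping in Step 3 can repair it.

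\textbf{Where the argument fails.} Step 1 is unconditional, $H(C^{(n)})\le nH(K)+\delta$. Step 2 controls only the component of $p_{C^{(n)}}$ of mass $Q=p_X^n(\widetilde{\mathcal{B}}^{(n)}_{\gamma})$, because you discard the rest in the concavity step. Combining them gives
\[
H(K)\ \ge\ Q\bigl(H(X)-\gamma\bigr)-\frac{1}{n}\Bigl[\delta+Q\log\frac{1}{Q}\Bigr],
\]
so you lose $(1-Q)(H(X)-\gamma)$. The only available control is $1-Q\le\widetilde{\nu}_n(\gamma,\varepsilon)=\nu_n(\gamma)+\varepsilon$, and only $\nu_n(\gamma)$ decays. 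Since $\varepsilon\in(0,1)$ is fixed, the loss tends to $\varepsilon(H(X)-\gamma)$, not to zero. Your limit is therefore only $H(K)\ge(1-\varepsilon)H(X)$, a weak-converse-type statement. Using the decay of $\nu_n(\gamma)$ to bound $\widetilde{\nu}_n\,nH(X)$ cannot help, because $\widetilde{\nu}_n\,nH(X)\ge\varepsilon nH(X)$.

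Nor can Step 2 alone be sharpened to $H(C^{(n)})\ge n(H(X)-\gamma)-O(1)$. An admissible code may map every $\bm{x}\notin\mathcal{D}^{(n)}$ to one fixed ciphertext. With a small fixed error probability $\varepsilon'\le\varepsilon$, this adds at most the binary entropy of $\varepsilon'$ to the leakage. Then $H(C^{(n)})\approx(1-\varepsilon')\log|\mathcal{C}^{(n)}|$, which falls short of $n(H(X)-\gamma)$ by a term linear in $n$ when the rate is near $H(X)$ and $\gamma$ is small. In such a code $H(C^{(n)}\mid\bm{X})$ is also far below $nH(K)$, so the two bounds must be paired on the same event.

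\textbf{The missing idea.} The paper conditions \emph{both} sides on $\{\bm{X}\in\widetilde{\mathcal{B}}^{(n)}_{\gamma}\}$. With $\xi$ the indicator of this event,
\[
\delta\ge I(\bm{X};C^{(n)})=I(\bm{X},\xi;C^{(n)})\ge I(\bm{X};C^{(n)}\mid\xi)\ge Q\,I(\bm{X};C^{(n)}\mid\bm{X}\in\widetilde{\mathcal{B}}^{(n)}_{\gamma}),
\]
which is Lemma~\ref{lem:upb_cndI}. The conditional mutual information equals $H(\bar q)-H(C^{(n)}\mid\bm{X},\bm{X}\in\widetilde{\mathcal{B}}^{(n)}_{\gamma})$. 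The second term is at most $nH(K)$, because $\bm{K}$ stays independent of $\bm{X}$ given an event determined by $\bm{X}$ (Lemma~\ref{lem:lwb_H(Ctilde|X)}). Your normalized $\bar q$ is exactly the paper's $q_{\widetilde{C}^{(n)}}$, and your bound $H(\bar q)\ge n(H(X)-\gamma)-\log(1/Q)$ is Lemmas~\ref{lem:upb_qtildeC} and~\ref{lem:lwb_H(Ctilde)}, so that half of your argument survives. Now $Q$ multiplies only $\delta$:
\[
H(K)\ \ge\ H(X)-\gamma-\frac{1}{n}\Bigl[\frac{\delta}{Q}+\log\frac{1}{Q}\Bigr],\qquad Q\ge 1-\widetilde{\nu}_n(\gamma,\varepsilon).
\]
Here $n_0(\gamma)$ serves admissibility and keeps $1-\widetilde{\nu}_n$ bounded away from $0$.

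In your unconditional language, the same repair is to keep $(1-Q)H(\bar r)$ in the concavity step. By concavity it is at least $\sum_{\bm{x}\notin\widetilde{\mathcal{B}}^{(n)}_{\gamma}}p_X^n(\bm{x})H(C^{(n)}\mid\bm{X}=\bm{x})$, which cancels that part of $H(C^{(n)}\mid\bm{X})$. The remaining part is then at most $Q\,nH(K)$ rather than $nH(K)$.

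\textbf{The target form.} The displayed bound above, with $-\gamma-\zeta_n$ and with $\delta$ rather than $\varepsilon$ in the first term of $\zeta_n$, is what the paper's proof establishes and what the converse uses. The printed form with $+\gamma+\zeta_n$ is a typo and cannot hold, e.g.\ for $\gamma>\log|\mathcal{X}|$. There is also no legitimate way to turn the $\delta$ term into $\varepsilon/(1-\widetilde{\nu}_n)$ via $\delta\le\delta_0$. Do not aim Step 3 at that form.
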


Proof of Proposition~\ref{pos:Oohama25_1} 
is given in Section~\ref{prf:pos_Oohama25_1}.
Under these preparations, we prove 
Theorem~\ref{th:StConvTh}.
    \newcommand{\commentout}{
 \begin{remark}
   The method using an extension of 
   the {\rm Birkhoff--von Neumann} 
   theorem requires assuming the universality of the key, 
   and hence it is necessary to 
   require $\Dn=\mathcal{D}_{\bm{k}}^{(n)}$.
      \end{remark}
    }

\textit{Proof of Theorem \ref{th:StConvTh}:} 
We assume $R$ is $(\varepsilon,\delta)$-admissible. 
Then there exists $\{(\Phin,\Psin)\}^{\infty}_{n=1}$ 
such that $\forall \gamma >0$,
$\exists n_0=n_0(\gamma) \in \mathbb{N}$, 
$\forall n\geq n_0$, we have the following:
\begin{align*}
  &\frac{1}{n} \log |{\mathcal{C}}^{(n)}| \leq R+ \gamma,
  ~p_{\mathrm{e}}(\phin,\psin|p_X^n) \leq \varepsilon,\\
  &\Delta^{(n)}_{\mathrm{MI}}(\Phin|p_X^n,p_K^n)
   =I(C^{(n)};\bm{X})
  \leq \delta. 
\end{align*}
We first prove the first bound in (\ref{RKgeqX}) 
in Theorem \ref{th:StConvTh}.
Assume that $R < H(X)$. Since we have 
the strong converse theorem 
for the source coding of discrete memoryless sources, we 
have that    
for any $\{(\phin,\psin)\}^{\infty}_{n=1}$,
for all $\tau \in (0,1)$ and all $\gamma > 0$, 
there exists $n_1 = n_1(\tau,\gamma) > 0$ such 
that for all $n \geq n_1$, we have the following:
\begin{align} \label{gyaku5}
  {p_{\mathrm{e}}(\phin,\psin|p_X^n)} \geq 1-\tau.
\end{align}
In (\ref{gyaku5}), by choosing $\tau = 1-\sqrt{\varepsilon}$, we have the following:
\begin{align} \label{gyaku6}
  {p_{\mathrm{e}}(\phin,\psin|p_X^n)} \geq \sqrt{\varepsilon} \MG{a} \varepsilon.
\end{align}
Step (a) follows from $\varepsilon = (1-\tau)^2 \in (0,1)$. 
Equation~(\ref{gyaku6}) contradicts $p_{\mathrm{e}}(\phin,\psin|p_X^n) \leq \varepsilon$.
Therefore, if $R$ is $(\varepsilon,\delta)$-admissible, we have $R \geq H(X)$.

\newcommand{\commentoutF}{
Next, we show (\ref{KgeqX}).
From Proposition~\ref{pos:HX_HK_com}, we have the following:
      \begin{align}\label{gyaku7}
        H(K) \geq &H(X) - 3\biggl[\frac{1}{2}\left(\log \frac{1}{p_{\min}}\right)^2\log|\mathcal{X}|\biggr]^{\frac{1}{3}}\notag\\
        &\times \biggl[\frac{1}{n}\log \frac{1}{1 - \varepsilon}\biggr]^{\frac{1}{3}}- \frac{\varepsilon}{n(1-\delta)}.
      \end{align}
Taking $n \to \infty$ in (\ref{gyaku7}), we have $H(K) \geq H(X)$.
Since $R$ was arbitrary, it follows that $R_{\inf}(p_X,p_K) \geq R^{\ast}(p_X,p_K)$.
\QED
}

We next show the second bound in (\ref{RKgeqX}). From Proposition~\ref{pos:Oohama25_1}, 
we have
\begin{align}\label{gyaku13}
        H(K) \geq H(X) + \gamma + \zeta_n(\gamma, \delta, \varepsilon).
      \end{align}
Taking $n \to \infty$ in (\ref{gyaku13}), we have 
$
  H(K) \geq H(X) + \gamma.$
Since $\gamma$ can be made arbitrarily small, we have $H(K) \geq H(X)$.
\hfill\IEEEQED
 
\section{Proofs of Lemma \ref{lem:upbCR}, Propositions \ref{pos:Kobayashi} and \ref{pos:upb_Max-MI}}
\subsection{Proofs of Lemma \ref{lem:upbCR} and Proposition~\ref{pos:Kobayashi}}
\label{prf:pos_Kobayashi}
In this subsection, we prove 
Lemma \ref{lem:upbCR} and  Proposition~\ref{pos:Kobayashi}.
We present below the properties of types necessary for the proofs of those two results. 
Proposition~\ref{pos:Kobayashi}.

\begin{lemma}\label{lem:type}
For types, we have the following properties.
\begin{itemize}
\item[{\rm a)}] $|\mathcal{P}_n({\cal X})| \leq (n + 1)^{|\mathcal{X}|-1} \leq (n + 1)^{|\mathcal{X}|}$.

\newcommand{\OmiTTz}{
\item[{\rm b)}] The probability that a stationary memoryless source specified by distribution $p_X$ outputs a sequence $\bm{x}$ of type $P$ is given as follows:
\begin{align*}
  p_X^n(\bm{x}) = 2^{-n(H(P)+ D(P||p_X))}.
\end{align*}
\item[{\rm c)}] In {\rm b)}, if the type of the sequence $\bm{x}$ is $P$, then we have the following:
\begin{align*}
  P^n(\bm{x}) = 2^{-nH(P)}.
\end{align*}
\item[{\rm d)}] If $P \in \mathca{P}_n({\cal X})$ is a type, then for all types $P' \in \mathca{P}_n({\cal X})$, we have the following:
\begin{align*}
  P^n(T^n(P)) \geq P^n(T^n(P')).
\end{align*}
}
\item[{\rm b)}] For any type $P \in \mathcal{P}^n$, 
we have the following:
\begin{align*}
  &\frac{1}{(n+1)^{|\mathcal{X}|-1}} 
  \leq \frac{|T^n(P)|}{2^{nH(P)}} \leq 1.
\end{align*}
\item[{\rm c)}] $\forall P \in \mathcal{P}_n({\cal X})$ and $\forall p_X \in \mathcal{P}({\cal X})$, 
we have the following:
\begin{align*}
  &\frac{1}{(n+1)^{|\mathcal{X}|-1}} 
   \leq \frac{p_X^n(T^n(P))} {2^{-nD(P||p_X)}}
   \leq 1. 
\end{align*}
\end{itemize}
\end{lemma}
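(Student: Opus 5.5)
The plan is to prove the three parts of Lemma~\ref{lem:type} by the standard method of types, in the order a), b), c). Part b) is used to get part c).

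For a), a type $P\in\mathcal{P}_n(\mathcal{X})$ is determined by the vector of counts $(N(x|\bm{x}))_{x\in\mathcal{X}}$. Each count lies in $\{0,1,\ldots,n\}$, and the counts sum to $n$. So the last coordinate is fixed by the other $|\mathcal{X}|-1$, which gives $|\mathcal{P}_n(\mathcal{X})|\le (n+1)^{|\mathcal{X}|-1}\le (n+1)^{|\mathcal{X}|}$.

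For b), we first note that for any $\bm{x}\in T^n(P)$,
\begin{align*}
P^n(\bm{x})=\prod_{x}P(x)^{nP(x)}=2^{-nH(P)},
\end{align*}
so $P^n$ is constant on $T^n(P)$. The upper bound then follows from $1\ge P^n(T^n(P))=|T^n(P)|2^{-nH(P)}$.

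For the lower bound in b), the key step is that $T^n(P)$ has the largest $P^n$-probability among all type classes. Concretely, we show $P^n(T^n(P))\ge P^n(T^n(P'))$ for every $P'\in\mathcal{P}_n(\mathcal{X})$. Write
\begin{align*}
\frac{P^n(T^n(P))}{P^n(T^n(P'))}=\prod_x \frac{(nP'(x))!}{(nP(x))!}\,P(x)^{n(P(x)-P'(x))},
\end{align*}
using the multinomial formula $|T^n(P)|=n!/\prod_x (nP(x))!$. Then apply the elementary bound $m!/l!\ge l^{m-l}$ factorwise. This gives a lower bound of $\prod_x n^{n(P'(x)-P(x))}=n^0=1$. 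Summing over the at most $(n+1)^{|\mathcal{X}|-1}$ types gives $1\le (n+1)^{|\mathcal{X}|-1}P^n(T^n(P))$, and hence $|T^n(P)|\ge (n+1)^{-(|\mathcal{X}|-1)}2^{nH(P)}$. This maximality argument is the one nontrivial step.

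For c), take $\bm{x}\in T^n(P)$. Then
\begin{align*}
p_X^n(\bm{x})=\prod_x p_X(x)^{nP(x)}=2^{-n(H(P)+D(P\|p_X))},
\end{align*}
which is obtained by writing $\log p_X(x)=\log P(x)-\log(P(x)/p_X(x))$. This again holds with the conventions $0\log 0=0$, and if $D=\infty$ both sides vanish. Multiplying by $|T^n(P)|$ and inserting the two bounds from b) yields
\begin{align*}
(n+1)^{-(|\mathcal{X}|-1)}2^{-nD(P\|p_X)}\le p_X^n(T^n(P))\le 2^{-nD(P\|p_X)},
\end{align*}
which is c).
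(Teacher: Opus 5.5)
Your proof is correct. The paper gives no proof of its own and simply cites Han and Kobayashi, and your argument is the standard method-of-types proof of these facts: counting count vectors for a); constancy of $P^n$ on $T^n(P)$ together with maximality of $P^n(T^n(P))$ via $m!/l!\ge l^{m-l}$ for b); and the exact formula $p_X^n(\bm{x})=2^{-n(H(P)+D(P\|p_X))}$ on $T^n(P)$ for c).

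One small addition: the ratio in your maximality step presumes $P^n(T^n(P'))>0$. If $P(x)=0<P'(x)$ for some $x$, then $P^n(T^n(P'))=0$ and $P^n(T^n(P))\ge P^n(T^n(P'))$ holds trivially. Otherwise every $x$ with $P(x)=0$ also has $P'(x)=0$, so its factor equals $1$ and your computation goes through unchanged.
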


Proof of this lemma 
is given in Han and Kobayashi~\cite{HanKobayashi}.
Under Lemma~\ref{lem:type}, we prove 
Lemma \ref{lem:upbCR} and 
Proposition~\ref{pos:Kobayashi}.

\textit{Proof of Lemma~\ref{lem:upbCR}:}
We first prove the bound (\ref{eqn:calCupb}) 
in Lemma \ref{lem:upbCR}.
On upper bounds of 
$|\mathcal{C}^n(R)|$, we have
the following:
\begin{align*} 
  &|\mathcal{C}^n(R)|=
 \sum_{\substack{P \in \mathcal{P}_n({\cal X}):\\
   R  > H(P)}} |T^n(P)|
  \MLeq{a} 
  \sum_{\substack{P \in \mathcal{P}_n({\cal X}):\\
     R > H(P)}} \hspace*{-10pt} 2^{nH(P)} 
  \\   
   & \MLeq{b} |\mathcal{P}_n({\cal X})|2^{nR}
  \MLeq{c} (n+1)^{|\mathcal{X}|} 2^{nR}.
\end{align*}
Step (a) follows from Lemma~\ref{lem:type}~part~b).
Step (b) follows from $H(P)<R$.
Step (c) follows from 
Lemma~\ref{lem:type}~part~a).
We next prove the bound 
(\ref{eqn:calCupc}). 
We further evaluate upper bounds of $|\mathcal{C}^n(R)|$ 
to obtain the following chain of inequalities:  
\begin{align*}
 &|\mathcal{C}^n(R)|
  \MLeq{a}(n+1)^{|\mathcal{X}|}2^{nR}
  =2^{-n\gamma_n}(n+1)^{|\mathcal{X}|}2^{nR_n} 
 \\
 & \MLeq{b} 2^{-n\gamma_n} (n+1)^{|\mathcal{X}|} 
 |\mathcal{X}|^{m+1}
 \MEq{c} \frac{1}{2}|\mathcal{X}|^m
 \leq|\mathcal{X}^m|-1.
\end{align*}
Step (a) follows from 
the bound (\ref{eqn:calCupb}) 
we have now established. 
Step (b) follows from that 
the bound (\ref{eqn:Rconstraint})
implies $2^{nR_n} \leq |\mathcal{X}|^{m+1}$. 
Step (c) follows from   
$2^{-n\gamma_n} =\left[2(n+1)^{|\mathcal{X}|}
|\mathcal{X}|\right]^{-1}$.
\hfill\IEEEQED

\textit{Proof of Proposition~\ref{pos:Kobayashi}:}
For $(\phin, \psin)$ defined in (\ref{def:phi_direct}), 
we derive an upper bound of the error probability 
$p_{\mathrm{e}}($ $\phin,\psin|p_X^n)$.
From the definition of $(\phin, \psin)$ , the error probability $p_{\mathrm{e}}(\phin,\psin|p_X^n)$ is upper bounded by the probability of generating sequences belonging to $\mathcal{X}^n - \mathcal{C}^n(R)$
(see Fig.~\ref{fig:CRn_shazou}). 
Hence we have the following:
\begin{align*} 
  &p_{\mathrm{e}}(\phin,\psin|p_X^n) \leq p_X^n(\mathcal{X}^n - \mathcal{C}^n(R)) \\
 &= \sum_{\bm{x} \in \mathcal{X}^n 
 - \mathcal{C}^n(R)} p_X^n(\bm{x})
   = \sum_{\substack{P \in \mathcal{P}_n({\cal X}):
       \\ 
       R  \leq H(P)}
        } p_X^n(T^n(P))
  \\
 &\MLeq{a} 
   \sum_{\substack{P \in \mathcal{P}_n({\cal X}): \\ 
                   R  \leq H(P) 
                   }} 
           \exp_2\{-nD(P||p_X)\}
  \\
\newcommand{\OmiTTRR}{
  &\leq \sum_{\substack{P \in \mathcal{P}_n({\cal X}): \\ 
        R  \leq H(P)}} 
 \exp_2\left\{
    -n \min_{\substack{P \in \mathcal{P}_n({\cal X}):\\ 
             R \leq H(P)} 
            } D(P||p_X)
        \right\}
\\}
&\leq |\mathcal{P}_n({\cal X})| 
 \exp_2\left\{
 -n \min_{\substack{P \in \mathcal{P}_n({\cal X}): \\ 
      R  \leq H(P)}} D(P||p_X)
        \right\}
\\
  &\MLeq{b} |\mathcal{P}_n({\cal X})|2^{-n E(R|p_X)}
  \MLeq{c} (n+1)^{|\mathcal{X}|} 
  2^{-nE(R|p_X)}.
\end{align*}
Step (a) follows from Lemma~\ref{lem:type}~part~c).
Step (b) follows from 
$\mathcal{P}_n(\mathcal{X}) \subseteq 
\mathcal{P}(\mathcal{X})$ and the definition of  
$E(R|p_X)$.
Step (c) follows from Lemma~\ref{lem:type}~part~a).
\hfill\IEEEQED

\subsection{Proof of Proposition \ref{pos:upb_Max-MI}}\label{prf:pos_upb_Max-MI}
In this subsection, we prove Proposition \ref{pos:upb_Max-MI} by analyzing the security of the encryption scheme proposed in Section~\ref{prf:thm_Direct}. \par

\noindent \underline{\textit{Upper Bound of $\Delta^{(n)}_{\mathrm{MI}}
(\Phin|p_X^n,p_K^n)$:}} \/ 
Let $U^m$ be a random variable 
with the uniform distribution 
$p_{U^m}$ over ${\cal X}^m$. 
On an upper bound of 
$\Delta^{(n)}_{\mathrm{MI}}(\Phin\mid p_X^n,p_K^n)$, 
we have the following: 
\begin{lemma} \label{lem:UnivCMaxMI}
  For the proposed 
  construction of $\Phin$, we have 
\begin{align}
\Delta^{(n)}_{\mathrm{MI}}(\Phin|p_X^n,p_K^n)
 &\leq m \log|\mathcal{X}|-H(\widetilde{K}^m)
\notag\\
&=D(p_{\tilde{K}^m}||p_{U^m}).
\label{lem:upbMI} 
\end{align}
\end{lemma}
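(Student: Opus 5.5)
The plan is to use the one-time-pad structure $C^m=\widetilde{X}^m\oplus\widetilde{K}^m$. In it, $\widetilde{X}^m=\phinX$ is a function of $\bm{X}$ alone, and $\widetilde{K}^m=\varphi^{(n)}(\bm{K})$ is a function of $\bm{K}$ alone. Since $\bm{X}$ and $\bm{K}$ are generated independently by $\mathcal{S}_{\mathsf{gen}}$ and $\mathcal{K}_{\mathsf{gen}}$, the random variables $\widetilde{K}^m$ and $\bm{X}$ are independent. I will write
\begin{align*}
I(C^m;\bm{X}) = H(C^m) - H(C^m\mid \bm{X})
\end{align*}
and bound the two terms separately.

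\textbf{First term.} $C^m$ takes values in $\mathcal{X}^m$, so $H(C^m)\leq m\log|\mathcal{X}|$.

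\textbf{Second term.} Conditioned on $\bm{X}=\bm{x}$, the ciphertext is $C^m=\phinx\oplus\widetilde{K}^m$. This is a fixed translate (a bijection on $\mathcal{X}^m$) of $\widetilde{K}^m$. By independence, the conditional law of $\widetilde{K}^m$ given $\bm{X}=\bm{x}$ is just $p_{\widetilde{K}^m}$. Hence $H(C^m\mid\bm{X}=\bm{x})=H(\widetilde{K}^m)$ for every $\bm{x}$, and averaging over $\bm{x}$ gives $H(C^m\mid\bm{X})=H(\widetilde{K}^m)$.

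Combining the two terms gives the inequality $\MI\leq m\log|\mathcal{X}|-H(\widetilde{K}^m)$.

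\textbf{The equality.} Expand the divergence against the uniform distribution $p_{U^m}(c)=|\mathcal{X}|^{-m}$:
\begin{align*}
D(p_{\widetilde{K}^m}\|p_{U^m})
=\sum_{c}p_{\widetilde{K}^m}(c)\log\bigl(p_{\widetilde{K}^m}(c)\,|\mathcal{X}|^{m}\bigr)
=m\log|\mathcal{X}|-H(\widetilde{K}^m).
\end{align*}

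\textbf{Main point of care.} The only step needing attention is the independence of $\widetilde{K}^m$ from $\bm{X}$, which comes from the independent generation of source and key. The translation invariance of entropy under $\oplus$ then does the rest; no real obstacle is expected.
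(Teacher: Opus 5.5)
Your proposal is correct and is essentially the paper's argument: the paper also writes $I(C^m;\bm{X})=H(C^m)-H(\widetilde{K}^m\oplus\phi^{(n)}(\bm{X})\mid\bm{X})$, uses $\bm{X}\perp\bm{K}$ to replace the conditional entropy by $H(\widetilde{K}^m)$, and bounds $H(C^m)\leq m\log|\mathcal{X}|$. Your explicit expansion of $D(p_{\widetilde{K}^m}\|p_{U^m})$ fills in the identity that the paper states without proof.
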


\textit{Proof:}~
For the proposed construction of $\Phin$, 
we have ${C}^m = \widetilde{K}^m \oplus \phin({\bm{X}})$.
Then we have the following chain of inequalities:
\begin{align*}
&I({C}^m;{\bm{X}}) 
= H({C}^m)
- H(\widetilde{K}^m \oplus \phin({\bm{X}}) 
  |{\bm{X}}) 
\\
& \MEq{a}H({C}^m)- H(\widetilde{K}^m) 
  \leq m \log|\mathcal{X}| - H(\widetilde{K}^m).
\end{align*}
Step (a) follows from ${\bm{X}} \perp \bm{K}$. 
\hfill\IEEEQED

In the subsequent discussion, we evaluate the right members of (\ref{lem:upbMI}) in Lemma~\ref{lem:UnivCMaxMI}. 
Fix any $P\in \mathcal{P}_n({\cal X})$.
For each $(\bm{k},\widetilde{k}^m)\in T^n(P)
\times {\cal X}^m$, we define the following quantity:
\begin{align*}
 \Omega_{\bm{k};\varphi^{(n)}}
 (\widetilde{k}^m)
 \coloneq 
 \left\{
 \begin{array}{l}
  1, \mbox{ if } \varphi^{(n)}({\bm{k}})
 =\widetilde{k}^m,\\
 0,\mbox{ otherwise }.  
 \end{array}
\right.
\end{align*}
Furthermore, set 
\begin{align*}
&\Omega_{P;\varphi^{(n)}}(\widetilde{k}^m)
 \coloneq 
 \frac{1}{|T^n(P)|}
\sum_{\bm{k}\in T^n(P)}
 \Omega_{\bm{k};\varphi^{(n)}}
  (\widetilde{k}^m),\\
 & \Omega_{P ;\varphi^{(n)}}
 \coloneq
 \left\{
\Omega_{P;\varphi^{(n)}}
(\widetilde{k}^{m})
\right\}_{
\widetilde{k}^{m}
\in {\cal X}^{m}}. 
\end{align*}
The quantity $\Omega_{P ;\varphi^{(n)}}$
becomes a probability distribution on 
${\cal X}^{m}$.
By the definition of 
$\Omega_{P;\varphi^{(n)}}$, $P$ 
$\in {\cal P}_n({\cal X})$,
we have that for each 
$\widetilde{k}^{m}\in {\cal X}^{m}$, 
\begin{align}
& p_{\widetilde{K}^{m}}
(\widetilde{k}^{m}) 
= \sum_{P \in {\cal P}_n({\cal X}) }
\Omega_{P;\varphi^{(n)}}
(\widetilde{k}^{m}) p^n_{K}(T^n(P)).
\label{eqn:sssR}
\end{align}
On an upper bound of $D(p_{\tilde{K}^m}||p_{U^m})$,
we have the following lemma:
\begin{lemma}\label{lem:UnivCMaxMIb}
\begin{align*}
&D(p_{\tilde{K}^m}||p_{U^m}) 
\leq  
\sum_{P \in {\cal P}_n({\cal X}) }  
p^n_{K}(T^n(P)) D(\Omega_{P;\varphi^{(n)}}||
p_{U^m}).
\end{align*}
\end{lemma}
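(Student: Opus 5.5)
The inequality is convexity of relative divergence in its first argument. By (\ref{eqn:sssR}), the distribution $p_{\widetilde{K}^m}$ on ${\cal X}^m$ is a mixture of the distributions $\Omega_{P;\varphi^{(n)}}$, $P\in{\cal P}_n({\cal X})$. The mixing weights are $\lambda_P\coloneqq p^n_K(T^n(P))$. These are nonnegative and satisfy $\sum_{P\in{\cal P}_n({\cal X})}\lambda_P=1$, because the type classes partition ${\cal X}^n$. The reference distribution $p_{U^m}$ is the same for every component. Convexity therefore gives
\begin{align*}
D\Bigl(\sum_P\lambda_P\Omega_{P;\varphi^{(n)}}\Bigm\|p_{U^m}\Bigr)\leq\sum_P\lambda_P D(\Omega_{P;\varphi^{(n)}}\|p_{U^m}),
\end{align*}
which is exactly the claim.

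First I will check (\ref{eqn:sssR}) and the fact that each $\Omega_{P;\varphi^{(n)}}$ is a probability distribution on ${\cal X}^m$. For the second point, note that $\sum_{\widetilde{k}^m}\Omega_{\bm{k};\varphi^{(n)}}(\widetilde{k}^m)=1$ for every $\bm{k}$. For the first point, $p^n_K$ is constant on each type class $T^n(P)$. Hence
\begin{align*}
p_{\widetilde{K}^m}(\widetilde{k}^m)=\sum_{\bm{k}}p^n_K(\bm{k})\Omega_{\bm{k};\varphi^{(n)}}(\widetilde{k}^m)=\sum_P\frac{p^n_K(T^n(P))}{|T^n(P)|}\sum_{\bm{k}\in T^n(P)}\Omega_{\bm{k};\varphi^{(n)}}(\widetilde{k}^m).
\end{align*}
The inner normalized sum is $\Omega_{P;\varphi^{(n)}}(\widetilde{k}^m)$, so this is (\ref{eqn:sssR}).

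To justify the convexity step, I would use the form $D(Q\|p_{U^m})=m\log|{\cal X}|-H(Q)$, which holds because $p_{U^m}$ is uniform. The claim then reduces to $H\bigl(\sum_P\lambda_P\Omega_P\bigr)\geq\sum_P\lambda_P H(\Omega_P)$, which is concavity of entropy. Alternatively, the log-sum inequality can be applied pointwise in $\widetilde{k}^m$ and then summed. Any terms with $\lambda_P=0$ are simply dropped.

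I see no real obstacle here. The only point needing care is that the weights $p^n_K(T^n(P))$ sum to one and that $p_{U^m}$ has full support, so every divergence in the bound is finite.
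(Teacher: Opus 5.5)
Your proposal is correct and is essentially the paper's argument. The paper writes $D(p_{\tilde{K}^m}\|p_{U^m})=m\log|\mathcal{X}|-H(\tilde{K}^m)$, applies concavity of entropy to the mixture (\ref{eqn:sssR}), and recombines the terms into divergences using $\sum_P p^n_K(T^n(P))=1$, which is exactly the reduction you use to justify the convexity step; your explicit check of (\ref{eqn:sssR}) via the constancy of $p^n_K$ on type classes is a detail the paper takes for granted.
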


{\it Proof:} We have the following chain of inequalities: 
\begin{align*}
& D(p_{\tilde{K}^m}||p_{U^m}) 
=m \log |{\cal X}|-H(\tilde{K}^m)
\\
&\MLeq{a} m \log |{\cal X}|-
\sum_{P \in {\cal P}_n({\cal X}) }  
p^n_{K}(T^n(P)) H(\Omega_{P;\varphi^{(n)}})
\\
&=\sum_{P \in {\cal P}_n({\cal X}) }  
p^n_{K}(T^n(P)) D(\Omega_{P;\varphi^{(n)}}||
p_{U^m}).
\end{align*}
Step (a) follows from (\ref{eqn:sssR}) and 
the concave property of the entropy function 
with respect to the probability distribution.
\hfill\IEEEQED


In the subsequent discussion, we evaluate 
$D(\Omega_{P;\varphi^{(n)}}||$ $p_{U^m}), 
P\in {\cal P}_n({\cal X})$ 
in Lemma~\ref{lem:UnivCMaxMIb} 
based on the arguments 
in \cite{OohamaSec1}. 

\noindent \underline{\textit{Random Construction 
of Affine Encoders:}} \/ We first choose $m$ 
such that it satisfies (\ref{eqn:ChoosEm}).
By the definition of $\varphi^{(n)}$, we have that 
for $\bm{k} \in \mathcal{X}^n$,
$ \varphi^{(n)}(\bm{k}) = \bm{k} A \oplus b^m,$ 
where $b^m$ is a vector with $m$ columns.
Entries of $A$ and $b^m$ are from the field of $\mathcal{X}$. 
These entries are selected at random, 
independently of each other, and with a uniform distribution.

\newcommand{\commentoutB}{
  Randomly constructed linear encoder $\widetilde{\varphi}^{(n)}$ and affine encoder $\varphi^{(n)}$ have three properties shown in the following lemma.
\begin{lemma}
  $\quad$
  \begin{itemize} \label{lem:affine}
    \item[\textnormal{a)}] For any $x^n, v^n \in \mathcal{X}^n$ with $x^n \neq v^n$, we have
          \begin{align}
            &\Pr [\widetilde{\varphi}^{(n)}(x^n)= \widetilde{\varphi}^{(n)}(v^n)]\notag\\
            &= \Pr [(x^n \ominus v^n)A = 0^m] = |\mathcal{X}|^{-m}.
          \end{align}
    \item[\textnormal{b)}]  For any $s^n \in \mathcal{X}^n$ and for any $\widetilde{s}^m \in \mathcal{X}^m$, we have
          \begin{align}
            &\Pr [\varphi^{(n)}(s^n) = \widetilde{s}^m] \notag\\
            &= \Pr [s^n A \oplus b^m = \widetilde{s}^m] = |\mathcal{X}|^{-m}.
          \end{align}
    \item[\textnormal{c)}] For any $s^n, t^n$ with $s^n \neq t^n$ and for any $\widetilde{s}^m \in \mathcal{X}^m$, we have
          \begin{align}
            &\Pr [\varphi^{(n)}(s^n)= \varphi^{(n)}(t^n) = \widetilde{s}^m]\notag \\
            &= \Pr [s^n A \oplus b^m = t^n A \oplus b^m = \widetilde{s}^m] = |\mathcal{X}|^{-2m}.
          \end{align}
  \end{itemize}
\end{lemma}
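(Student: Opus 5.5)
The plan is to prove all three parts directly from the fact that the entries of $A$ and $b^m$ are independent and uniform over the finite field $\mathcal{X}$. The one idea used throughout is that a uniform field element multiplied by a nonzero scalar, plus anything independent of it, is again uniform.

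For part a), I first use linearity of $\widetilde{\varphi}^{(n)}(\bm{k})=\bm{k}A$ to rewrite the collision event $\widetilde{\varphi}^{(n)}(x^n)=\widetilde{\varphi}^{(n)}(v^n)$ as $(x^n\ominus v^n)A=0^m$. Set $d^n:=x^n\ominus v^n\neq 0^n$ and pick an index $i$ with $d_i\neq 0$. For each column $j\in\{1,\dots,m\}$, the $j$-th coordinate of $d^nA$ is
\begin{align*}
d_iA_{ij}\oplus\sum_{l\neq i}d_lA_{lj}.
\end{align*}
Conditioned on all entries $A_{lj}$ with $l\neq i$, the term $d_iA_{ij}$ is uniform on $\mathcal{X}$, because $a\mapsto d_ia$ is a bijection of the field. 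Hence the coordinate is uniform, and it equals $0$ with probability $|\mathcal{X}|^{-1}$. Distinct columns involve disjoint sets of entries of $A$, so the $m$ coordinates are independent, and the probability of $0^m$ is $|\mathcal{X}|^{-m}$.

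For part b), I condition on $A$. Since $b^m$ is uniform on $\mathcal{X}^m$ and independent of $A$, the vector $s^nA\oplus b^m$ is uniform on $\mathcal{X}^m$ for every fixed $A$. Its probability of equalling $\widetilde{s}^m$ is therefore $|\mathcal{X}|^{-m}$, and averaging over $A$ gives the same value.

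For part c), I rewrite the event as the intersection of $\{(s^n\ominus t^n)A=0^m\}$, which depends only on $A$, and $\{s^nA\oplus b^m=\widetilde{s}^m\}$. For each fixed $A$, the second event has conditional probability $|\mathcal{X}|^{-m}$ by the argument of part b). So the joint probability is $|\mathcal{X}|^{-m}\cdot\Pr[(s^n\ominus t^n)A=0^m]$, which equals $|\mathcal{X}|^{-2m}$ by part a) applied to $s^n\neq t^n$.

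The only step needing care is the independence across columns in part a), and it is immediate because the entries of $A$ are mutually independent. None of the steps is a real obstacle.
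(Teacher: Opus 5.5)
Your proof is correct and complete. The collision event reduces to $(x^n\ominus v^n)A=0^m$, and the column-by-column independence argument is sound; it uses the paper's finite-field assumption on $\mathcal{X}$, since $a\mapsto d_i a$ must be a bijection. Conditioning on $A$ in parts b) and c) is also valid. The paper gives no proof of its own and defers to Santoso and Oohama \cite{OohamaSec1}, so your self-contained argument from the independence and uniformity of the entries of $A$ and $b^m$ fully covers the statement.
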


Proof of Lemma \ref{lem:affine} 
is given in Santoso and Oohama \cite{OohamaSec1}.
}

\noindent 
\underline{\textit{Estimation 
of Approximation Error:}}
\/ For $P\in {\cal P}_n({\cal X})$, define the following function.
\begin{align*}
  \Theta_n(P) \coloneq 
  \log \left[1+ (|{\cal X}|^m-1)|T^n(P)|^{-1}\right].
\end{align*}
We let expectations based on the randomness 
of the encoder functions be denoted 
by ${\bf E}[\cdot]$.
Then, we have the following:
\begin{lemma} \label{lem:upbED}
For each $P\in {\cal P}_n({\cal X})$,  
\begin{align*}
& {\rm {\bf E}}\left[ 
D(\Omega_{P; \varphi^{(n)}}||p_{U^m})\right]
\leq \Theta_n(P).
\end{align*}
\end{lemma}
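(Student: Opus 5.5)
We prove the bound by writing $D(\Omega_{P;\varphi^{(n)}}\|p_{U^m})$ as an expectation of a logarithm, moving the randomness of the affine encoder inside the logarithm with Jensen's inequality, and then computing a collision-type second moment using the pairwise properties of random affine maps.

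Fix $P\in{\cal P}_n({\cal X})$ and abbreviate $T=T^n(P)$ and $M=|{\cal X}|^m$. Since $\Omega_{P;\varphi^{(n)}}(\widetilde{k}^m)$ is the fraction of $\bm{k}\in T$ with $\varphi^{(n)}(\bm{k})=\widetilde{k}^m$, we have
\begin{align*}
D(\Omega_{P;\varphi^{(n)}}\|p_{U^m})
=\frac{1}{|T|}\sum_{\bm{k}\in T}\log\bigl[M\,\Omega_{P;\varphi^{(n)}}(\varphi^{(n)}(\bm{k}))\bigr].
\end{align*}
For a fixed $\bm{k}\in T$, the quantity $|T|\,\Omega_{P;\varphi^{(n)}}(\varphi^{(n)}(\bm{k}))$ counts $\bm{k}$ itself plus the number of $\bm{k}'\in T$ with $\bm{k}'\neq\bm{k}$ and $\varphi^{(n)}(\bm{k}')=\varphi^{(n)}(\bm{k})$. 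Therefore
\begin{align*}
D(\Omega_{P;\varphi^{(n)}}\|p_{U^m})
=\frac{1}{|T|}\sum_{\bm{k}\in T}\log\Bigl[\frac{M}{|T|}\Bigl(1+\sum_{\bm{k}'\in T,\,\bm{k}'\neq \bm{k}}\mathbf{1}\{\varphi^{(n)}(\bm{k}')=\varphi^{(n)}(\bm{k})\}\Bigr)\Bigr].
\end{align*}

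Next we take the expectation over the random pair $(A,b^m)$. Since $\log$ is concave, Jensen's inequality lets us move ${\bf E}$ inside the logarithm for each fixed $\bm{k}$. The key fact is the pairwise collision property of the random affine (equivalently, linear) encoder: for $\bm{k}'\neq\bm{k}$,
\[
\Pr[\varphi^{(n)}(\bm{k}')=\varphi^{(n)}(\bm{k})]=\Pr[(\bm{k}'\ominus\bm{k})A=0^m]=|{\cal X}|^{-m}=M^{-1}.
\]
This holds because $\bm{k}'\ominus\bm{k}$ is a nonzero vector and the entries of $A$ are i.i.d.\ uniform over the field, so $(\bm{k}'\ominus\bm{k})A$ is uniform on ${\cal X}^m$. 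With at most $|T|-1$ terms in the inner sum, the expectation inside the logarithm is at most
\[
\frac{M}{|T|}\Bigl(1+\frac{|T|-1}{M}\Bigr)=1+\frac{M-1}{|T|},
\]
after simplifying $\frac{M}{|T|}+\frac{|T|-1}{|T|}=1+\frac{M-1}{|T|}$. Averaging over $\bm{k}\in T$ then gives exactly $\Theta_n(P)$.

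The only step needing care is the collision probability. It uses only the randomness of $A$, since the shift $b^m$ cancels in the difference $\varphi^{(n)}(\bm{k}')\ominus\varphi^{(n)}(\bm{k})$, and it relies on $\bm{k}'\neq\bm{k}$ so that the product with a uniform random matrix is uniform. Everything else is a direct calculation.
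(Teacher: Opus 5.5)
Your proof is correct and follows essentially the same route as the paper's appendix proof: write the divergence as an average over $\bm{k}\in T^n(P)$ of the logarithm of a normalized collision count, move the expectation over $(A,b^m)$ inside the logarithm by Jensen's inequality, and evaluate it with the pairwise property of random affine encoders. The only difference is bookkeeping. The paper keeps the sum over $\widetilde{k}^m$, conditions on $\varphi^{(n)}(\bm{k})=\widetilde{k}^m$, and uses $\Pr\{\varphi^{(n)}(\bm{k}')=\widetilde{k}^m \mid \varphi^{(n)}(\bm{k})=\widetilde{k}^m\}=|\mathcal{X}|^{-m}$, which relies on uniformity and pairwise independence and hence on the shift $b^m$. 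You collapse that sum first and need only the unconditional collision probability $|\mathcal{X}|^{-m}$, so your argument already works for the linear part alone.
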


Proof of Lemma \ref{lem:upbED} is given 
in Appendix \ref{prf:upbED}. 
Proof of some extended version of this lemma 
is also given in Santoso and Oohama \cite{OohamaSec1}.
From this lemma we have the following corollary.
\begin{corollary}\label{cor:MI_upb}
$\exists\{\varphi^{(n)}\}_{n=1}^{\infty}$ with  $\varphi^{(n)}:\mathcal{X}^n \to \mathcal{X}^m$ 
such that $\forall P\in {\cal P}_n({\cal X})$,
 \begin{align}
 D(\Omega_{P; \varphi^{(n)}}||p_{U^m}) 
    \leq |{\cal P}_n({\cal X}) | \Theta_n(P).
 \label{eqn:DPUpBa}   
 \end{align}
\end{corollary}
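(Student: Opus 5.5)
The plan is a standard Markov-inequality/union-bound derandomization of Lemma~\ref{lem:upbED}, applied to the random affine encoder $\varphi^{(n)}(\bm{k})=\bm{k}A\oplus b^m$ with $A$ and $b^m$ uniform. Fix $n$. For each type $P\in{\cal P}_n({\cal X})$, let $\mathcal{E}_P$ be the bad event
\begin{align*}
D(\Omega_{P;\varphi^{(n)}}\|p_{U^m})>|{\cal P}_n({\cal X})|\,\Theta_n(P).
\end{align*}
We will show that, with positive probability over $(A,b^m)$, none of the events $\mathcal{E}_P$ occurs. Any realization in that intersection is a deterministic $\varphi^{(n)}$ satisfying (\ref{eqn:DPUpBa}) for every $P$ at once. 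Doing this for each $n$ yields the sequence $\{\varphi^{(n)}\}_{n=1}^{\infty}$.

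First, treat the degenerate case. If $\Theta_n(P)=0$, which happens only when $|{\cal X}|^m=1$, then ${\cal X}^m$ is a single point. Every distribution on it equals $p_{U^m}$, so the divergence is $0$ and $\mathcal{E}_P$ is empty.

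Otherwise $\Theta_n(P)>0$. The divergence is a nonnegative random variable, so Markov's inequality and Lemma~\ref{lem:upbED} give
\begin{align*}
\Pr[\mathcal{E}_P]\le \frac{{\bf E}[D(\Omega_{P;\varphi^{(n)}}\|p_{U^m})]}{|{\cal P}_n({\cal X})|\Theta_n(P)}\le \frac{1}{|{\cal P}_n({\cal X})|}.
\end{align*}
A union bound over the $|{\cal P}_n({\cal X})|$ types then gives $\Pr[\bigcup_P\mathcal{E}_P]\le 1$.

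The one delicate point is that this union bound only gives $\le 1$, not $<1$. To fix it, I would apply Markov's inequality to the complementary non-strict event, $\Pr[D\ge |{\cal P}_n({\cal X})|\Theta_n(P)]\le 1/|{\cal P}_n({\cal X})|$, and then argue that equality cannot hold for every type simultaneously. The argument: equality in Markov's inequality forces $D$ to take only the values $0$ and the threshold. Equality in the union bound forces the events to be disjoint and to cover everything. Together these would mean that every encoder has zero divergence on all but exactly one type, and this is incompatible with the strict inequality in $\Theta_n(P)$ coming from Lemma~\ref{lem:upbED}.

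A cleaner route, which I would adopt if the paper's constants allow it, is to run the same argument with threshold $2|{\cal P}_n({\cal X})|\Theta_n(P)$. That gives a union bound of $1/2<1$ at the cost of a factor of $2$. Alternatively, since the bound in (\ref{eqn:DPUpBa}) is non-strict, one can observe that the failure event is the strict event $\mathcal{E}_P$, and the simultaneous-equality scenario above is ruled out by the same Markov-equality argument. Either way, a good deterministic $(A,b^m)$ exists, which proves Corollary~\ref{cor:MI_upb}. This tightness issue is the only real obstacle; the rest is routine.
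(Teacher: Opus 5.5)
Your argument is correct, but only through your third fix, and it takes a different route from the paper.

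The paper never treats the types separately. It forms the single nonnegative random variable $\sum_{P\in{\cal P}_n({\cal X})}(\Theta_n(P))^{-1}D(\Omega_{P;\varphi^{(n)}}\|p_{U^m})$, whose expectation is at most $|{\cal P}_n({\cal X})|$ by Lemma~\ref{lem:upbED}. It then takes a realization of $(A,b^m)$ at which this sum does not exceed its mean; such a realization always exists. All the per-type bounds follow at once from nonnegativity of the summands. Since ``some realization is at most the mean'' is automatically non-strict, the boundary problem you ran into never appears.

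Your Markov-plus-union-bound route gives the same constant, but it lands exactly on $\Pr[\bigcup_P\mathcal{E}_P]\le 1$, so you need an extra step. State that step directly: for $t>0$ and $D\ge 0$, $t\Pr[D>t]\le\mathbf{E}[D\,\mathbf{1}\{D>t\}]$, with strict inequality whenever $\Pr[D>t]>0$. Hence each $\Pr[\mathcal{E}_P]$ is either $0$ or strictly smaller than $1/|{\cal P}_n({\cal X})|$, and $\Pr[\bigcup_P\mathcal{E}_P]<1$ follows. The two derandomizations are essentially equivalent in strength. The paper's averaging form handles the boundary case for free, while yours needs this strict-Markov observation.

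You should drop your other two fixes.
\begin{itemize}
\item The first tries to exclude a scenario in which every encoder sits at the threshold on exactly one type and at $0$ elsewhere. In that scenario (\ref{eqn:DPUpBa}) actually holds, so there is nothing to exclude. It also invokes a strict inequality that Lemma~\ref{lem:upbED} does not provide.
\item The factor-$2$ version proves only the weaker bound with constant $2|{\cal P}_n({\cal X})|$, not the corollary as stated.
\end{itemize}

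Finally, the degenerate case $\Theta_n(P)=0$ cannot arise here. The choice of $\gamma_n$ gives $nR_n>\log|{\cal X}|$, so $m\ge 1$ and $\Theta_n(P)>0$. This is also what makes the paper's weights $(\Theta_n(P))^{-1}$ well defined.
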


{\it Proof:} We set 
$$
\zeta_n( \varphi^{(n)})
\defeq \sum_{P \in {\cal P}_n({\cal X}) }  
(\Theta_n(P))^{-1}
D(\Omega_{P; \varphi^{(n)}}||p_{U^m}).
$$
Note that $\zeta_n( \varphi^{(n)})$ is 
a random variable based on the random 
choice of $\varphi^{(n)}$. Computing 
${\rm E}\left[\zeta_n(\varphi^{(n)})\right]$, 
we have
\begin{align}
&{\bf E}\left[\zeta_n(\varphi^{(n)})\right]
={\bf E}\left[\sum_{P \in {\cal P}_n({\cal X}) }  
(\Theta_n(P))^{-1}
D(\Omega_{P;\varphi^{(n)}}||p_{U^m})
 \right]
\notag\\
& =\sum_{P \in {\cal P}_n({\cal X}) } 
(\Theta_n(P))^{-1}{\bf E}\left[
D(\Omega_{P;\varphi^{(n)}}||p_{U^m})
 \right]\leq |{\cal P}_n({\cal X})|.
\label{eqn:UpbZeta} 
\end{align}
From (\ref{eqn:UpbZeta}), we can see that 
$\exists \varphi^{(n)}:{\cal X}^n
 \to {\cal X}^m$ such that 
$$
\zeta_n(\varphi^{(n)})=
\sum_{P \in {\cal P}_n({\cal X}) }  
 (\Theta_n(P))^{-1}
D(\Omega_{P;\varphi^{(n)}}||p_{U^m})
 \leq |{\cal P}_n({\cal X})|,
$$
from which we have the 
bound (\ref{eqn:DPUpBa}) 
in Corollary \ref{cor:MI_upb}.
\hfill\IEEEQED

Combining Lemmas  \ref{lem:UnivCMaxMI} and \ref{lem:UnivCMaxMIb} and 
Corollary \ref{cor:MI_upb}, we have the following:
 \begin{lemma}\label{lem:Phi_upb}
$\exists\{\varphi^{(n)}\}_{n=1}^{\infty}$ with  $\varphi^{(n)}:\mathcal{X}^n \to \mathcal{X}^m$ 
such that 
$$
 \Delta^{(n)}_{\mathrm{MI}}(\Phin|p_X^n,p_K^n)
\leq |{\cal P}_n({\cal X})|
\sum_{P \in {\cal P}_n({\cal X}) }  
p^n_{K}(T^n(P))  \Theta_n(P).
$$
 \end{lemma}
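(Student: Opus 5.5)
The plan is to chain the three preceding results, using the single sequence $\{\varphi^{(n)}\}_{n=1}^{\infty}$ supplied by Corollary~\ref{cor:MI_upb}. For each $n$, fix the affine encoder $\varphi^{(n)}$ given there. It satisfies (\ref{eqn:DPUpBa}) simultaneously for every type $P\in{\cal P}_n({\cal X})$. We build $\Phin$ from this $\varphi^{(n)}$ and the code $(\phin,\psin)$ as in the encryption scheme. The point to stress is that the choice of $\varphi^{(n)}$ in Corollary~\ref{cor:MI_upb} was made without reference to $p_K$: the random variable $\zeta_n(\varphi^{(n)})$ involves only $\Omega_{P;\varphi^{(n)}}$ and $\Theta_n(P)$, neither of which depends on the key distribution. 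Hence the same $\varphi^{(n)}$ works for every $p_K\in{\cal P}({\cal X})$ (and every $p_X$), which is the universality required here.

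With $\varphi^{(n)}$ fixed, Lemma~\ref{lem:UnivCMaxMI} gives $\MI(\Phin|p_X^n,p_K^n)\leq D(p_{\widetilde{K}^m}||p_{U^m})$. Lemma~\ref{lem:UnivCMaxMIb}, which rests on the mixture representation (\ref{eqn:sssR}) and concavity of entropy, then bounds this by $\sum_{P}p_K^n(T^n(P))\,D(\Omega_{P;\varphi^{(n)}}||p_{U^m})$. Next I substitute (\ref{eqn:DPUpBa}) termwise; this is legitimate because the weights $p_K^n(T^n(P))$ are nonnegative. Pulling the constant $|{\cal P}_n({\cal X})|$ out of the sum yields exactly
\begin{align*}
\MI(\Phin|p_X^n,p_K^n)\leq |{\cal P}_n({\cal X})|\sum_{P\in{\cal P}_n({\cal X})}p_K^n(T^n(P))\,\Theta_n(P).
\end{align*}

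There is essentially no analytic obstacle, since each step is a direct invocation. The only point needing care is the quantifier order. We need one deterministic $\varphi^{(n)}$ that works for all $P$ at once, and therefore for all $p_K$, rather than an encoder chosen after $p_K$ is known. Corollary~\ref{cor:MI_upb} already provides this through the averaged statistic $\zeta_n$, and I would state explicitly that its selection is independent of $(p_X,p_K)$.
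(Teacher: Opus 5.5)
Your proposal is correct and matches the paper's proof, which chains Lemma~\ref{lem:UnivCMaxMI}, Lemma~\ref{lem:UnivCMaxMIb}, and the termwise bound of Corollary~\ref{cor:MI_upb} under the nonnegative weights $p_K^n(T^n(P))$. Your remark that the encoder selected via $\zeta_n(\varphi^{(n)})$ depends only on $\Omega_{P;\varphi^{(n)}}$ and $\Theta_n(P)$, and hence not on $(p_X,p_K)$, correctly makes explicit the universality that the paper leaves implicit.
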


For $\Theta_n(P), P\in {\cal P}_n({\cal X})$, 
we have the following:
\begin{lemma}\label{lem:Theta_upb}
For each $P\in {\cal P}_n({\cal X})$, we have  
  \begin{align*}
    \Theta_n(P) \leq 
    \left(R_n+\frac{1}{2}\right)
    (n+1)^{|{\cal X}|}2^{-n[H(P)-R_n]^{+}}.
  \end{align*}
\end{lemma}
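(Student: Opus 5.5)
We need to bound $\Theta_n(P)=\log[1+(|\mathcal{X}|^m-1)|T^n(P)|^{-1}]$. The plan is to split into two cases according to whether $H(P)\le R_n$ or $H(P)>R_n$, and in each case use the type-size bound of Lemma~\ref{lem:type} part b) together with the constraint (\ref{eqn:Rconstraint}) on $m$.

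\emph{Case $H(P)> R_n$.} Here $[H(P)-R_n]^+=H(P)-R_n$. Use $\log(1+a)\le a\log e$ to get
\[
\Theta_n(P)\le (\log e)\,|\mathcal{X}|^m\,|T^n(P)|^{-1}.
\]
By (\ref{eqn:Rconstraint}), $|\mathcal{X}|^m\le 2^{nR_n}$. By Lemma~\ref{lem:type} part b), $|T^n(P)|^{-1}\le (n+1)^{|\mathcal{X}|-1}2^{-nH(P)}$. Hence
\[
\Theta_n(P)\le (\log e)(n+1)^{|\mathcal{X}|-1}2^{-n[H(P)-R_n]}.
\]
It remains to absorb the constant $\log e\approx 1.4427$ into $(R_n+\frac12)(n+1)$. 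Since $n+1\ge 2$, we have $(R_n+\tfrac12)(n+1)\ge 2R_n+1$. Now $R_n\ge\gamma_n\ge \frac1n\{|\mathcal{X}|\log(n+1)+1\}$, and for $n\ge1$ this is at least $|\mathcal{X}|\log 2\ge 2$ when $|\mathcal{X}|\ge 2$. So $2R_n+1\ge\log e$ comfortably. (If $R$ is small this lower bound via $\gamma_n$ is what saves us.)

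\emph{Case $H(P)\le R_n$.} Here the exponent factor equals $1$, so it suffices to show $\Theta_n(P)\le (R_n+\frac12)(n+1)^{|\mathcal{X}|}$. Bound crudely: $|T^n(P)|\ge 1$, so $\Theta_n(P)\le\log|\mathcal{X}|^m=m\log|\mathcal{X}|\le nR_n$ by (\ref{eqn:Rconstraint}). Then $nR_n\le (R_n+\frac12)(n+1)^{|\mathcal{X}|}$ trivially, since $n\le (n+1)^{|\mathcal{X}|}$.

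The only mildly delicate point is the constant matching in the first case, namely checking that $\log e$ is dominated by $(R_n+\frac12)(n+1)$ using the definition (\ref{eqn:ChoicRate}) of $\gamma_n$. If that bookkeeping turns out to be tight, an alternative is to bound $\log(1+a)\le\log 2+\log^+ a$ in a unified way, but the two-case split above should suffice.
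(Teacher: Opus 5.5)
Your case split is the paper's own argument. The paper proves the two bounds $\Theta_n(P)\le nR_n\le (R_n+\frac12)(n+1)^{|\mathcal{X}|}$ and $\Theta_n(P)\le 2^{nR_n}|T^n(P)|^{-1}\le (n+1)^{|\mathcal{X}|-1}2^{-n[H(P)-R_n]}$ and keeps the smaller one. Your first case is correct.

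The gap is in the step you yourself flagged: absorbing $\log e$ in the second case. You claim $\frac1n\{|\mathcal{X}|\log(n+1)+1\}\ge|\mathcal{X}|\log 2$ for every $n\ge1$. This fails for all but a few small $n$, because the left side tends to $0$, just as $\gamma_n$ does. Once you have replaced $n+1$ by $2$, the inequality you actually need, $2R_n+1\ge\log e$, is genuinely false in general. For $|\mathcal{X}|=2$ and $n=100$ one gets $\gamma_n\approx0.15$, so $2R_n+1\approx1.31+2R$. This is below $\log e\approx1.44$ whenever $R<0.06$. As written, your second case is therefore not proved for small $R$ and large $n$.

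The repair is to keep the factor $n+1$. Since $R>0$ and $n\gamma_n=|\mathcal{X}|\log(n+1)+\log|\mathcal{X}|+1\ge1$,
\[
\left(R_n+\tfrac12\right)(n+1)\ \ge\ nR_n+\tfrac{n+1}{2}\ \ge\ n\gamma_n+1\ \ge\ 2\ >\ \log e .
\]
With this change your proof is complete. It is then more careful than the paper's. The paper writes $\log(1+a)\le a$ with base-2 logarithms, which is false for $0<a<1$. Its bound survives only because of the slack in its step (c), namely $(n+1)^{|\mathcal{X}|-1}\le(nR_n+1)(n+1)^{|\mathcal{X}|-1}$. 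That is the same slack you need to use here.
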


Proof of Lemma \ref{lem:Theta_upb} is given 
in Appendix \ref{prf:lemThetaupb}.


\textit{Proof of Proposition 
\ref{pos:upb_Max-MI}:}
For simplicity of notation we set 
$\widetilde{R}_n\coloneq R_n+(1/2).$
Then we have the following: 
\begin{align*}
&\Delta^{(n)}_{\mathrm{MI}}(\Phin|p_X^n,p_K^n) 
\MLeq{a} |{\cal P}_n({\cal X})| 
\sum_{P \in {\cal P}_n({\cal X})}
\Theta_n(P) 2^{-nD(P||p_K)}
\\
&\MLeq{b}\widetilde{R}_n 
(n+1)^{|{\cal X}|} |{\cal P}_n({\cal X})|
\sum_{P \in {\cal P}_n({\cal X})}   
 2^{-n\left\{[H(P)-R_n]^{+}+
 D(P||p_K)\right\}} 
\\ 
 &\leq \widetilde{R}_n
 (n+1)^{|{\cal X}|}
  |{\cal P}_n({\cal X})|^2
 \\
&\quad \times  
\exp_2\left[-n \min_{P \in {\cal P}_n({\cal X})}  
\left\{[H(P)-R_n]^{+}+D(P||p_K)\right\}\right]
\\ 
&\MLeq{c} \widetilde{R}_n
(n+1)^{|{\cal X}|}|{\cal P}_n({\cal X})|^2
2^{-n[F(R|p_K)-\gamma_n]}
\\
&\MLeq{d}\widetilde{R}_n
(n+1)^{3|{\cal X}|} 
2^{-n[F(R|p_K)-\gamma_n]}.
\end{align*}
Step (a) follows from 
Lemma \ref{lem:type} part c) and Lemma \ref{lem:Phi_upb}. 
Step (b) follows from Lemma \ref{lem:Theta_upb}.
Step (c) follows from 
${\cal P}_n({\cal X})\subseteq {\cal P}({\cal X})$,
$[H(P)-R_n]^{+}\geq [H(P)-R]^{+}-\gamma_n $
and the definition of $F(R|p_K)$.
Step (d) follows from Lemma \ref{lem:type} part a).
\hfill \IEEEQED

\section{Proof of Proposition \ref{pos:Oohama25_1}}\label{prf:pos_Oohama25_1}
\newcommand{\commentoutC}{
\subsection{Proof of Proposition \ref{pos:HX_HK_com}}\label{prf:HX_HK_com}
In this section, we prove Proposition \ref{pos:HX_HK_com}, which is required for the proof of the converse theorem.
We present below the lemmas necessary for the proof of Proposition \ref{pos:HX_HK_com}.
  \begin{lemma} \label{HXtildes-l}
    For any $\alpha > 1$, we have
    \begin{align*}
      \frac{1}{n} H(\widetilde{\bm{X}}) \geq \min_{p_X} \left[ H(\widetilde{X}) + \alpha D(p_{\widetilde{X}}||p_X) \right] - \frac{\alpha}{n} \log \frac{1}{1 - \varepsilon}.
    \end{align*}
  \end{lemma}
  Proof of Lemma \ref{HXtildes-l} is given in Tyagi and Watanabe~\cite{Change_of_measure}.
  \begin{lemma} \label{FalphapX}
    For any $\alpha > 1$, we have the following:
    \begin{align*}
      &\min_{p_X} \left[ H(\widetilde{X}) + \alpha D(p_{\widetilde{X}}||p_X) \right] - \frac{\alpha}{n} \log \frac{1}{1 - \varepsilon} \\
      &\geq H(X) - 3\biggl[\frac{1}{2}\left(\log \frac{1}{p_{\min}}\right)^2\log|\mathcal{X}|\biggr]^{\frac{1}{3}}\biggl[\frac{1}{n}\log \frac{1}{1 - \varepsilon}\biggr]^{\frac{1}{3}}
    \end{align*}
  \end{lemma}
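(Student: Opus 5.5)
The plan is to lower-bound the minimand pointwise in the candidate distribution $q=p_{\widetilde X}$. That reduces the left side to an explicit function $G(\alpha)$ of $\alpha$ and $t\coloneqq\frac1n\log\frac1{1-\varepsilon}$. The stated cube-root bound then comes from comparing $G(\alpha)$ with the right side. Write $L\coloneqq\log\frac1{p_{\min}}$ and $c\coloneqq\frac12L^2\log|\mathcal X|$, so the right side is $H(X)-3c^{1/3}t^{1/3}$.

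\emph{Step 1 (entropy versus divergence).} For any $q\in\mathcal P(\mathcal X)$ we have the identity
\begin{align*}
H(q)=H(X)-D(q\|p_X)-\sum_{x}\bigl(q(x)-p_X(x)\bigr)\log p_X(x).
\end{align*}
Since $\log p_X(x)\in[-L,0]$, the last sum is at most $\frac L2\|q-p_X\|_1$ in absolute value. Pinsker's inequality gives $\|q-p_X\|_1\le\sqrt{2\ln 2\,D(q\|p_X)}$, and $D$ enters only through a term of size at most of order $\log|\mathcal X|$ on the relevant range. Combining these,
\begin{align*}
H(q)+\alpha D(q\|p_X)\ge H(X)+(\alpha-1)D-L\sqrt{c'D},
\end{align*}
with $c'$ absorbing the Pinsker constant and the $\log|\mathcal X|$ factor. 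This is where $c$ comes from.

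\emph{Step 2 (minimize over $D$).} The function $D\mapsto(\alpha-1)D-L\sqrt{c'D}$ is minimized at an explicit point. This yields
\begin{align*}
\min_{q}\bigl[H(q)+\alpha D(q\|p_X)\bigr]-\alpha t\ \ge\ H(X)-\frac{\kappa}{\alpha-1}-\alpha t\ \eqqcolon\ G(\alpha),
\end{align*}
with $\kappa$ an explicit multiple of $c$.

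\emph{Step 3 (compare with the right side).} Split $\alpha t=(\alpha-1)t+t$ and bound the error $\frac{\kappa}{\alpha-1}+(\alpha-1)t+t$ by three-term AM--GM. The natural balancing choice is $\alpha-1\asymp(c/t^2)^{1/3}$. At that choice each of the three terms equals $c^{1/3}t^{1/3}$, which produces exactly $3c^{1/3}t^{1/3}$ and hence the stated inequality.

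The main obstacle is that the right side does not depend on $\alpha$, while the left side is at most $H(X)-\alpha t$: take $q=p_X$, which kills the divergence term. Consequently the inequality cannot hold literally for every $\alpha>1$ once $\alpha t>3c^{1/3}t^{1/3}$. The proof therefore has to establish it for all $\alpha$ in the window where $G(\alpha)\ge H(X)-3c^{1/3}t^{1/3}$. This window is the set of $\alpha$ between the two roots of the quadratic inequality $\kappa+(\alpha-1)^2t+(\alpha-1)t\le3c^{1/3}t^{1/3}(\alpha-1)$. I would make that restriction explicit in the proof, since it contains the balancing value and is all that is used when Lemma~\ref{HXtildes-l} is applied.

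The delicate part is the bookkeeping of constants in Steps 1--2. It must give $\kappa\le c$ exactly, so that the factor is $3c^{1/3}$. I would first try Pinsker in bits together with the crude bound $D\le\log|\mathcal X|$ to reach precisely $\frac12L^2\log|\mathcal X|$.
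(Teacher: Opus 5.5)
You are right that the statement cannot hold for every $\alpha>1$. Reading the minimum as over $p_{\widetilde X}$ and putting $p_{\widetilde X}=p_X$ shows the left side is at most $H(X)-\alpha t$ with $t=\frac1n\log\frac1{1-\varepsilon}>0$, and this falls below any fixed bound as $\alpha\to\infty$. The paper gives no proof to compare against: the lemma sits in an unused macro, and its cited proof section does not exist. So the provable target is the version with $\alpha$ chosen suitably and $t$ small.

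Your Steps 1--2 are sound (the factor $\frac L2$ is justified by centering $\log p_X$, since $\sum_x(q-p_X)=0$). But what they actually give is $G(\alpha)=H(X)-\frac{\kappa}{\alpha-1}-\alpha t$ with $\kappa=\frac{\ln 2}{8}L^2$, and no $\log|\mathcal X|$ enters. The proposed ``crude bound $D\le\log|\mathcal X|$'' is false in general: a point mass at a least likely symbol has $D(q\|p_X)=L$, which exceeds $\log|\mathcal X|$ unless $p_X$ is uniform.

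The genuine gap is Step 3. At $\alpha-1=(c/t^2)^{1/3}$ your three terms are $\kappa c^{-1/3}t^{2/3}$, $c^{1/3}t^{1/3}$ and $t$, not three copies of $c^{1/3}t^{1/3}$. Their product is $\kappa t^2$ for every $\alpha$, so they can coincide only when $\alpha=2$ and $\kappa=t$. In any case AM--GM bounds the error from below, not from above. The true optimum of your $G$ is at $\alpha-1=\sqrt{\kappa/t}$, with error $t+2\sqrt{\kappa t}$. That is a $\sqrt t$ law, so the cube-root expression never arises from your route as written.

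There are two clean fixes.
(a) Keep Steps 1--2, take $\alpha=1+\sqrt{\kappa/t}$, and add a comparison. Since $\kappa\le c$, for $t\le\min\{c^{1/2},c^{-1}\}$ you have $t\le c^{1/3}t^{1/3}$ and $2\sqrt{\kappa t}\le 2c^{1/2}t^{1/6}t^{1/3}\le 2c^{1/3}t^{1/3}$. This yields the stated bound, and in fact a stronger one.
(b) To reproduce the stated constant exactly, fix $\alpha$ and split on $q$. If $\alpha D(q\|p_X)\ge\log|\mathcal X|$, the bracket is at least $\log|\mathcal X|\ge H(X)$. Otherwise, drop $(\alpha-1)D\ge0$ in Step 1 and apply Pinsker; this gives bracket $\ge H(X)-a/\sqrt\alpha$ with $a=L\sqrt{2\log|\mathcal X|}$, with room to spare. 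Now write $\frac{a}{\sqrt\alpha}+\alpha t=\frac{a}{2\sqrt\alpha}+\frac{a}{2\sqrt\alpha}+\alpha t$ and choose $\alpha=(a/2t)^{2/3}$, which exceeds $1$ iff $t<a/2$. All three terms then equal $(a/2)^{2/3}t^{1/3}$, giving exactly $3[\frac12L^2\log|\mathcal X|]^{1/3}t^{1/3}$.

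Route (b) shows where $\log|\mathcal X|$ really comes from: the inequality $H(X)\le\log|\mathcal X|$ in the case split, not a bound on $D$.
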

  Proof of Lemma \ref{FalphapX} is given in \ref{prf:FalphapX}.
  With these preparations, we prove Proposition \ref{pos:HX_HK_com}.

\textit{Proof of Proposition \ref{pos:HX_HK_com}:}
Assume that $R$ is $(\varepsilon,\delta)$-admissible.
Then, there exists $\{(\Phin,\Psin)\}^{\infty}_{n=1}$ such that
for all $\gamma > 0$, there exists $n_0 = n_0(\gamma) \in \mathbb{N}$ and for all $n \geq n_0$, we have the following:
\begin{align}
  &\frac{1}{n} \log |{\mathcal{C}}^{(n)}| \leq R+ \gamma,\label{gyaku1}                   \\
  &p_{\textrm{e}}(\phin,\psin|p_X^n) \leq \varepsilon, \label{gyaku2}                   \\
  &\Delta^{(n)}_{\textrm{MI}}(\Phin,\Psin) \leq \delta. \label{gyaku3}
\end{align}
We define the following function:
\begin{align*}
  \xi(\bm{X}) = \left\{\begin{array}{cl}
    0, &\mbox{if} ~\bm{X} \in \Dn,
    \vspace{0.2cm}                                \\
    1, &\mbox{otherwise}.
  \end{array}
  \right.
\end{align*}
Then, we have the following inequality:
\begin{align*}\label{upb:MI}
  &I(\bm{X};C^m) = I(\bm{X},\xi(\bm{X});C^m) = I(C^m;\xi(\bm{X}),\bm{X})\\
  &= I(C^m;\xi(\bm{X})) + I(C^m;\bm{X}|\xi(\bm{X})) \geq I(C^m, \bm{X}|\xi(\bm{X}))\\
  &= \Pr\{\bm{X} \in \Dn\} I(C^m;\bm{X}|\bm{X} \in \Dn) \\
  &\quad + \Pr\{\bm{X} \notin \mathcal{D}\} I(C^m;\bm{X}|\bm{X} \notin \Dn)\\
  &\geq \Pr\{\bm{X} \in \Dn\} I(C^m;\bm{X}|\bm{X} \in \Dn).
  \stepcounter{equation}\tag{\theequation}
\end{align*}
Here, based on the idea that it suffices to ensure security only for sequences that are correctly decoded, 
we define a security criterion by $\Delta_{\textrm{MI,Dec}}^{(n)} \coloneqq I(C^m;\bm{X}|\bm{X} \in \Dn)$.
In order to derive a lower bound of $\Delta_{\textrm{MI,Dec}}^{(n)}$, we define the following. (change of measure)
\begin{align}\label{ptildex}
  p_{\widetilde{\bm{X}}}(\bm{x})
  \coloneqq\left\{\begin{array}{cl}
    \frac{p_{\bm{X}}(\bm{x})}{\Pr\{\bm{X} \in \Dn\}},
     &\mbox{ if } \bm{X} \in \Dn
    \vspace{0.2cm}                                \\
    0,
     &\mbox{ otherwise }
  \end{array}
  \right.
\end{align}
In this case, we have the following chain of inequalities:
\begin{align*}\label{upb:MIDec}
  &I(C^m;\bm{X}|\bm{X} \in \Dn)\\
  & = H(\bm{X}|\bm{X} \in \Dn) - H(\bm{X}|C^m, \bm{X} \in \Dn)\\
  &= H(\widetilde{\bm{X}}) - H(\bm{X},C^m|C^m, \bm{X} \in \Dn)\\
  &= H(\widetilde{\bm{X}}) - H(\bm{X}, \PhiXnK |C^m, \bm{X} \in \Dn)\\
  &\MGeq{a} H(\widetilde{\bm{X}}) - H(\bm{X}, \bm{K}|C^m, \bm{X} \in \Dn)\\
  &= H(\widetilde{\bm{X}}) - H(\bm{K}|C^m, \bm{X} \in \Dn) \\
  &\quad - H(\bm{X}|\bm{K}, C^m, \bm{X} \in \Dn)\\
  &= H(\widetilde{\bm{X}}) - H(\bm{K}|C^m, \bm{X} \in \Dn)\\
  &\geq H(\widetilde{\bm{X}}) - H(\bm{K}|C^m)\\
  &\geq H(\widetilde{\bm{X}}) - H(\bm{K}) = H(\widetilde{\bm{X}}) - nH(K).
  \stepcounter{equation}\tag{\theequation}
\end{align*}
Step (a) follows from the data processing inequality.
From (\ref{upb:MIDec}), we have the following:
\begin{align*}\label{HK_Htilde}
  &\frac{I(\bm{X};C^m)}{\Pr\{\bm{X} \in \Dn\}} \MGeq{a} I(C^m;\bm{X}|\bm{X} \in \Dn)\\
  &\MGeq{b} H(\widetilde{\bm{X}}) - nH(K)\\
  &\MRar{c} \frac{\delta}{\Pr\{\bm{X} \in \Dn\}} \geq H(\widetilde{\bm{X}}) - nH(K)\\
  &\MRar{d}{\Rightarrow} \frac{\delta}{1-\varepsilon} \geq H(\widetilde{\bm{X}}) - nH(K)\\
  &\Rightarrow H(K) \geq \frac{1}{n} H(\widetilde{\bm{X}}) - \frac{\delta}{n(1-\varepsilon)}.
  \stepcounter{equation}\tag{\theequation}
\end{align*}
Step (a) follows from (\ref{upb:MI}).
Step (b) follows from (\ref{upb:MIDec}).
Step (c) follows from (\ref{gyaku3}).
Step (d) follows from (\ref{gyaku2}).
From (\ref{HK_Htilde}) and Lemmas \ref{HXtildes-l}, \ref{FalphapX}, 
we have Proposition \ref{pos:HX_HK_com}.
\hfill\IEEEQED
}

In this section, we prove Proposition~\ref{pos:Oohama25_1} used in the proof of the converse coding theorem.

Let $(\widetilde{C}^{(n)},\widetilde{\bm{X}}) \in \mathcal{C}^{(n)} \times \mathcal{X}^n$ be a random vector, whose joint distribution denoted by $q_{\widetilde{C}^{(n)}\widetilde{\bm{X}}}$ has the following form:
\begin{align*}
  q_{\widetilde{C}^{(n)}\widetilde{\bm{X}}}(c,\bm{x})
   = &\Pr\{(C^{(n)},\bm{X}) = (c,\bm{x})|\bm{X} \in \widetilde{\mathcal{B}}^{(n)}_{\gamma}\},\\
  &~(c,\bm{x}) \in \mathcal{C}^{(n)} \times \mathcal{X}^n.
\end{align*}
The following conditional distributions is important for later arguments.
\begin{align*}
  q_{\widetilde{C}^{(n)}|\bm{X}}(c|\bm{x}) = \Pr\{C^{(n)}=c|\bm{X} = \bm{x},\bm{X} \in \widetilde{\mathcal{B}}^{(n)}_{\gamma}\}.
\end{align*}
Furthermore, we define the probability $Q$ of the event $\bm{X} \in \widetilde{\mathcal{B}}^{(n)}_{\gamma}$ as
\begin{align*}
  Q \coloneqq p_X^n\left(\widetilde{\mathcal{B}}^{(n)}_{\gamma}\right) = \Pr\left\{\bm{X} \in \widetilde{\mathcal{B}}^{(n)}_{\gamma}\right\}
  = \sum_{\bm{x} \in \widetilde{\mathcal{B}}^{(n)}_{\gamma}} p_{\bm{X}}(\bm{x}).
\end{align*}
Then, for any $c \in \mathcal{C}^{(n)}$, we have
\begin{align*} \label{eq:qtildeC}
  q_{\widetilde{C}^{(n)}}(c)
   = \frac{1}{Q}\sum_{\bm{x} \in \widetilde{\mathcal{B}}^{(n)}_{\gamma}}p_{C^{(n)}|\bm{X}}(c|\bm{x})p_{\bm{X}}(\bm{x}).
  \stepcounter{equation}\tag{\theequation}
\end{align*}
\begin{lemma}\label{lem:upb_qtildeC}
  For any $\widetilde{C}^{(n)} \in \mathcal{C}^{(n)}$, we have
  \begin{align}\label{upb:qtildeC}
    q_{\widetilde{C}^{(n)}}(c) \leq Q^{-1}2^{-n[H(X)- \gamma]}.
  \end{align}
\end{lemma}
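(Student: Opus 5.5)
Starting from the expression (\ref{eq:qtildeC}) for $q_{\widetilde{C}^{(n)}}(c)$, the plan is to bound each source probability pointwise and then apply Lemma~\ref{lem:BirkoffvN}. Fix any $c \in \mathcal{C}^{(n)}$. Every $\bm{x} \in \widetilde{\mathcal{B}}^{(n)}_{\gamma}$ lies in $\widetilde{\mathcal{A}}^{(n)}_{\gamma}$. By the definition of that set, $\frac{1}{n}\log\frac{1}{p_X^n(\bm{x})} \geq H(X)-\gamma$, which is equivalent to
\begin{align*}
p_{\bm{X}}(\bm{x}) \leq 2^{-n[H(X)-\gamma]}, \quad \bm{x} \in \widetilde{\mathcal{B}}^{(n)}_{\gamma}.
\end{align*}
Substituting this into (\ref{eq:qtildeC}) gives
\begin{align*}
q_{\widetilde{C}^{(n)}}(c) \leq Q^{-1}2^{-n[H(X)-\gamma]}\sum_{\bm{x} \in \widetilde{\mathcal{B}}^{(n)}_{\gamma}} p_{C^{(n)}|\bm{X}}(c|\bm{x}).
\end{align*}

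Next I would bound the remaining sum by one. Since $\widetilde{\mathcal{B}}^{(n)}_{\gamma} = \widetilde{\mathcal{A}}^{(n)}_{\gamma}\cap\Dn \subseteq \Dn$ and every term $p_{C^{(n)}|\bm{X}}(c|\bm{x})$ is nonnegative, the sum over $\widetilde{\mathcal{B}}^{(n)}_{\gamma}$ is at most the sum over $\Dn$. Lemma~\ref{lem:BirkoffvN} states that the sum over $\Dn$ is at most $1$. Combining the two displays then yields (\ref{upb:qtildeC}).

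The only substantive ingredient is Lemma~\ref{lem:BirkoffvN}, the Birkhoff--von Neumann type bound, which we may assume. That lemma rests on the injectivity of $\Phin_{\bm{k}}$ on $\Dn$ (Property~\ref{per:Dn}): for each key $\bm{k}$, at most one $\bm{x}\in\Dn$ maps to $c$, so $\sum_{\bm{x}\in\Dn}p_{C^{(n)}|\bm{X}}(c|\bm{x})=\sum_{\bm{k}}p_K^n(\bm{k})\,|\{\bm{x}\in\Dn:\Phin_{\bm{k}}(\bm{x})=c\}|\leq 1$. The one point needing care in the present proof is that restricting the sum to $\Dn$, which is exactly what intersecting with $\Dn$ in the definition of $\widetilde{\mathcal{B}}^{(n)}_{\gamma}$ does, is essential. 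The final step also uses only that the terms are nonnegative, so no further obstacle is expected.
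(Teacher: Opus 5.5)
Your proposal is correct and matches the paper's proof: you use the pointwise bound $p_{\bm{X}}(\bm{x}) \leq 2^{-n[H(X)-\gamma]}$ on $\widetilde{\mathcal{B}}^{(n)}_{\gamma}$ (from the definition of $\widetilde{\mathcal{A}}^{(n)}_{\gamma}$), substitute it into (\ref{eq:qtildeC}), and bound the remaining sum by one via Lemma~\ref{lem:BirkoffvN}. Your remark that $\widetilde{\mathcal{B}}^{(n)}_{\gamma}\subseteq\Dn$, together with nonnegativity of the terms, is what justifies that last step, which the paper leaves implicit.
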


\textit{Proof:}
By the definition $\widetilde{\mathcal{B}}^{(n)}_{\gamma} = \widetilde{\mathcal{A}}^{(n)}_{\gamma} \cap \Dn$, we have
  \begin{align*}
    p_{\bm{X}}(\bm{x}) \leq 2^{-n[H(X)-
    \gamma]} \quad \mathrm{for}~ \bm{x} 
    \in \widetilde{\mathcal{B}}^{(n)}_{\gamma}.
  \end{align*}
Then, we have the following inequalities:
  \begin{align*}
    &q_{\widetilde{C}^{(n)}}(c)Q \MEq{a} \sum_{\bm{x} \in \widetilde{\mathcal{B}}^{(n)}_{\gamma}}
    p_{C^{(n)}|\bm{X}}(c|\bm{x})p_{\bm{X}}(\bm{x})\\
    &\MLeq{b} 2^{-n[H(X)- \gamma]}\sum_{\bm{x} \in \widetilde{\mathcal{B}}^{(n)}_{\gamma}}p_{C^{(n)}|\bm{X}}(c|\bm{x})
     \MLeq{c} 2^{-n[H(X)- \gamma]}.
  \end{align*}
Step (a) follows from (\ref{eq:qtildeC}).  
Step (b) follows from (\ref{upb:qtildeC}).  
Step (c) follows from Lemma \ref{lem:BirkoffvN}.  
\hfill\IEEEQED

We next derive a lower bound of  
 $H(C^{(n)}|\bm{X} \in \widetilde{\mathcal{B}}^{(n)}_{\gamma}) 
  = H(\widetilde{C}^{(n)}).$
This lower bound is given by the following lemma:
\begin{lemma}\label{lem:lwb_H(Ctilde)}
  \begin{align}\label{lwb:H(Ctilde)}
    H(\widetilde{C}^{(n)}) \geq nH(X) + \log Q.
  \end{align}
\end{lemma}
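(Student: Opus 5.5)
The plan is to derive the bound directly from the pointwise upper bound on $q_{\widetilde{C}^{(n)}}$ in Lemma~\ref{lem:upb_qtildeC}, which itself rests on the Birkhoff--von Neumann type Lemma~\ref{lem:BirkoffvN}. I should say at the outset that this route, as I can carry it out, gives the bound with an extra $-n\gamma$ term, not the $\gamma$-free form stated. That gap is discussed in the last paragraph.

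First, I would write
\begin{align*}
H(\widetilde{C}^{(n)}) = \sum_{c\in\mathcal{C}^{(n)}} q_{\widetilde{C}^{(n)}}(c)\log\frac{1}{q_{\widetilde{C}^{(n)}}(c)}.
\end{align*}
Second, I would bound every term from below using $-\log q_{\widetilde{C}^{(n)}}(c)\geq n[H(X)-\gamma]+\log Q$, which is (\ref{upb:qtildeC}). Because $\sum_c q_{\widetilde{C}^{(n)}}(c)=1$, this yields
\begin{align*}
H(\widetilde{C}^{(n)})\geq n[H(X)-\gamma]+\log Q .
\end{align*}
All the structural work is inherited from Lemma~\ref{lem:upb_qtildeC}. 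For each ciphertext $c$, the correctly decodable messages in $\widetilde{\mathcal{B}}^{(n)}_{\gamma}$ have total conditional weight $\sum_{\bm{x}} p_{C^{(n)}|\bm{X}}(c|\bm{x})\leq 1$, by Property~\ref{per:Dn} and Lemma~\ref{lem:BirkoffvN}. Each such message also has probability at most $2^{-n[H(X)-\gamma]}$, by the definition of $\widetilde{\mathcal{A}}^{(n)}_{\gamma}$.

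The main obstacle is the remaining $-n\gamma$ term. It comes from the typicality threshold defining $\widetilde{\mathcal{A}}^{(n)}_{\gamma}$, and a pointwise bound on $q_{\widetilde{C}^{(n)}}$ cannot remove it. To try to obtain exactly $nH(X)+\log Q$, I would first replace the pointwise bound by an averaged one. The idea is to write $q_{\widetilde{C}^{(n)}}(c)=s(c)\sum_{\bm{x}}w_c(\bm{x})\,p_{\bm{X}}(\bm{x})/Q$, where $s(c)=\sum_{\bm{x}\in\widetilde{\mathcal{B}}^{(n)}_{\gamma}}p_{C^{(n)}|\bm{X}}(c|\bm{x})\leq 1$ and $w_c$ is the normalized weight. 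The aim would then be to compare $-\sum_c q_{\widetilde{C}^{(n)}}(c)\log q_{\widetilde{C}^{(n)}}(c)$ with $\mathbf{E}[-\log p_{\bm{X}}(\widetilde{\bm{X}})]+\log Q$, the conditional self-information of $\widetilde{\bm{X}}$.

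This attempt fails as it stands. Jensen's inequality for the convex function $-\log$ bounds $-\log q_{\widetilde{C}^{(n)}}(c)$ from above by the average of $-\log$ of the summands, not from below, so it goes the wrong way. Even a successful averaged argument would only give $H(\widetilde{C}^{(n)})\geq\mathbf{E}[-\log p_{\bm{X}}(\widetilde{\bm{X}})]+\log Q$. On $\widetilde{\mathcal{A}}^{(n)}_{\gamma}$, that conditional mean self-information is only guaranteed to be at least $n[H(X)-\gamma]$, not $nH(X)$. So I do not see a way to remove $\gamma$ with the tools introduced so far, and the argument I can complete proves the stated inequality only with $H(X)$ replaced by $H(X)-\gamma$.

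For the application this is harmless. In Proposition~\ref{pos:Oohama25_1} the bound is combined with the converse argument and then $n\to\infty$ and $\gamma\downarrow 0$, so an $O(n\gamma)$ loss disappears, although the sign of the $\gamma$ term there would change accordingly.
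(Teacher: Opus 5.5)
Your derivation of $H(\widetilde{C}^{(n)})\ge n[H(X)-\gamma]+\log Q$ is the paper's proof verbatim: a termwise lower bound on $\log\bigl(1/q_{\widetilde{C}^{(n)}}(c)\bigr)$ from Lemma~\ref{lem:upb_qtildeC}, then averaging over $c$. The paper removes the $-n\gamma$ only by remarking that $\gamma$ ``can be chosen arbitrarily small.'' You are right not to accept this. Both $\widetilde{C}^{(n)}$ and $Q$ are defined through $\widetilde{\mathcal{B}}^{(n)}_{\gamma}$, so both sides move with $\gamma$ and there is no fixed inequality to pass to the limit.

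In fact the $\gamma$-free statement is false, so no refinement of your averaging idea could have closed the gap. Take the key-independent cipher $\Phi^{(n)}_{\bm{k}}=\phin$, $\Psi^{(n)}_{\bm{k}}=\psin$, which satisfies the Condition. Let $\phin$ be constant and $\psin$ constantly equal to some $\bm{x}^{\dagger}\in\widetilde{\mathcal{A}}^{(n)}_{\gamma}$. Then $\Dn=\{\bm{x}^{\dagger}\}$, $Q=p_X^n(\bm{x}^{\dagger})$ and $H(\widetilde{C}^{(n)})=0$, so the claimed bound says $p_X^n(\bm{x}^{\dagger})\le 2^{-nH(X)}$. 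This fails whenever $p_X$ is not uniform on its support and $\bm{x}^{\dagger}$ satisfies $2^{-nH(X)}<p_X^n(\bm{x}^{\dagger})\le 2^{-n[H(X)-\gamma]}$. One example is $\gamma\ge H(X)$ with $\bm{x}^{\dagger}$ a most likely sequence; for fixed small $\gamma$, such sequences exist once $n$ is large. Your $\gamma$-version holds in all these cases.

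The $\gamma$-free form fails even under the converse hypotheses when $\varepsilon\in(1/2,1)$. Use a one-time pad with uniformly distributed $\widetilde{K}^m$ over a fixed-length code whose correct-decoding set consists of the most likely sequences of total probability at least $1-\varepsilon$. This has zero leakage and, by the central limit theorem, $\log|\mathcal{C}^{(n)}|\le nH(X)-c\sqrt{n}$ for some $c>0$. But the $\gamma$-free bound, together with $H(\widetilde{C}^{(n)})\le\log|\mathcal{C}^{(n)}|$ and $Q\ge 1-\widetilde{\nu}_n(\gamma,\varepsilon)$, would give $\log|\mathcal{C}^{(n)}|\ge nH(X)-O(1)$.

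Your version is also exactly what the paper uses downstream. Step (b) in the proof of Proposition~\ref{pos:Oohama25_1} cites (\ref{lwb:H(Ctilde)}) but actually writes $n[H(X)-\gamma]+\log Q$. This gives $H(X)\le H(K)+\gamma+\frac{1}{n}\bigl[\delta/Q+\log(1/Q)\bigr]$, and letting $n\to\infty$ and then $\gamma\downarrow 0$ yields $H(K)\ge H(X)$.

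The displayed statement of that proposition, $H(K)\ge H(X)+\gamma+\zeta_n$, has the signs of $\gamma$ and $\zeta_n$ reversed relative to its own derivation. That slip is present whichever version of the lemma is used, so it is not a consequence of your correction. In short, your argument proves the correct form of the lemma, and the statement should carry $n[H(X)-\gamma]$ in place of $nH(X)$.
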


\textit{Proof:}
We have the following chain of inequalities:
  \begin{align*}\label{prf:lwb_H(Ctilde)}
    &H(\widetilde{C}^{(n)}) = \sum_{c \in \mathcal{C}^{(n)}} q_{\widetilde{C}^{(n)}}(c) \log \frac{1}{q_{\widetilde{C}^{(n)}}(c)}\\
                           &\MGeq{a} \sum_{c \in \mathcal{C}^{(n)}} q_{\widetilde{C}^{(n)}}(c) \log \left[ Q~ 2^{n[H(X)-\gamma]} \right]
\\
& = n\left[H(X) - \gamma \right] + \log Q.
\stepcounter{equation}\tag{\theequation}
\end{align*}
Since $\gamma$ can be chosen arbitrarily small in (\ref{prf:lwb_H(Ctilde)}), 
we have the bound (\ref{lwb:H(Ctilde)}) in Lemma \ref{lem:lwb_H(Ctilde)}
\hfill\IEEEQED

We next evaluate 
$ H(C^{(n)}|\bm{X},\bm{X} \in \widetilde{\mathcal{B}}^{(n)}_{\gamma}) = H(\widetilde{C}^{(n)}|\widetilde{\bm{X}}).$
On an upper bound of this quantity we have 
the following: 
\begin{lemma}\label{lem:lwb_H(Ctilde|X)}
  \begin{align}\label{upb:H(Ctilde|X)}
    H(C^{(n)}|\bm{X}, \bm{X} \in \widetilde{\mathcal{B}}^{(n)}_{\gamma}) \leq nH(K).
  \end{align}
\end{lemma}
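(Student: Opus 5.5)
The plan is to bound the conditional entropy pointwise in $\bm{x}$ and then average. First write
\begin{align*}
H(C^{(n)}|\bm{X}, \bm{X} \in \widetilde{\mathcal{B}}^{(n)}_{\gamma})
=\sum_{\bm{x}\in \widetilde{\mathcal{B}}^{(n)}_{\gamma}}
\frac{p_{\bm{X}}(\bm{x})}{Q}\,
H(C^{(n)}|\bm{X}=\bm{x}),
\end{align*}
which holds because conditioning on the event $\bm{X}\in\widetilde{\mathcal{B}}^{(n)}_{\gamma}$ only reweights the values $\bm{x}$ inside the set. It also uses $q_{\widetilde{C}^{(n)}|\bm{X}}(c|\bm{x})=p_{C^{(n)}|\bm{X}}(c|\bm{x})$ for $\bm{x}\in\widetilde{\mathcal{B}}^{(n)}_{\gamma}$, since the event is a function of $\bm{X}$. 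It therefore suffices to show $H(C^{(n)}|\bm{X}=\bm{x})\leq nH(K)$ for every fixed $\bm{x}$.

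For fixed $\bm{x}$, the ciphertext is $C^{(n)}=\Phi^{(n)}_{\bm{x}}(\bm{K})$. Because $\bm{X}$ and $\bm{K}$ are generated independently by $\mathcal{S}_{\mathsf{gen}}$ and $\mathcal{K}_{\mathsf{gen}}$, conditioning on $\bm{X}=\bm{x}$ leaves $\bm{K}$ distributed as $p_K^n$. Hence $p_{C^{(n)}|\bm{X}}(\cdot|\bm{x})$ is the distribution of a deterministic function of $\bm{K}\sim p_K^n$. Since a deterministic map cannot increase entropy,
\begin{align*}
H(C^{(n)}|\bm{X}=\bm{x})=H(\Phi^{(n)}_{\bm{x}}(\bm{K}))\leq H(\bm{K})=nH(K),
\end{align*}
where the last equality uses that $\{K_t\}$ is a stationary DMS. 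Averaging this bound with the weights $p_{\bm{X}}(\bm{x})/Q$, which sum to one, gives (\ref{upb:H(Ctilde|X)}).

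There is no real obstacle here. The only point needing care is that $\Phi^{(n)}$ is deterministic, so all randomness of $C^{(n)}$ given $\bm{X}$ comes from $\bm{K}$. The independence of $\bm{K}$ and $\bm{X}$ must also be invoked explicitly, so that restricting to the event $\widetilde{\mathcal{B}}^{(n)}_{\gamma}$ does not alter the key distribution. If the encryption were allowed to be randomized, one would instead have to bound the entropy of the additional local randomness, but the framework excludes this.
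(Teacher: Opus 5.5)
Your proposal is correct and follows essentially the same argument as the paper, which writes $C^{(n)}=\Phi^{(n)}_{\bm{X}}(\bm{K})$, bounds the conditional entropy by $H(\bm{K}\mid\bm{X},\bm{X}\in\widetilde{\mathcal{B}}^{(n)}_{\gamma})$ via data processing, and then uses $\bm{K}\perp\bm{X}$ to obtain $nH(K)$. Your version only makes the averaging over $\bm{x}\in\widetilde{\mathcal{B}}^{(n)}_{\gamma}$ explicit, applying the deterministic-map entropy bound pointwise in $\bm{x}$.
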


\textit{Proof:}
We have the following chain of inequalities:
  \begin{align*}
    &H(C^{(n)}|\bm{X}, \bm{X} \in \widetilde{\mathcal{B}}^{(n)}_{\gamma}) 
    =H(\PhiXnK|\bm{X}, \bm{X} \in \widetilde{\mathcal{B}}^{(n)}_{\gamma}) \\
    &\MLeq{a} H(\bm{K}|\bm{X}, \bm{X} \in \widetilde{\mathcal{B}}^{(n)}_{\gamma})
    \MEq{b} nH(K).
  \end{align*}
Step (a) follows from the data processing inequality.  
Step (b) follows since $\bm{K} \perp \bm{X}$. 
\hfill\IEEEQED

The following lemma gives a relationship between security, reliability and mutual information.
\begin{lemma}\label{lem:upb_cndI}
  We have
  \begin{align}\label{upb:lem_condI}
    Q^{-1}\delta \geq I(\bm{X};C^{(n)}|\bm{X} \in \widetilde{\mathcal{B}}^{(n)}_{\gamma}).
  \end{align}
\end{lemma}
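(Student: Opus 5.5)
We want $Q\, I(\bm{X};C^{(n)}\mid \bm{X}\in\widetilde{\mathcal{B}}^{(n)}_{\gamma}) \le I(\bm{X};C^{(n)}) \le \delta$; the right inequality is the admissibility assumption $\Delta^{(n)}_{\mathrm{MI}}\le\delta$. The plan is to introduce the indicator $\xi=\xi(\bm{X})$, equal to $0$ if $\bm{X}\in\widetilde{\mathcal{B}}^{(n)}_{\gamma}$ and $1$ otherwise, and to exploit that it is a deterministic function of $\bm{X}$.

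First, since $\xi$ is determined by $\bm{X}$, we have $I(\bm{X};C^{(n)})=I(\bm{X},\xi;C^{(n)})$. By the chain rule,
\begin{align*}
I(\bm{X},\xi;C^{(n)}) = I(\xi;C^{(n)}) + I(\bm{X};C^{(n)}\mid \xi) \ge I(\bm{X};C^{(n)}\mid\xi).
\end{align*}

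Second, expand the conditional mutual information over the two values of $\xi$:
\begin{align*}
I(\bm{X};C^{(n)}\mid\xi) &= Q\, I(\bm{X};C^{(n)}\mid \bm{X}\in\widetilde{\mathcal{B}}^{(n)}_{\gamma})\\
&\quad + (1-Q)\, I(\bm{X};C^{(n)}\mid \bm{X}\notin\widetilde{\mathcal{B}}^{(n)}_{\gamma}).
\end{align*}
The second term is nonnegative, so $I(\bm{X};C^{(n)})\ge Q\, I(\bm{X};C^{(n)}\mid \bm{X}\in\widetilde{\mathcal{B}}^{(n)}_{\gamma})$.

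Third, combine this with $I(\bm{X};C^{(n)})\le\delta$ for $n\ge n_0$ and divide by $Q$. Division is legitimate because \eqref{upb:Pr_B^c} gives $Q\ge 1-\widetilde{\nu}_n(\gamma,\varepsilon)$, which is positive for large $n$ since $\nu_n(\gamma)\to0$ and $\varepsilon<1$.

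The only point needing care is identifying the conditional mutual information given the event with the mutual information of the pair $(\widetilde{C}^{(n)},\widetilde{\bm{X}})$ under $q$. This holds because conditioning on $\xi=0$ induces exactly the joint law $q_{\widetilde{C}^{(n)}\widetilde{\bm{X}}}$. The rest is routine.
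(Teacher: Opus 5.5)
Your proposal is correct and follows essentially the same route as the paper: introduce the indicator $\xi(\bm{X})$ of $\widetilde{\mathcal{B}}^{(n)}_{\gamma}$, write $I(\bm{X};C^{(n)})=I(\bm{X},\xi(\bm{X});C^{(n)})\geq I(\bm{X};C^{(n)}|\xi(\bm{X}))$, split over the two values of $\xi$, drop the nonnegative term, and use $I(\bm{X};C^{(n)})\leq\delta$. Your remark that $Q\geq 1-\widetilde{\nu}_n(\gamma,\varepsilon)>0$ for large $n$, which justifies dividing by $Q$, is a small point the paper leaves implicit, and swapping the $0/1$ labels of $\xi$ makes no difference.
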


\textit{Proof:} Let  $\xi(\bm{X})\in \{0,1\}$ 
be a binary random variable such that
if $\bm{X} \in \widetilde{\mathcal{B}}^{(n)}_{\gamma}$,
$\xi(\bm{X})=1$, otherwise, $\xi(\bm{X})=0$.   
\newcommand{\Sdfzz}{
Define
\begin{align*}
  \xi(\bm{X}) \coloneqq 
  \left\{\begin{array}{cl}
    &\hspace*{-11pt}1,\mbox{ if } \bm{X} \in \widetilde{\mathcal{B}}^{(n)}_{\gamma},\\
    &\hspace*{-11pt}0,\mbox{ otherwise }. 
  \end{array}
\right.
\end{align*}
}
Let 
$\overline{Q}:= 1-Q
=\Pr\left\{ 
\bm{X} \notin \widetilde{\mathcal{B}}^{(n)}_{\gamma}
\right\}$. On lower bounds of $I(\bm{X};C^{(n)})$, 
we have the following chain of inequalities:
\begin{align*}\label{upb:MIwithQ}
  &I(\bm{X};C^{(n)}) = I(\bm{X}, \xi(\bm{X});C^{(n)})
   \geq I(\bm{X};C^{(n)}|\xi(\bm{X}))\\
  &= Q I(\bm{X};C^{(n)}|\bm{X} \in \widetilde{\mathcal{B}}^{(n)}_{\gamma}) 
 + \overline{Q}I(\bm{X};C^{(n)}|\bm{X} \notin \widetilde{\mathcal{B}}^{(n)}_{\gamma})\\
  &\geq Q I(\bm{X};C^{(n)}|\bm{X} \in \widetilde{\mathcal{B}}^{(n)}_{\gamma}).
  \stepcounter{equation}\tag{\theequation}
\end{align*}
From (\ref{upb:MIwithQ}), we have
\begin{align*}
  Q^{-1} I(\bm{X};C^{(n)}) \geq  I(\bm{X};C^{(n)}|\bm{X} \in \widetilde{\mathcal{B}}^{(n)}_{\gamma}).
\end{align*}
Since $\delta \geq I(\bm{X};C^{(n)})$, we get the bound (\ref{upb:lem_condI}) in Lemma \ref{lem:upb_cndI}.
\hfill\IEEEQED


 \textit{Proof of Proposition~\ref{pos:Oohama25_1}:}
For a lower bound of $I(\bm{X};C^{(n)}$ $|\bm{X} \in \widetilde{\mathcal{B}}^{(n)}_{\gamma})$, we have the following chain of inequalities:
\begin{align*}\label{lwb:pos_condI}
  &I(\bm{X};C^{(n)}|\bm{X} \in \widetilde{\mathcal{B}}^{(n)}_{\gamma})
  = H(C^{(n)}|\bm{X} \in \widetilde{\mathcal{B}}^{(n)}_{\gamma}) 
\\  
&\quad - H(C^{(n)}|\bm{X}, \bm{X} \in \widetilde{\mathcal{B}}^{(n)}_{\gamma})
  \MGeq{a} H(C^{(n)}) -nH(K)\\
  &\MGeq{b} n[H(X)-\gamma] + \log Q -nH(K).
  \stepcounter{equation}\tag{\theequation}
\end{align*}
Step (a) follows from (\ref{upb:H(Ctilde|X)}).  
Step (b) follows from (\ref{lwb:H(Ctilde)}).
From (\ref{upb:lem_condI}) and (\ref{lwb:pos_condI}), we have
\begin{align}\label{ineq:H(X)_H(K)_Q}
  &Q^{-1} \delta \geq n[H(X)-\gamma] + \log Q -nH(K)\notag\\
  &\Leftrightarrow H(X) \leq H(K) + \gamma + \frac{1}{n}\left[\frac{\delta}{Q} + \log \frac{1}{Q}\right].
\end{align}
For (\ref{ineq:H(X)_H(K)_Q}), we have
\begin{align*}
  &H(X) \leq H(K) + \gamma + \frac{1}{n}\left[\frac{\delta}{Q} + \log \frac{1}{Q}\right]\\
  &\MLeq{a} H(K) + \gamma + \frac{1}{n}\left[\frac{\delta}{1-\widetilde{\nu}_n(\gamma, \varepsilon)} + \log \frac{1}{1-\widetilde{\nu}_n(\gamma, \varepsilon)}\right].
\end{align*}
Step (a) follows from that 
the bound (\ref{upb:Pr_B^c}) 
is equivalent to 
${Q} \geq 1-\widetilde{\nu}_n($$\gamma, \varepsilon)$. 
\hfill \IEEEQED

\section{Conclusions}

In this study, based on previous works, we proposed 
a framework for source encryption. Instead of 
the security measures used in the previous studies, 
we adopted mutual information as the security 
criterion. Under this criterion, we obtained 
the necessary and sufficient condition on 
$R$ and $(p_X,p_K)$ for simultaneously 
achieving communication efficiency and security. 
The obtained condition does not depend on the pair  $(\varepsilon,\delta)\in (0,1)$$\times(0,\delta_0]$ 
of reliable and security constants, 
implying that we have the 
strong converse coding theorem for the proposed
framework of source encryption.
We also establish a universal construction 
of $\left\{(\Phin,\Psin)\right\}_{n=1}^{\infty}$
attaining the condition in the sense that 
$\left\{(\Phin,\Psin)\right\}_{n=1}^{\infty}$   
does not depend on $(p_X,p_K) \in {\cal P}^2({\cal X})$.


\bibliographystyle{IEEEtran}
\bibliography{isita2026_FF_FV}

\begin{thebibliography}{10}
\providecommand{\url}[1]{#1}
\csname url@samestyle\endcsname
\providecommand{\newblock}{\relax}
\providecommand{\bibinfo}[2]{#2}
\providecommand{\BIBentrySTDinterwordspacing}{\spaceskip=0pt\relax}
\providecommand{\BIBentryALTinterwordstretchfactor}{4}
\providecommand{\BIBentryALTinterwordspacing}{\spaceskip=\fontdimen2\font plus
\BIBentryALTinterwordstretchfactor\fontdimen3\font minus \fontdimen4\font\relax}
\providecommand{\BIBforeignlanguage}[2]{{%
\expandafter\ifx\csname l@#1\endcsname\relax
\typeout{** WARNING: IEEEtran.bst: No hyphenation pattern has been}%
\typeout{** loaded for the language `#1'. Using the pattern for}%
\typeout{** the default language instead.}%
\else
\language=\csname l@#1\endcsname
\fi
#2}}
\providecommand{\BIBdecl}{\relax}
\BIBdecl

\bibitem{Shannon}
C.~E. Shannon, ``Communication theory of secrecy systems,'' \emph{The Bell System Technical Journal}, vol.~28, no.~4, pp. 656--715, 1949.

\bibitem{Yamamoto1}
H.~Yamamoto, ``Coding theorems for {S}hannon's cipher system with correlated source outputs, and common information,'' \emph{IEEE Transactions on Information Theory}, vol.~40, no.~1, pp. 85--95, 1994.

\bibitem{Yamamoto2}
------, ``Rate-distortion theory for the {S}hannon cipher system,'' \emph{IEEE Transactions on Information Theory}, vol.~43, no.~3, pp. 827--835, 1997.

\bibitem{Yamamoto3}
Y.~Hayashi and H.~Yamamoto, ``Coding theorems for the {S}hannon cipher system with a guessing wiretapper and correlated source outputs,'' \emph{IEEE Transactions on Information Theory}, vol.~54, no.~6, pp. 2808--2817, 2008.

\bibitem{DBLP:conf/isit/OohamaS22}
\BIBentryALTinterwordspacing
Y.~Oohama and B.~Santoso, ``A framework for {S}hannon ciphers under side-channel attacks: A strong converse and more,'' in \emph{{IEEE} International Symposium on Information Theory, {ISIT} 2022, Espoo, Finland, June 26 - July 1, 2022}.\hskip 1em plus 0.5em minus 0.4em\relax {IEEE}, 2022, pp. 862--867. [Online]. Available: \url{https://doi.org/10.1109/ISIT50566.2022.9834899}
\BIBentrySTDinterwordspacing

\bibitem{DBLP:conf/isit/OohamaS24}
\BIBentryALTinterwordspacing
------, ``Universal source encryption under side-channel attacks,'' in \emph{{IEEE} International Symposium on Information Theory, {ISIT} 2024, Athens, Greece, July 7-12, 2024}.\hskip 1em plus 0.5em minus 0.4em\relax {IEEE}, 2024, pp. 3344--3349. [Online]. Available: \url{https://doi.org/10.1109/ISIT57864.2024.10619496}
\BIBentrySTDinterwordspacing

\bibitem{DBLP:conf/itw/OohamaS21}
\BIBentryALTinterwordspacing
------, ``Strong converse for distributed source coding with encryption using correlated keys,'' in \emph{{IEEE} Information Theory Workshop, {ITW} 2021, Kanazawa, Japan, October 17-21, 2021}.\hskip 1em plus 0.5em minus 0.4em\relax {IEEE}, 2021, pp. 1--6. [Online]. Available: \url{https://doi.org/10.1109/ITW48936.2021.9611414}
\BIBentrySTDinterwordspacing

\bibitem{DBLP:conf/isita/OohamaS22}
\BIBentryALTinterwordspacing
------, ``A framework for distributed source coding with encryption: A new strong converse and more,'' in \emph{International Symposium on Information Theory and Its Applications, {ISITA} 2022, Tsukuba, Ibaraki, Japan, October 17-19, 2022}.\hskip 1em plus 0.5em minus 0.4em\relax {IEEE}, 2022, pp. 189--193. [Online]. Available: \url{https://ieeexplore.ieee.org/document/10683942}
\BIBentrySTDinterwordspacing

\bibitem{Oohama2025}
------, ``Strong converse for distributed source encryption under standard mutual information,'' in \emph{2025 IEEE Information Theory Workshop (ITW)}, 2025, pp. 632--637.

\bibitem{Han98InfSpec}
T.~S. Han, \emph{Information-Spectrum Methods in Information Theory}.\hskip 1em plus 0.5em minus 0.4em\relax Springer, 2003.

\bibitem{Iwamoto}
M.~Iwamoto and K.~Ohta, ``Security notions for information theoretically secure encryptions,'' in \emph{2011 IEEE International Symposium on Information Theory Proceedings}, 2011, pp. 1777--1781.

\bibitem{HanKobayashi}
T.~S. Han and K.~Kobayashi, \emph{Mathematics of Information and Coding}, ser. Translation of Mathematical Monographs, S.~Kobayashi and M.~Takesaki, Eds.\hskip 1em plus 0.5em minus 0.4em\relax American Mathematical Society, 2002, vol. 203.

\bibitem{OohamaSec1}
\BIBentryALTinterwordspacing
B.~Santoso and Y.~Oohama, ``Information theoretic security for {S}hannon cipher system under side-channel attacks,'' \emph{Entropy}, vol.~21, no.~5, 2019. [Online]. Available: \url{https://www.mdpi.com/1099-4300/21/5/469}
\BIBentrySTDinterwordspacing

\end{thebibliography}


\appendix

\subsection{Proof of Property \ref{per:Dn}}\label{prf:Dn}
%

Under $\bm{x}, \bm{x}^{\prime} \in \Dn$ and 
$\bm{x} \neq \bm{x}^{\prime}$, we assume that
\begin{equation}
  \Phiknx = \Phi_{\bm{k}}^{(n)}(\bm{x}^{\prime}).
  \label{eqn:Assumm}
\end{equation}
Then, we have the following:
\begin{align}
  &\bm{x} \MEq{a} \psin \circ \phinx \MEq{b} \Psi_{\bm{k}}^{(n)} \circ \Phiknx\notag                                             \\
         &\MEq{c} \Psi_{\bm{k}}^{(n)} \circ \Phi_{\bm{k}}^{(n)}(\bm{x}^{\prime}) \MEq{d} \psin \circ\phin(\bm{x}^{\prime}) \MEq{e} \bm{x}^{\prime}.
  \label{eqn:SdCCvv}
\end{align}
Step (a) and (e) follows from definition of $\Dn$. 
Step (c) follows from (\ref{eqn:Assumm}).
Step (b) and (d) follow from the relationship between $(\phin,\psin)$ and $(\Phi_{\bm{k}}^{(n)},\Psi_{\bm{k}}^{(n)})$.
Equation~(\ref{eqn:SdCCvv}) contradicts 
$\bm{x} \neq \bm{x}^{\prime}$. Hence we must have $\Phiknx \neq \Phi_{\bm{k}}^{(n)}(\bm{x}^{\prime})$.
\hfill\IEEEQED

\subsection{Proof of Lemma \ref{lem:BirkoffvN}}\label{prf:Birkholf}

For $\bm{x} \in \mathcal{X}^n$, we define 
$\mathcal{A}_{\bm{x}}(c) \coloneqq \left\{\bm{k}:
 \Phin_{\bm{x}}({\bm{k}}) = c \right\}.
$
Then we have following:
\begin{align*} \label{eqn:ddrtrqqqq}
 &p_{C^{(n)}|\bm{X}}(c|\bm{x}) 
 = \Pr\left\{\PhiKn(\bm{x}) = c 
  \Bigl|\bm{X} = \bm{x} \right\}
  \\
 &
 \MEq{a} \Pr\left\{\bm{K} 
 \in \mathcal{A}_{\bm{x}}(c)\Bigl|
 \bm{X}= \bm{x}  \right\} 
\MEq{b} \Pr\left\{\bm{K}
 \in \mathcal{A}_{\bm{x}}(c)\right\}.
  \stepcounter{equation}\tag{\theequation}
\end{align*}
Step (a) follows from $\bm{K} \in \mathcal{A}_{\bm{x}}(c) \Leftrightarrow \Phi_{\bm{K}}(\bm{x}) = c$.
Step (b) follows from $\bm{K} \perp \bm{X}$.
On the other hand, by Property~\ref{per:Dn}, 
\begin{align} \label{eqn:Serrq}
   &\mathcal{A}_x(c) \cap \mathcal{A}_{{\bm{x}}^{\prime}} (c)=\emptyset ,~\mbox{ for }\bm{x}  \neq {\bm{x}}^{\prime} \in \Dn.
\end{align}
From the above, we have the following chain of equalities.
\begin{align*}
  &\sum_{\substack{\bm{x} \in \Dn}} p_{C^{(n)}|\bm{X}}(c|\bm{x}) \MEq{a} \sum_{\substack{\bm{x} \in \Dn}} \Pr \left\{\bm{K} \in \mathcal{A}_{\bm{x}}(c) \right\}\\
  &\MEq{b} \Pr \left\{\bm{K} \in \bigcup_{\substack{\bm{x} \in \Dn }} \mathcal{A}_{\bm{x}}(c) \right\} \leq 1
\end{align*}
Steps (a) and (b), respectively, are from (\ref{eqn:ddrtrqqqq}) and 
(\ref{eqn:Serrq}).
\hfill\IEEEQED

\newcommand{\PrfLemUpdEd}{
\subsection{Proof of Lemma \ref{lem:upbED}\
}\label{prf:upbED}

In this appendix, we prove Lemma \ref{lem:upbED}. 
\newcommand{\commentoutaB}{
  }{  
  Randomly constructed 
  affine encoder $\varphi^{(n)}$ have two properties 
  shown in the following lemma.
\begin{lemma}\label{lem:good_set}
  $\quad$
  \begin{itemize} 
  \item[\textnormal{a)}]  For any $\bm{s} \in \mathcal{X}^n$ 
  and for any $\widetilde{s}^m \in \mathcal{X}^m$, we have
    \begin{align}
     &\Pr [\varphi^{(n)}(\bm{s}) = \widetilde{s}^m] 
     = \Pr [\bm{s} A \oplus b^m = \widetilde{s}^m] 
      = |\mathcal{X}|^{-m}.
          \end{align}  
  \item[\textnormal{b)}] For any $\bm{s}, \bm{t}$ 
  with $\bm{s} \neq \bm{t}$ and 
  for any $\widetilde{s}^m \in \mathcal{X}^m$, we have
  \begin{align}
      &\Pr [\varphi^{(n)}(\bm{s})= \varphi^{(n)}(\bm{t}) 
       = \widetilde{s}^m]\notag \\
      &= \Pr [\bm{s} A \oplus b^m = \bm{t} A \oplus b^m 
       = \widetilde{s}^m]=|\mathcal{X}|^{-2m}.
          \end{align}
  \end{itemize}
\end{lemma}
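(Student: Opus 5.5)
The statement to prove is Lemma~\ref{lem:good_set}: two properties of the randomly constructed affine encoder $\varphi^{(n)}(\bm{s})=\bm{s}A\oplus b^m$. Here the entries of $A$ (an $n\times m$ matrix) and of $b^m$ are drawn i.i.d.\ uniformly from the finite field $\mathcal{X}$. The plan is a direct counting argument over this uniform randomness. The only structural facts needed are that addition of a fixed vector is a bijection of $\mathcal{X}^m$, and that $A\mapsto \bm{u}A$ is a surjective linear map for any nonzero $\bm{u}$.

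For part a), fix $\bm{s}$ and $\widetilde{s}^m$. Condition on $A$. The event $\bm{s}A\oplus b^m=\widetilde{s}^m$ is equivalent to $b^m=\widetilde{s}^m\ominus \bm{s}A$, which is a single prescribed vector in $\mathcal{X}^m$. Since $b^m$ is uniform on $\mathcal{X}^m$ and independent of $A$, this conditional probability is $|\mathcal{X}|^{-m}$. Averaging over $A$ gives the claim.

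For part b), fix $\bm{s}\neq\bm{t}$ and $\widetilde{s}^m$. The joint event is equivalent to two conditions:
\begin{align*}
(\bm{s}\ominus\bm{t})A = 0^m \quad\text{and}\quad b^m=\widetilde{s}^m\ominus \bm{s}A .
\end{align*}
By independence of $A$ and $b^m$, and by the argument of part a), the probability factors as $\Pr[(\bm{s}\ominus\bm{t})A=0^m]\cdot|\mathcal{X}|^{-m}$.

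It remains to show $\Pr[\bm{u}A=0^m]=|\mathcal{X}|^{-m}$ for $\bm{u}:=\bm{s}\ominus\bm{t}\neq 0$. The columns $a_1,\dots,a_m$ of $A$ are independent and uniform on $\mathcal{X}^n$, and the $j$-th coordinate of $\bm{u}A$ is $\bm{u}\cdot a_j$. Choose an index $i$ with $u_i\neq0$. Conditional on the other entries of $a_j$, the map $a_{ij}\mapsto \bm{u}\cdot a_j$ is a bijection of $\mathcal{X}$, because we are over a field. Hence each $\bm{u}\cdot a_j$ is uniform on $\mathcal{X}$. The coordinates are independent across $j$, so $\bm{u}A$ is uniform on $\mathcal{X}^m$, and the product is $|\mathcal{X}|^{-2m}$.

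The only delicate point is this uniformity of $\bm{u}A$. It needs $\bm{u}\neq0$ and the field assumption, which the paper stated without loss of generality. Everything else is immediate from independence of $A$ and $b^m$.
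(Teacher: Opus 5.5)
Your proof is correct and follows essentially the same route the paper relies on: the paper gives no argument of its own and defers to Santoso and Oohama. There the lemma is stated alongside the linear-encoder fact $\Pr[(\bm{s}\ominus\bm{t})A=0^m]=|\mathcal{X}|^{-m}$; combining that fact with the uniformity of $b^m$, independent of $A$, is exactly your decomposition for parts a) and b), and your column-by-column check that $\bm{u}A$ is uniform for $\bm{u}\neq 0$ supplies the step the paper leaves to the citation.
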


Proof of Lemma \ref{lem:good_set} 
is given in Santoso and Oohama \cite{OohamaSec1}.
}

Fix $P\in {\cal P}_n({\cal X}).$ Let $\bm{K}_P$ be 
a random variable uniformly distributed over 
$T^n(P)$. 
In~the following arguments, we use the following 
simplified notations: 
\begin{align*}
 \bm{k} , \bm{K}_P \in T^n(P) \subseteq {\cal X}^n 
 &\Longrightarrow  k , K \in {\cal T} \subseteq {\cal K}, 
\\
 \tilde{k}^m ,\tilde{K}^m \in {\cal X}^m 
 &\Longrightarrow l, L \in {\cal L}, 
\\
\varphi^{(n)}: {\cal X}^n \to {\cal X}^m &\Longrightarrow 
\varphi: {\cal K} \to {\cal L}, 
\\
\varphi^{(n)}(\bm{k})=\bm{k}A +b^m 
& \Longrightarrow  \varphi(k) =k A+b, 
\\
 U^m \in{\cal X}^m & \Longrightarrow  U \in {\cal L},
\end{align*}
We define
$$
\chi_{\varphi(k),l}=
\left\{
\begin{array}{l}
 1,\mbox{ if }\varphi(k)=l, 
\\
0,\mbox{ if }\varphi(k) \neq l.  
\end{array}
\right.
$$
Then, the~distribution
of the random variable $L= L_{\varphi}$ 
is 
\begin{align*}
&p_{L}(l)
=\sum_{k \in {\cal T}} p_{K} (k)
\chi_{\varphi(k),l} \mbox{ for }l \in {\cal L}.
\end{align*}
Define 
\begin{align*}
& \Upsilon_{ \varphi(k),l}
\defeq \chi_{\varphi(k),l}
\log \hugebl |{\cal L}|\hugel
\sum_{k^{\prime} \in {\cal T}} p_{K}(k^{\prime})
\chi_{\varphi(k^{\prime}),l}
\huger \hugebr.
\end{align*}
Then the divergence  between 
$p_{L}$ and $p_{U}$ is given by
\begin{align}
&\left. \left. 
 D\left(p_{L}\right|\right|p_{U} \right)
=\sum_{k\in {\cal T}} 
 \sum_{l\in {\cal L}}p_{K} (k) 
\Upsilon_{\varphi(k),l}.
\end{align}
The quantity 
$\Upsilon_{\varphi(k),l} $ has the following form:
\begin{align}
& \Upsilon_{\varphi(k),l}=\chi_{\varphi(k),l}
\log \Biggl\{ |{\cal L}|\Biggl(p_{K}(k)\chi_{\varphi(k),l}
\notag\\
& 
\left. \left. 
 + \sum_{k^{\prime} \in \{k\}^{\rm c}} 
p_{K}(k^{\prime}) 
\chi_{\varphi(k^{\prime}),l}
\right)\right\}.
\label{eqn:AzzxW}
\end{align}
The above form is useful for computing 
${\bf E}[ \Upsilon_{\varphi(k),l}]$. 

{\it Proof of Lemma \ref{lem:upbED}:}
Taking the expectation of both  sides of 
(\ref{eqn:AzzxW})
with respect to the random choice of the entry 
of the matrix $A$ and the vector $b$ representing 
the affine encoder $\varphi$, 
we have
\begin{align}
&\left. \left. {\bf E}\left[ D \left(p_{L}\right|\right|
p_{U}                      \right)\right]
=\sum_{k \in {\cal T}}
 \sum_{l \in {\cal L}}p_{K}(k)
{\bf E}\left[\Upsilon_{\varphi(k),l}\right].
\label{eqn:Zdxxp}
\end{align}
To compute the expectation 
${\bf E}\left [\Upsilon_{\varphi(k),l}\right]$, 
we introduce an expectation operator useful for the computation.
Let ${\bf E}_{\varphi(k)=l_k}[\cdot]$ 
be an expectation operator based on the conditional probability measures
${\rm Pr}(\cdot|\varphi(k)=l_k)$.
Using this expectation operator, the~quantity 
${\bf E} \left[\Upsilon_{\varphi(k),l}\right]$ 
can be written as
\begin{align}
&{\bf E}\left[\Upsilon_{\varphi(k),l}\right]
=\sum_{ l_k \in {\cal L}}{\rm Pr} \left(\varphi(k)=l_k\right)
{\bf E}_{\varphi(k)=l_k}\left[\Upsilon_{l_k,l}\right].
\label{eqn:SddXPP}
\end{align}
Note that 
\beq
\Upsilon_{l_k,l}
=\left\{
\ba{l}
1, \mbox{ if } l_k=l,\\
0, \mbox{ otherwise.}
\ea
\right.
\label{eqn:SdXl}
\eeq
From (\ref{eqn:SddXPP}) and (\ref{eqn:SdXl}), 
we have
\begin{align}
&{\bf E} \left[\Upsilon_{\varphi(k),l}\right]
={\rm Pr} \left(\varphi(k)=l\right)
{\bf E}_{\varphi(k)=l}
\left[\Upsilon_{l,l}\right]
\notag\\
&
=\frac{1}{|{\cal L}|}{\bf E}_{\varphi(k)=l}
\left[\Upsilon_{l,l}\right].
\label{eqn:ASdff}
\end{align}
Using (\ref{eqn:AzzxW}), the~expectation 
${\bf E}_{\varphi(k)=l}\left[\Upsilon_{l,l}\right]$ 
can be written as
\begin{align}
& {\bf E}_{\varphi(k)=l} \left[ \Upsilon_{l,l} \right] 
={\bf E}_{\varphi(k)=l} 
  \Biggl[
  \log \Biggl\{ |{\cal L}|\Biggl(p_{K}(k)
  \notag\\
& 
\left.\left.\left. 
\qquad 
+ 
\sum_{k^{\prime} \in \{k\}^{\rm c}} 
p_{K}(k^{\prime}) 
\chi_{\varphi(k^{\prime}),l}
\right)\right\}\right].
\label{eqn:AzzxWcc}
\end{align}
Applying Jensen's inequality to the right member of (\ref{eqn:AzzxWcc}), 
we obtain the following upper bound of 
${\bf E}_{\varphi(k)=l} \left[\Upsilon_{l,l}\right]$:
\begin{align}
& {\bf E}_{\varphi(k)=l} \left[\Upsilon_{l,l}\right]
\leq 
\log \Biggl\{ |{\cal L}|\Biggl(p_{K}(k)
\notag\\
& 
\left.\left.
\qquad 
+ \sum_{k^{\prime} \in \{k\}^{\rm c}} 
        p_{K}(k^{\prime})
 {\bf  E}_{\varphi(k)=l}\left[\chi_{\varphi(k^{\prime}),l}\right] 
\right)\right\}
\notag\\
&\MEq{a} 
\log \Biggl\{|{\cal L}|\Biggl(
        p_{K}(k) + \sum_{k^{\prime} \in \{k\}^{\rm c}} 
        p_{K}(k^{\prime})\frac{1}{|{\cal L}|}
\Biggr)\Biggr\}
\notag\\
&= \log \left\{1+ (|{\cal L}|-1)p_{K}(k)\right\}.
\label{eqn:AzzxWfd}
\end{align}

Step (a) follows from that by Lemma \ref{lem:good_set} 
parts a) and b), 
\begin{align*}
{\bf E}_{ \varphi(k)=l}\left[\chi_{\varphi(k^{\prime}),l}\right] 
 &={\rm Pr} (\varphi(k^{\prime})
 =l|\varphi(k)=l)=\frac{1}{|{\cal L}|}.
\end{align*}
From (\ref{eqn:Zdxxp}), (\ref{eqn:ASdff}), 
and~(\ref{eqn:AzzxWfd}), we have 
the following:
\begin{align*}
& \left. \left. {\bf E}\left[ D \left(p_{L}\right|\right|
p_{U}                      \right)\right]
\leq \sum_{k \in {\cal T}}
 \sum_{l \in {\cal L}}p_{K}(k)
\frac{1}{|{\cal L}|} 
\\
& \quad\times \log \left\{1+ (|{\cal L}|-1)p_{K}(k)\right\}
\MEq{a}
\log 
\left\{1+ (|{\cal L}|-1)|{\cal T}|^{-1}\right\}.
\end{align*}
Step (a) follows from that $K\in {\cal T}$ is 
the uniformly distributed random variable 
over the set ${\cal T}$.  
\hfill\IEEEQED
}

\PrfLemUpdEd

\subsection{Proof of Lemma~\ref{lem:Theta_upb}}\label{prf:lemThetaupb}

%
For each $P\in {\cal P}_n({\cal X})$, 
we have 
\begin{align}
& \Theta_n(P) \MLeq{a} 
   \log \left\{1+(2^{nR_n}-1) 
 |T^n(P)|^{-1} \right\}
\label{eqn:ThetaBdOne}\\
& \leq nR_n \leq \left(R_n+\frac{1}{2}\right) (n+1)^{|{\cal X}|}.
\label{eqn:ThetaBdTwo}
\end{align}
Step (a) follows from the definition of 
$\Theta_n(P)$ and 
$|{\cal X}|^m \leq 2^{nR_n}$. 
We also have the following chain of inequalities: 
\begin{align} 
& \Theta_n(P) \MLeq{a} 
  (2^{nR_n}-1) |T^n(P)|^{-1}\leq 2^{nR_n}|T^n(P)|^{-1} 
\notag\\  
&\MLeq{b}(n+1)^{|{\cal X}|-1}2^{-n[H(P)-R_n]}
\notag\\
&\MLeq{c}
\left(R_n+\frac{1}{2}\right) (n+1)^{|{\cal X}|}
2^{-n[H(P)-R_n]}.
\label{eqn:ThetaBdThr}
\end{align}
Step (a) follows from (\ref{eqn:ThetaBdOne}) and  
$\log(1+a) \leq a$. 
Step (b) follows from Lemma \ref{lem:type} part b).
Step (c) follows from 
$$
 (n+1)^{|{\cal X}|-1}\leq 
 (nR_n+1) (n+1)^{|{\cal X}|-1}
 \leq \left(R_n+\frac{1}{2}\right)(n+1)^{|{\cal X}|}.
$$
From (\ref{eqn:ThetaBdTwo}) and (\ref{eqn:ThetaBdThr}), 
we have the bound 
in Lemma \ref{lem:Theta_upb}.
\hfill\IEEEQED

\end{document}